\newtheorem{theorem}{Theorem}
\newtheorem{cor}{Corollary}
\newtheorem{lem}{Lemma}
\newtheorem{defn}{Definition}
\newtheorem{rem}{Remark}
\newtheorem{example}{Example}
\newcommand{\udl}{\underline}
\newcommand{\Al}{\alpha}
\newcommand{\eps}{\epsilon}
\newcommand{\ph}{P_{H}\left(b\right)}
\newcommand{\Gbin}{G_{bin}\pa{b}}
\newcommand{\Gon}{G_{remote}\pa{B}}
\newcommand{\Gof}{G_{insider}\pa{B}}
\newcommand{\qp}{\theta}
\newcommand{\pa}[1]{\left( #1 \right)}
\newcommand{\pac}[1]{\left\{ {#1} \right\}}
\newcommand{\paq}[1]{\left[ {#1} \right]}
\newcommand{\beq}[2]{\begin{equation}\label{#1} #2 \end{equation}}
\newcommand{\bal}[2]{{\setlength\arraycolsep{2pt}\begin{eqnarray}\label{#1} #2 \end{eqnarray}}}
\newcommand{\beqn}[2]{\begin{equation*}\label{#1} {#2} \end{equation*}}
\newcommand{\iid}{i.i.d.}
\newcommand{\nn}{\nonumber}
\newcommand{\barr}[2]{\left\{\begin{array}{#1}#2 \end{array}\right.}
\newcommand{\bitm}[1]{\begin{itemize} #1 \end{itemize}}
\newcommand{\hdps}{H\pa{s}+D\pa{s||p}}
\begin{document}
%
\title{Password Cracking: The Effect of Hash Function Bias on the Average Guesswork}
\author{Yair Yona, and Suhas Diggavi,~\IEEEmembership{Fellow,~IEEE}
\thanks{The material in
this paper was presented in part at the International Symposium on Information Theory (ISIT), 2016.}
\thanks{The work of the authors was partially supported by the NSF awards 
1740047 and by the UC-NL grant LFR-18-548554.}}
\maketitle

\begin{abstract}
Modern authentication systems store hashed values of passwords of users using cryptographic hash functions. Therefore, to crack a password an attacker
needs to guess a hash function input that is mapped to the hashed value, as opposed to the password itself. We call a hash function that maps the same number of inputs to each bin, as \textbf{unbiased}. However, cryptographic hash functions in use 
have not been proven to be unbiased (i.e., they may have an unequal number of inputs mapped to different bins).  A cryptographic hash function has the property that it is computationally difficult to find an input mapped
to a bin. In this work we introduce a structured
notion of biased hash functions for which we analyze the average guesswork under certain types of brute force attacks.

This work shows that in the presence of bias, the level of security depends on the set of bins to which passwords are hashed as well as the statistical profile of a hash function. We examine the average guesswork conditioned on the set of hashed values of passwords, and model the statistical profile through the empirical 
distribution of the number of inputs that are mapped to a bin. In particular, we focus on a class of statistical profiles (capturing the bias), which we call type-class statistical profiles,  that has an empirical distribution related to the probability of the type classes defined in the method of types.  For such profiles, we show that the average guesswork is related to basic measures in information theory such as entropy and divergence.
We use this to show that the effect of bias on the conditional average guesswork is limited compared to other system parameters such as the number of valid users who store their hashed passwords in the system. 

Finally, we show that bias can be used to increase the average guesswork, when using a backdoor mechanism that allows a system to efficiently modify the mappings of the hash function.
\end{abstract}

\section{Introduction}
A password is a means to authenticate users by allowing access only to an authorized user who knows it. Modern systems do not store the password of a valid user in plain text but rather hash it (a many
to one mapping) and store its hashed value, which is paired with a user name. This in turn protects the passwords of valid users when the system is compromised \cite{MarechalAdvancesinPasswordCracking},\cite{BurrNISTElectronicAuthenticationGuideline}. In order to gain access to a system a user provides the system with his user name and his password; the system then calculates its hash value and compares it to the one it has on file.

The most prevalent method of protecting passwords  is using cryptographic hash functions whose output can be computed easily but cannot be inverted easily \cite{ManberSimpleSchemeHardertoCracl}, \cite{PasswordCrackingUsingProbContFreeGrammars}, \cite{KelleyGuessAgainAndAgainAndAgain}, \cite{MarechalAdvancesinPasswordCracking}, \cite{BurrNISTElectronicAuthenticationGuideline}. Essentially, cryptographic hash functions are the ``workhorses of modern cryptography''
\cite{SchneierMd5Sha1Crypto} and, among other things, enable systems to protect passwords against attackers that break into their servers. In some cases such as message authentication codes (MAC) \cite{CanettiHMAC1996} it is meaningful to consider keyed hash functions. Keyed hash functions can be viewed as a set of hash functions, where the actual function which is used is determined by the value assigned to the key, which is a secret \cite{Wegman_CarterNewHashFunc1981}. A practical keyed hash function that enables the use of off-the-shelf computationally secure hash functions, is a keyed-hash message authentication code (HMAC) \cite{CanettiHMAC1996}.

Modern systems store for every user a user name along with a hash value of its password; in this paper the set of hash values of the passwords of valid users is referred to as \textbf{the set of occupied bins}, which is denoted by $B$. A common assumption that is usually made is that the hash functions are \textbf{unbiased}, that is, every possible output of the hash function has the same number of inputs mapped to it. However, this assumption has not been proved for cryptographic hash functions used in practice, and they can be biased. In this work we define \textbf{bias} based on the \textbf{statistical profile of a hash function}, which is basically the empirical distribution induced by the number of inputs that are mapped to every bin; in this sense bias means that the number of inputs that are mapped to every bin changes across bins. In terms of the set of occupied bins it means that some hash values of passwords are more vulnerable than others (unlike the unbiased case where every element in the set of occupied bins is equally protected). Indeed, this work addresses the question of how bias affects the level of security in terms of password cracking.

A basic type of attacks for cracking passwords are brute force attacks, where an attacker who does not know the structure of the hash function guesses inputs one by one until it finds \textbf{an input} that is mapped to an element in the set of occupied bins. We consider two brute force attacks: A remote attack where an attacker tries to break into an account of a user by guessing an input that has the same hash value as the hash value of the password of that user; and an insider attack, where an attacker has an insider access to all user-names and hash values of their passwords.  This enables it to compare the hashed values of his guesses against every element in the set of occupied bins, that is, it should guess an input whose hash value is equal to any of the elements in the set of occupied bins, to satisfy its goal of finding one valid user-name password pair. Note that in both cases, the attacker does not know which inputs (passwords) map to which bin, and can only check by applying the (unknown) hash function.

For brute force attacks, a meaningful measure of the level of security is guesswork. Guesswork is the number of attempts required to successfully guess a secret. The guesswork is a random variable whose probability mass function depends on the statistical profile according to which a secret is chosen (e.g., when there is bias in terms of the way the secret is drawn at random) along with the strategy used for guessing. It was first introduced and analyzed by Massey \cite{Messy1994} who lower bounded the average guesswork by the Shannon entropy. Arikan showed that the rate at which any moment of the guesswork increases is actually equal to the R\'{e}nyi entropy \cite{Arikan_Ineq_Guessing}, and is larger than the Shannon entropy unless the secret is uniformly distributed in which case both are equal. Guesswork has been analyzed in many other scenarios such as guessing up to a certain level of distortion \cite{MerhavArikanDist}, guessing under source uncertainty with and without side information \cite{Sunderesan07}, \cite{SunderesanGuessSourceUncertaintySideInfo}, using guesswork to lower bound the complexity of sequential decoding \cite{Arikan_Ineq_Guessing}, guesswork for Markov chains \cite{MaloneGuessworkandEntropy}, guesswork under a certain probability of failure of the attack \cite{YonaForensicsPUFS}, guesswork for the Shannon cipher system \cite{MerhavArikanShannonCipher} as well as other applications in security \cite{ShayevitzBurinReducingGuesswork}, \cite{Duffy15}, \cite{SassonVerduGuesswork}, \cite{ShayevitzGuessworkUnreliableOracle}, \cite{ShayevitzGuessingBooleanFUnctions}, \cite{YonaDiggaviISITGuesswork}, \cite{BeiramiMedardGeometricPerspectiveGuesswork}, \cite{MerhavCohenUniversalRandomizedGuessing}. As far as we know there has been no systematic study of guesswork for hash functions.

A hash function is non-injective. Therefore an attacker only has to guess \textbf{a password} that has the same hash value as the original password, and not necessarily \textbf{the password} of the user. In the presence of bias, and under the assumption that the attacker \textbf{does not know} the structure of the hash function, the problem of guesswork for cracking passwords deviates away from the classic definition of guesswork where an attacker is trying to correctly guess the exact secret. The reason for this is that in the problem of password cracking the attacker cannot guess bins (outputs of a hash function) directly. In fact, the attacker can only guess inputs to the hash function, since  it does not a-priori know to what bins they are mapped. Note that for crytographic hash functions, finding the inputs corresponding to the particular bin is computationally difficult.

In order to capture the effect of bias on hash functions under brute force attacks we define a new concept, the conditional average guesswork conditioned on \textbf{the set of occupied bins}; in the presence of bias the average guesswork may change as a function of the hash values to which passwords are mapped. Since the conditional average guesswork may change as a function of the set of occupied bins, we define the maximum and the minimum conditional average guesswork (i.e., find the set of occupied bins that maximizes/minimizes the conditional average guesswork). Furthermore, we define the most likely conditional average guesswork, which is basically the conditional average guesswork that is most likely to occur, assuming that  passwords and as a result the set of occupied bins are drawn at random. Furthermore, we also consider the unconditional average guesswork, which is basically the average over the conditional average guesswork.

We focus on the type-class statistical profile, where the empirical distribution is the same as the probability of a given type in the method of types, when the dominant type-class is $p$\footnote{Note that this is a special class of profiles, as the statistical profile could be arbitrary, but this special class is motivated when considering different groups of bins in a hash function that have different bias, as also explained in Section \ref{sec:GeneralExpressions}.}.
The first main result of this paper studies the problem for such a class of statistical profiles using the tools from the method of types \cite{CoverBook}. For such a profile, when the most likely (dominant) type-class has distribution $p$, with a subset of $2^{H\pa{p}\cdot m}$ bins out of the entire set of bins (in total $2^{m}$ bins when every bin is represented by a binary string of length $m$), having this distribution. For such hash functions we show that the most likely conditional average guesswork under an insider attack increases like $2^{\pa{H\pa{p}-R}\cdot m}$, where $H\pa{p}$ is the binary Shannon entropy, $2^{R\cdot m}$ is the number of users, $0\le R\le H\pa{p}$, and $p$ parameterizes the bias. Furthermore, when $R<H\pa{p}/2$ the probability for the most likely conditional average guesswork goes to 1 with $m$.  

The expression above reveals an interesting connection between the effect of bias on the level of security and the effect of increasing the number of users in a system. Basically, increasing the number of users has a far greater effect in terms of average guesswork. In addition, the rate at which the conditional average guesswork increases depends on the Shannon entropy rather than R\'{e}nyi entropy \cite{Arikan_Ineq_Guessing}, which is the dominated term in classic results on guesswork. We also present a concentration result showing that the most likely conditional average guesswork is also the most likely actual number of guesses, and also that the event where the number of guesses reaches this average is not an atypical event (on the other hand, in classic results on guesswork the event where the number of guesses hits the average guesswork is an atypical event). The mechanism that leads to concentration and why it is different from classical guesswork, is described in Subsection \ref{subsec:LargeDeviationVsConcentration}. Essentially, it results from the fact that there is a super exponential number of possible hash functions, whereas the maximum number of guesses increases exponentially.

Furthermore, the maximum conditional average guesswork as a function of the type-class of the statistical profile increases like $2^{m\cdot D\pa{s||p}}$, where the number of users is $2^{R\cdot m}=2^{H\pa{s}\cdot m}$, $1/2\le s\le 1$,  and $D\pa{\cdot || \cdot}$ is the Kullback-Leibler divergence, used above for $Bernoulli\pa{s}$ and $Bernoulli\pa{p}$ distributions. Note that this function is unbounded as a function of $p$, that is, as $p$ decreases and bias in the hash function increases, the exponent of the maximum conditional average guesswork increases \textbf{unboundedly} (as a function of the number of bins $2^{m}$, not the number of inputs which is always greater). The reason for this is that the non-uniform distribution makes it harder to guess inputs that are mapped to the less likely bins.
Other than the results presented above this paper contains more results on the average guesswork under remote and insider attacks. We present however only some of the results in the introduction for clarity. 

Another question that this work addresses is whether or not bias can be used to increase the average guesswork. It is shown in this paper that indeed bias can increase the average guesswork as long as there is a backdoor mechanism that the system can use that enables it to map passwords of users to the least likely bins. An efficient backdoor mechanism is presented and it is shown that under a remote attack the average guesswork increases like $2^{\pa{2H\pa{p}+D\pa{1-p||p}-R}\cdot m}$. Note that again, the exponent of the average guesswork increases unboundedly as $p$ decreases (i.e., the bias increases).

Another set of results in this paper analyze the average guesswork for strongly universal hash functions when the key according to which hash functions are chosen is biased. For the case where the binary elements of the key are drawn i.i.d. Bernoulli$\pa{p}$ it is shown that when a backdoor mechanism is in place (similar to the one discussed above) and when considering a target average guesswork, the length of a biased key required to achieve this target is shorter than the length of a key that is drawn uniformly (i.e., i.i.d. Bernoulli$\pa{1/2}$).

This paper presents new definitions and results on the average guesswork of hash functions. For better understanding of the structure of the paper it is recommended to go to Section \ref{sec:RestiltsOutline}, which outlines the main technical results and points to where they can be found. In addition, to better understand the mechanism that leads to concentration and how it is different from classical guesswork, we recommend to go to Subsection \ref{subsec:LargeDeviationVsConcentration}.

The paper is organized as follows. We begin in Section \ref{sec:BasicDefinitions} with background and basic definitions. The attack model is presented in Section \ref{sec:ProblemSetting}, followed by general expressions for the average guesswork of a bin in Section \ref{sec:GeneralExpressions}. We then present in Section \ref{sec:AverageGuessworkPhHashFunctionsNoMod} bounds on the conditional average guesswork under remote and insider attacks, and the average guesswork along with concentration result in Section \ref{sec:NonConditionalAverageGuessworkPhHashFunctionsNoMod}. The most likely average guesswork is analyzed in Section \ref{sec:MostLikelyAverageGuessworkPhHashFunctionsNoMod}, followed by overview of the results in Section \ref{sec:RestiltsOutline}.  We present in Section \ref{sec:ModifiedHashFunctionAverageGueddwork} the case where hash functions can be modified, and analysis of the average guesswork of strongly universal hash functions in Section \ref{sec:AverageGuessworkCrackingPasswords}.  Finally, Section \ref{sec:Discussion} shows that a shorter biased key can achieve the same average guesswork as a longer uniform key.

\section{Background and Basic Definitions}\label{sec:BasicDefinitions}
In this section we present basic definitions and results in the literature that we use throughout the paper.
\subsection{Guesswork}\label{subsec:BackgroundClassicGuesswork}
Consider the following game: Bob draws a sample $x$ from a random
variable $X$, and an attacker Alice who does not know $x$ but knows
the probability mass function $P_{X}\pa{\cdot}$, tries to guess it. An
oracle tells Alice whether her guess is right or wrong.

The number of guesses it takes Alice to guess $x$ successfully is a random variable $G\pa{X}$ (which is termed guesswork) that takes only positive integer values. The optimal strategy of guessing $X$ that minimizes all non negative moments of $G\pa{X}$
\beq{}
{
E\pa{G\pa{X}^{\rho}}=\sum_{x\in X} G\pa{x}^{\rho}\cdot P_{X}\pa{x}\quad\rho\ge 0
}
is guessing elements in $X$ based
on their probabilities in descending order \cite{Messy1994}, \cite{Arikan_Ineq_Guessing}, such that $G\pa{x}<G\pa{x^{\prime}}$ implies $p_{X}\pa{x}>p_{X}\pa{x^{\prime}}$, i.e. a dictionary attack \cite{NISTPassword}; where $G\pa{x}$ is the number of guesses after which the attacker guesses $x$. In this paper we analyze the optimal guesswork and therefore with slight abuse of notations we define $G\pa{X}$ to be the \emph{optimal guesswork}.

It has been shown that $E\pa{G\pa{X}^{\rho}}$ is dictated by the R\'{e}nyi entropy \cite{Arikan_Ineq_Guessing}
\beq{eq:RenyiEntropyFunctionofRho}{
H_{\frac{1}{1+\rho}}\pa{P_{X}}=\frac{1}{\rho}\log_{2}\pa{\pa{\sum_{x}P_{X}\pa{x}^{\frac{1}{1+\rho}}}^{1+\rho}}.}
For example, when
drawing a random vector $\underline{X}$ of length $k$, which is
independent and identically distributed (i.i.d.) with distribution
$P=[p_1,\ldots,p_M]$, the exponential growth rate of the average guesswork
scales according to the R\'{e}nyi entropy $H_{\alpha}\pa{X}$ with parameter $\alpha=1/2$
\cite{Arikan_Ineq_Guessing}:
\beq{eq:GuessworkExpGrowthRate}
{
\lim_{k\to\infty}\frac{1}{k}\log_{2}\pa{E\pa{G\pa{X}}}=
H_{1/2}\pa{P_{X}}=2\cdot\log_{2}\pa{\sum_{x}\pa{P_{X}\pa{x}}^{1/2}} } where
$H_{\frac{1}{2}}\pa{P}\ge H\pa{P}=-\sum_{x\in X}p\pa{x}\log\pa{p\pa{x}}$ which is the Shannon entropy, with equality only for the uniform
probability mass function. Furthermore, for any $\rho\ge 0$
\beq{eq:GuessworkExpGrowthRateAnyMoment}
{
\lim_{k\to\infty}\frac{1}{k}\log_{2}\pa{E\pa{G\pa{X}^{\rho}}}=
\rho\cdot H_{\frac{1}{1+\rho}}\pa{P_{X}}.
}

The definition of guesswork was also extended to the case where the attacker has a side information $Y$ available \cite{Arikan_Ineq_Guessing}. In this case the average guesswork for $Y=y$ is defined as $G\pa{X|Y=y}$, and the $\rho$th moment of $G\pa{X|Y}$ is
\beq{eq:ConditionalGuesswork}{
E\pa{G\pa{X|Y}^{\rho}}=\sum_{y}E\pa{G\pa{X|Y=y}^{\rho}}\cdot P_{Y}\pa{y}.
}
Arikan \cite{Arikan_Ineq_Guessing} has bounded the $\rho$th moment of the optimal guesswork, $G\pa{X|Y}$, by
\bal{eq:GeneralExp}{
\pa{1+\ln\pa{M}}^{-\rho}\sum_{y}\pa{\sum_{x}P_{X,Y}\pa{x,y}^{\frac{1}{1+\rho}}}^{1+\rho}
\le \nn\\
E\pa{G\pa{X|Y}^{\rho}}\le \sum_{y}\pa{\sum_{x}P_{X,Y}\pa{x,y}^{\frac{1}{1+\rho}}}^{1+\rho}
}
where $M=|X|$ is the cardinality of $X$. Furthermore, in \cite{Arikan_Ineq_Guessing} it has been shown that when $X$ and $Y$ are identically and independently distributed (\iid), the exponential growth rate of the optimal guesswork is
\beq{eq:asymptiticconditionalguess}{
\lim_{k\to\infty}\frac{1}{k}\log_{2}\pa{E\pa{G^{\ast}\pa{X|Y}^{\rho}}}=\rho\cdot H_{\frac{1}{1+\rho}}\pa{P_{X,Y}}
}
where $m$ is the size of $X$ and $Y$, and
\beq{eq:TheRenyiEntropy}{
H_{\frac{1}{1+\rho}}\pa{P_{X,Y}}=\frac{1}{\rho}\log_{2}\pa{\sum_{y}\pa{\sum_{x}P_{X,Y}\pa{x,y}^{\frac{1}{1+\rho}}}^{1+\rho}}} is R\'{e}nyi's conditional entropy of order $\frac{1}{1+\rho}$ \cite{Arikan_Ineq_Guessing}.

\subsection{Method of Types}
In this paper we analyze the average guesswork of hash functions based on the method of types \cite{CoverBook}. Therefore, we now present 2 basic results in method of types that are used throughout this paper. 

Assume a binary vector $\udl{X}$ of size $m$, the realization $\udl{x}$ is of type $q$ in case
$N\pa{1|\udl{x}}/m=q$
where $N\pa{1|\udl{x}}$ is the number of occurrences of the number $1$ in the binary vector $\udl{x}$.

The first result is related to the probability of drawing a vector of type $q$ when the underlying probability is i.i.d. Bernoulli$\pa{p}$.
\begin{lem}[\cite{CoverBook} Theorem 11.1.2]\label{th:methodoftypesprobth}
The probability $Q\pa{q}=P\pa{N\pa{1|\udl{x}}/m=q}$ which is the probability that $\udl{x}$ is of type $q$, is equal to
\beq{}
{
Q\pa{q}=2^{-m\pa{H\pa{q}+D\pa{q||p}}}
}
where $D\pa{q||p}=\sum_{i}q_{i}\log\pa{q_{i}/p_{i}}$ is the Kullback-Leibler divergence \cite{CoverBook}.
\end{lem}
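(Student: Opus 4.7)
The plan is a one-line computation from the i.i.d.\ Bernoulli$\pa{p}$ product structure. First I would observe that for any binary sequence $\udl{x}$ of length $m$ with exactly $mq$ ones, the product measure assigns $P\pa{\udl{x}} = p^{mq}\pa{1-p}^{m\pa{1-q}}$, a quantity that depends on $\udl{x}$ only through its type $q$.

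Next I would establish the key algebraic identity. Expanding the definitions
\[ H\pa{q} = -q\log q -\pa{1-q}\log\pa{1-q}, \qquad D\pa{q||p} = q\log\frac{q}{p} + \pa{1-q}\log\frac{1-q}{1-p}, \]
and summing, the $\pm q\log q$ and $\pm\pa{1-q}\log\pa{1-q}$ contributions cancel, leaving $H\pa{q}+D\pa{q||p} = -q\log p - \pa{1-q}\log\pa{1-p}$. Multiplying by $-m$ and exponentiating base $2$ yields $p^{mq}\pa{1-p}^{m\pa{1-q}} = 2^{-m\pa{H\pa{q}+D\pa{q||p}}}$, which combined with the previous step is precisely the asserted identity. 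This matches Cover and Thomas Theorem 11.1.2: the probability under the Bernoulli$\pa{p}$ measure of any particular sequence of type $q$ is $2^{-m\pa{H\pa{q}+D\pa{q||p}}}$.

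If one reads $Q\pa{q}$ instead as the total mass of the type class $\pac{\udl{x} : N\pa{1|\udl{x}}/m = q}$, one multiplies the per-sequence probability by the combinatorial multiplicity $\binom{m}{mq}$; Stirling's formula yields $\binom{m}{mq}=2^{mH\pa{q}}$ up to a polynomial-in-$m$ correction, producing the familiar $Q\pa{q}\doteq 2^{-m D\pa{q||p}}$ on the exponent scale. In the asymptotic regime $m\gg 1$ used throughout this paper this polynomial factor is absorbed into sub-exponential error and does not affect the subsequent rate computations.

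There is no genuine obstacle here --- the entire argument is a one-line algebraic manipulation combined with the product structure of the i.i.d.\ source. The only delicate point is parsing exactly what $Q\pa{q}$ denotes: the closed form $2^{-m\pa{H\pa{q}+D\pa{q||p}}}$ is precisely the per-sequence probability from Cover and Thomas, and the binomial factor $\binom{m}{mq}$ must be reinserted whenever one needs the aggregate mass of the entire type class.
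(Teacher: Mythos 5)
Your proof is correct and is exactly the standard computation behind Cover--Thomas Theorem 11.1.2, which the paper cites without reproving: the product measure assigns every individual sequence of type $q$ the probability $p^{mq}\pa{1-p}^{m\pa{1-q}}$, and the identity $H\pa{q}+D\pa{q||p}=-q\log p-\pa{1-q}\log\pa{1-p}$ rewrites this as $2^{-m\pa{H\pa{q}+D\pa{q||p}}}$. Your caveat about the two readings of $Q\pa{q}$ is also well taken: the stated formula is the per-sequence probability, and the paper reinserts the type-class cardinality $2^{m\cdot H\pa{q}}$ separately through Lemma \ref{th:methodoftypessize} whenever it needs the aggregate mass of a type class.
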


The next lemma bounds the number of elements for each type.
\begin{lem}[\cite{CoverBook} Theorem 11.1.3]\label{th:methodoftypessize}
The number of elements for which $N\pa{1|\udl{x}}/m=q$ is bounded by the following terms
\beq{}
{
\frac{1}{\pa{m+1}^{2}}\cdot 2^{m\cdot H\pa{q}}\le |N\pa{1|\udl{x}}=q|\le 2^{m\cdot H\pa{q}}.
}
\end{lem}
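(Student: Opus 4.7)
The plan is to prove both bounds by analyzing the probability mass that the Bernoulli$\pa{q}$ distribution $Q_{q}$ assigns to the type class $T\pa{q}=\pac{\udl{x}:N\pa{1|\udl{x}}/m=q}$. The starting observation is that every $\udl{x}\in T\pa{q}$ has the same $Q_{q}$-probability, namely $q^{qm}\pa{1-q}^{\pa{1-q}m}=2^{-m\cdot H\pa{q}}$, so that
\beqn{}{
Q_{q}\pa{T\pa{q}}=|T\pa{q}|\cdot 2^{-m\cdot H\pa{q}}.
}
This identity turns counting statements about $|T\pa{q}|$ into probability statements about $Q_{q}\pa{T\pa{q}}$.

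For the upper bound I would simply invoke $Q_{q}\pa{T\pa{q}}\le 1$, which when combined with the identity above yields $|T\pa{q}|\le 2^{m\cdot H\pa{q}}$ immediately. This direction is essentially a normalization argument and requires no further combinatorial work.

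For the lower bound the crux is to establish that among all binary types of length $m$, the self-type $T\pa{q}$ maximizes the $Q_{q}$-probability. I would do this by writing the ratio
\beqn{}{
\frac{Q_{q}\pa{T\pa{q}}}{Q_{q}\pa{T\pa{q^{\prime}}}}=\frac{\binom{m}{qm}}{\binom{m}{q^{\prime}m}}\cdot q^{\pa{q-q^{\prime}}m}\pa{1-q}^{\pa{q^{\prime}-q}m}
}
and bounding the ratio of binomial coefficients via the elementary factorial inequality $a!/b!\ge b^{a-b}$ for integers $a\ge b$ (applied separately to the ``ones'' factorials and the ``zeros'' factorials, with the roles swapped when $q<q^{\prime}$). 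After the algebraic dust settles, the factors of $q$ and $\pa{1-q}$ cancel against the factorial bound and the product reduces to $m^{0}=1$, so the ratio is at least one. Since the number of distinct types of binary sequences of length $m$ is at most $\pa{m+1}^{2}$ and the type classes partition the sample space, the largest type-class probability is at least $1/\pa{m+1}^{2}$, so $Q_{q}\pa{T\pa{q}}\ge 1/\pa{m+1}^{2}$, which combined with the displayed identity yields $|T\pa{q}|\ge \frac{1}{\pa{m+1}^{2}}\cdot 2^{m\cdot H\pa{q}}$.

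The main (and only) non-routine step is the factorial manipulation showing that the self-type maximizes the $Q_{q}$-probability; the rest is normalization and counting. The argument is the method-of-types proof of Cover and Thomas specialized to the binary alphabet, and it pairs naturally with Lemma \ref{th:methodoftypesprobth} on the per-sequence probability already stated.
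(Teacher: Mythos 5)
Your proof is correct and is precisely the standard method-of-types argument from Cover and Thomas (Theorem 11.1.3, specialized to the binary alphabet), which is exactly the source the paper cites for this lemma without reproducing the proof. The normalization argument for the upper bound, the factorial inequality $a!/b!\ge b^{a-b}$ showing the self-type maximizes $Q_{q}$-probability, and the count of at most $\pa{m+1}^{2}$ types for the lower bound all check out.
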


\subsection{Hash Functions and their Statistical Profile}\label{subsec:SUSHF}
A hash function, $H_{F}\pa{\cdot}$, is a mapping from a larger domain $A$ to a smaller range $B$. In this work we assume that the input is of length $n$ bits whereas the output is $m$ bits long, where $n> m$. We term the binary string of length $m$  that represents the output of a hash function \emph{bin} which is denoted by $b\in\pac{1,\dots,2^{m}}$. 

For the analysis of the average guesswork of hash functions, we define the \emph{statistical profile} of a hash function which is a probability mass function defined over the set of bins, where each element is defined as the relative number of inputs that are mapped to a bin. For this, we use the following definition throughout the paper.
\begin{defn}[Statistical profile of a hash function]\label{def:FractionsofMappingsanyhash}
Consider a hash function $H_{F}\pa{\cdot}$ with an input of size $2^{n}$ and an output of size $2^{m}$, where $n> m$. The statistical profile of $H_{F}\pa{\cdot}$ is denoted by $P_{H}$, and defined as follows
\beq{}
{
\ph = \frac{1}{2^{n}}\sum_{i=1}^{2^{n}}\mathbbm{1}_{b}\pa{H_{F}\pa{i}}\quad \forall b\in\pac{1,\dots, 2^{m}}
}
where $\mathbbm{1}_{b}\pa{x}=\barr{cc}{1 &x=b\\ 0 &x\neq b}$. Thus, for every bin $P_{H}$ is the relative number of inputs that are mapped to it.
\end{defn}

\begin{defn}
A $P_{H}$\emph{-hash function} is a hash function $H_{F}\pa{\cdot}$ whose statistical profile is $P_{H}$.
\end{defn}

\begin{example}
When the relative number of inputs that are mapped to a bin is the same across all bins in $H_{F}\pa{\cdot}$, $P_{H}$ is distributed uniformly, that is, $\ph = 2^{-m}$ for every $b\in\pac{1,\dots,2^{m}}$.
\end{example}

\begin{rem}
Basically a $P_{H}$-hash function can be represented by a bipartite graph, where the degree of every input is one, whereas the degree of every bin  $b\in\pac{1,\dots,2^{m}}$, divided by $2^{n}$ is $\ph$. Figure \ref{fig:HashFunctionBipartiteRepresentation} illustrates this.
\end{rem}

\begin{figure}[h!]
\input{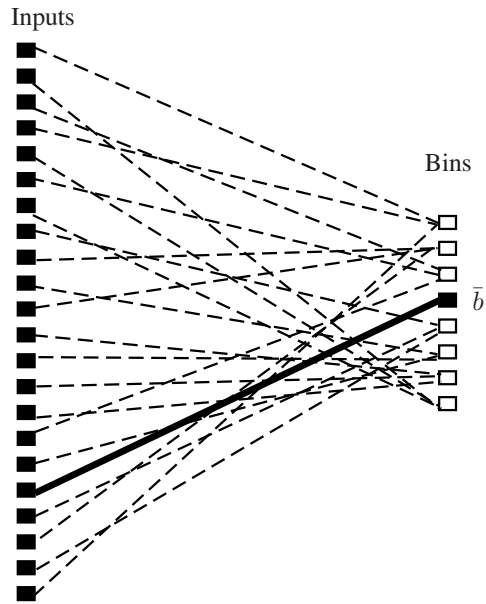}\caption{A bipartite graph representation of a hash function. In this case the statistical profile is $\pac{3/22,3/22,3/22,\emph{1/22},3/22,3/22,3/22}$.}\label{fig:HashFunctionBipartiteRepresentation}
\end{figure}

\begin{defn}[A keyed hash function]
A Keyed hash function is a set of hash functions, where the actual hash function being used is determined by the key value.
\end{defn}

Based on the definition above, a keyed hash function over a set of $K$ hash functions
$\pac{H_{F}^{\pa{1}}\pa{\cdot}, \dots,H_{F}^{\pa{K}}\pa{\cdot} }$ can be written as follows
\beq{}{
HK\pa{b,k} = H_{F}^{\pa{k}}\pa{b}\quad \forall b\in\pac{1,\dots,2^{m}},
}
where $k\in\pac{1,\dots,K}$, and $HK\pa{\cdot}$ is the keyed hash function.

\section{The Attack Model}\label{sec:ProblemSetting}
In this section we define the attack model based on how passwords are stored in systems. We present two types of attacks for this scenario: A remote attack and an insider attack.

The following definitions address the way passwords are stored in a system, and how access to a system is granted.
\begin{defn}[Hashed password storage method]
We define the method according to which the system stores passwords.
\bitm
{
\item There are $M$ users registered in the system.
\item In order to access the system a user provides his user name and password.
\item The system does not store the passwords, but rather stores the user names and the hashed values of the passwords.
\item The system hashes the password.
}
\end{defn}

\begin{defn}[Authentication scheme]\label{def:GivingUserAnAccess}
The following protocol grants a user access.
\bitm
{
\item The user sends its user name to the system.
\item The system finds the bin value which is coupled with this user name.
\item The user types a password; the system hashes the password; if the hashed value matches the bin from the previous bullet, then access is granted.
}
\end{defn}

Figure \ref{fig:Problem_Setting} presents the authentication scheme defined above.

\begin{defn}[The set of occupied bins]\label{def:TheSetofOccupieeBins}
The set of occupied bins $B$ is the set of bins to which the passwords of valid users are mapped, that is, these are the hash values of the passwords of valid users. The size of the set of occupied bins $|B|=2^{R\cdot m}$ is the number of bins in the set.  
\end{defn}

Figure \ref{fig:ExampleForOccupiedBins} illustrates what the set of occupied bins $B$ means.

\begin{figure}[!h]
\centering\scalebox{.3}{\input{Average_Set_of_Bins.TpX}}\caption{Overall there are $2^{m}$ bins out of which the passwords are mapped to a subset $B$, which is termed the occupied bins. Note that $UN$ represents user name, and $Ps_{i}$ represents the password of user $i$. }\label{fig:ExampleForOccupiedBins}
\end{figure}

Essentially, Definition \ref{def:TheSetofOccupieeBins} states that in case we have $K$ users whose passwords are $\pac{Ps_{1},\dots,Ps_{K}}$, then their passwords are mapped by a hash function $H_{F}\pa{\cdot}$ to the bins $\pac{H_{F}\pa{Ps_{1}}=b_{1},\dots,H_{F}\pa{Ps_{K}}=b_{K}}\subseteq \pac{1,\dots, 2^{m}}$. 

Next, we define two brute-force attack models under which we analyze the average guesswork. The first is a remote attack in which the attacker does not know the set of occupied bins. The second is an insider attack in which an attacker has insider access to the system so that it can see the set of occupied bins and test the hash function directly and compare the hashed values to the values in the  set of occupied bins. 

Figure \ref{fig:TheDifferenceBetweenOnlineAndOfflineAttacks} illustrates the difference  between these two types of attacks.

\begin{defn}[A remote attack]\label{def:AnOnlineAttackModel}
The following steps define a remote attack:
\bitm
{ 
\item The attacker does not know what the hash function is, that is, it does not know to which outputs (i.e., bins) the inputs of the hash function are mapped.\footnote{In fact, under this setting the attacker may know the statistical profile $P_{H}$ given in Definition \ref{def:FractionsofMappingsanyhash}, but it does not know to which bins the inputs are mapped, that is, in terms of Figure \ref{fig:HashFunctionBipartiteRepresentation} it does not know the connections on the left side, but it may know the degrees of the nodes on the right hand side.}
\item The attacker does not know the passwords of the users; it also does not know the set of occupied bins.
\item The attacker can guess the passwords one by one; it can choose any strategy for guessing the passwords.
\item The attacker chooses a user name and then guesses passwords one by one. Once the attacker guesses \textbf{a password} that is mapped to the same bin as \textbf{the password} of the user, it has \textbf{cracked} the password and the game is over.
}
\end{defn}

\begin{figure}[h!]
\input{Problem_Setting.TpX}\caption{The user/attacker enters a user name and a password. The server hashes the password with a hash function, and compares its output to the value that is stored on the server for the user name that was entered. When the values match access is granted; otherwise access is denied.}\label{fig:Problem_Setting}
\end{figure}

Basically, Definition \ref{def:AnOnlineAttackModel} states that the attacker does not know the structure of the hash function $H_{F}\pa{\cdot}$ that the system uses (or only knows the degree of the outputs in the bipartite graph representing this hash function, that is, its statistical profile). As an illustrative example consider the case where the attacker targets user 1 whose password $Ps_{1}$ is mapped to $H_{F}\pa{Ps_{1}} = b_{1}$. The attacker can only guess inputs to the hash function one by one until he finds an input $Ps^{\ast}$ such that $H_{F}\pa{Ps^{\ast}}=b_{1}=H_{F}\pa{Ps_{1}}$. Note that because the hash function is a many to one mapping, it is possible that $Ps^{\ast}\neq$ $Ps_{1}$.

\begin{defn}[An insider attack]\label{def:AnOfflineAttackModel}
We make the same assumptions as in Definition \ref{def:AnOnlineAttackModel} (in particular it still does not know the hash function), with the exception that this time the attacker has insider access so that it can test the hash function directly and compare the hashed values of his guesses to the elements in the set of occupied bins, as it knows the user-name, and hashed password pairs.
\bitm{
\item The password is cracked when the attacker finds \textbf{a password} that is mapped to \textbf{any of the occupied bins}.  
}
\end{defn}

The attack defined above essentially requires that the attacker finds an input $Ps^{\ast}$ so that $H_{F}\pa{Ps^{\ast}}\in\pac{b_{1},\dots,b_{K}}$. Note that this can happen even in the case where $Ps^{\ast}\not\in\pac{Ps_{1},\dots, Ps_{K}}$.

\begin{figure}
\subfigure[A remote attack, where the password is cracked when the attacker finds an input that is mapped to the same bin as the password of the user.]{
\scalebox{.6}{\input{Online_Attack.TpX}}}
\hspace{1in}
\centering\subfigure[An insider attack where the password is cracked  when the attacker finds an input that is mapped to any of the occupied bins, that is, any element in $B$.]{
\scalebox{.4}{\input{Ofline_Attack.TpX}}} \caption{An illustration of the difference between remote and insider attacks. Note that $UN$ represents a user-name.}\label{fig:TheDifferenceBetweenOnlineAndOfflineAttacks}
\end{figure}

\section{General Expressions For the Average Guesswork of a bin}\label{sec:GeneralExpressions}

In this section we present general expression for the average guesswork of guessing an input that is mapped to a bin. We begin by presenting a set of hash functions with the same statistical profile, we then define averaging arguments for the average guesswork, and then we present the average guesswork per bin under these assumptions.

We now define a set of hash functions, consisting of hash functions that have the same statistical profile.

\begin{defn}[The $P_{H}$-set of hash functions]\label{def:TheSetofPhHashFunctions}
The $P_{H}$-set of hash functions is the set of all hash functions whose statistical profile is  $P_{H}$ as given in Definition \ref{def:FractionsofMappingsanyhash}; that is, for any hash function in the set and for every bin $b\in\pac{1,\dots,2^{m}}$, the relative number of inputs mapped to $b$ is $\ph$; the difference between the hash functions in this set is the actual inputs that are mapped to every bin. 
\end{defn}

Essentially, Definition \ref{def:TheSetofPhHashFunctions} states the following: A hash function $H_{F}^{\ast}\in$ $P_{H}$-set of hash functions, if and only if its statistical profile $P_{H{\ast}}$ is equal to $P_{H}$, that is, \beq{}{
P_{H^{\ast}}\pa{b} = P_{H}\pa{b}\quad \forall 
b\in\pac{1,\dots,2^{m}}} 

\begin{rem}\label{rem:PermutationsHashFunctionsRepresentTheSet}
The $P_{H}$-set of hash functions is basically all the bipartite graph representations whose output degrees divided by $2^{n}$ give $\ph$, for every $b\in\pac{1,\dots,2^{m}}$. Therefore, it consists of all possible permutations of edges from inputs to bins that yield $\ph$.
\end{rem}
Figure \ref{fig:IntuitiveExplanationPermutation} illustrates the remark above.

\begin{defn}[A keyed $P_{H}$-hash function]
A set of $P_{H}$-hash functions, where the actual function is determined by a key, which chooses a permutation of the inputs. 
\end{defn}

Essentially, the definition above means that for every hash function  $H_{F}^{\ast}\in$ $P_{H}$-set of hash functions, there is a key value associates with it. 

\begin{figure}[!h]
\centering\scalebox{.8}{\input{Intuitive_Explenation_Permutation.TpX}}\caption{A $P_{H}$-set of hash functions. The outputs (right hand side) are fixed whereas each hash function is a different permutation of the edges that go to the inputs (the left hand side).In terms of keyed hash functions it means that the key chooses a permutation (i.e., a hash function).}\label{fig:IntuitiveExplanationPermutation}
\end{figure}

We now define the average guesswork for password cracking.
\begin{defn}[The guesswork of hash functions]\label{def:TheGuessworkOfACertainBin}
The following are the two types of guesswork that we analyze in this paper.
\bitm{
\item The number of guesses of inputs required to find \textbf{an input} that is mapped to a particular bin $b$ is $\Gbin$, where $b\in \pac{1,\dots,2^{m}}$.
\item Given a set of occupied bins $B$, the number of guesses of inputs required to find an input that is mapped to any element in $B$ (rather than a particular element) is $\Gof$.  
}
\end{defn}

This definition is in line with the attack models presented in Definition \ref{def:AnOnlineAttackModel} and Definition \ref{def:AnOfflineAttackModel}.

Guesswork is a random variable, where the number of guesses depends on a certain source of randomness. The following definition presents the source of randomness for the average guesswork of hash functions given a set of occupied bins.We present two methods for averaging guesswork, the first one is in line with classical guesswork, whereas the second can be used in the context of proof of work (see Remark \ref{rem:Guessing_Strategies_and_POW}). We put these definitions together as most of the theoretical results in the paper hold for both cases naturally. Furthermore, Equation \eqref{eq:Illustration_Source_Randomness_Hah_func}, and Equation \eqref{eq:Illustration_Source_Randomness_strategies} illustrate  definition \ref{def:AveragingArgument}.  

\begin{defn}[Averaging Arguments]\label{def:AveragingArgument}
We define two methods for averaging the number of guesses required to find an input that is mapped to a particular bin $b$ ($\Gbin$) as well as to any element in $B$ ($\Gof$).
\bitm
{
\item \textbf{A keyed $P_{H}$-hash function}: For any strategy of guessing passwords one by one we average over all hash functions whose statistical profile is $P_{H}$ (as presented in Figure \ref{fig:IntuitiveExplanationPermutation}). We average by assuming that a key which is uniformly distributed chooses an element form this set of hash functions.
\item\textbf{A $P_{H}$-hash function unknown to the attacker}: In this case we average over all strategies of guessing passwords one by one. We assume that every strategy of guessing is equally likely to be chosen.\footnote{Averaging over all strategies for guessing passwords one by one may not be a usual way of evaluating the level of security. In particular, this method is meaningful only when the passwords of the users are drawn \textbf{uniformly} (whereas the bias lies in the statistical profile of the hash function). Indeed, in our problem when the attacker does not know the mappings from inputs to outputs of \textbf{a hash function}, and passwords are drawn uniformly,  this definition makes sense and in fact achieves the same average guesswork as a keyed hash function and any strategy of guessing passwords one by one.}
}
\end{defn}

For simplicity let consider $E\pa{\Gbin}$. Essentially, Definition \ref{def:AveragingArgument} states that under the keyed $P_{H}$-hash function assumption we get that for a given strategy of guessing inputs one by one, the average guesswork can be written as
\beq{eq:Illustration_Source_Randomness_Hah_func}{
E\pa{\Gbin}=\frac{1}{|\mathbf{H}|}\sum_{H_{F}\in\mathbf{H}}G_{bin}\pa{b|H_{F}},
}
where $\mathbf{H}$ is the set of all $P_{H}$-hash functions, and $G_{bin}\pa{b|H_{F}}$ is the number of guesses required to find an input that is mapped to bin $b$ under this given strategy of guessing inputs. On the other hand, Definition \ref{def:AveragingArgument} states that under the $P_{H}$-hash function unknown to the attacker assumption, we get that given a hash function $H_{F}\pa{\cdot}$ whose statistical profile is $P_{H}$, the average guesswork can be written as  
\beq{eq:Illustration_Source_Randomness_strategies}{
E\pa{\Gbin}=\frac{1}{|\mathbf{S}|}\sum_{S\in\mathbf{S}}G_{bin}\pa{b|S},
}
where $\mathbf{S}$ is the set of all strategies of guessing passwords one by one, and 
$G_{bin}\pa{b|S}$ is the number of guesses required to find an input that is mapped to bin $b$ under strategy $S$ for the function $H_{F}\pa{\cdot}$.

\begin{rem}\label{rem:Guessing_Strategies_and_POW}
Although averaging over all strategies of guessing passwords one by one is not natural for classical guesswork (where the user gets to choose the strategy), it is useful for guessing using \textbf{nonce} in the context of proof of work. In this case the attacker draws an input at random (nonce) at every step as an input to the hash function. This is equivalent to drawing a strategy for guessing passwords one by one.
\end{rem}

\begin{theorem}[Average number of guesses of a bin]\label{th:TheAverageGuessworkPerBinGeneralExpressions}
Assume that the attacker guesses inputs to a hash function $H_{F}\pa{\cdot}$ whose statistical profile is $P_{H}$ until it finds an input (any input) that satisfies $H_{F}\pa{Ps^{\ast}}=b_{0}$, that is, an input that is mapped to bin $b_{0}$.  When $P_{H}\pa{b_{0}}=2^{-\Al\cdot m}$, $\Al>0$, and under the averaging arguments of Definition \ref{def:AveragingArgument} we get that when $n\ge \pa{1+\eps_{1}}\cdot\Al\cdot m$, $\eps_{1}>0$, the average guesswork of bin $b_{0}$, $G_{bin}\pa{b_{0}}$, is equal to
\beq{}
{
\lim_{m\to\infty}\frac{1}{m}\log\pa{E\pa{G_{bin}\pa{b_{0}}}}=\lim_{m\to\infty}\frac{1}{m}\log\pa{1/P_{H}\pa{b_{0}}}=\Al .}
\end{theorem}
\begin{proof}[Sketch of proof]
 The general idea is that when $n$ is large enough, the probability of finding an input that is mapped to bin $b_{0}$ converges to the geometric distribution $\ph\cdot\pa{1-\ph}^{i-1}$, where $i\ge 1$ is the number of guesses. The challenge in showing this lies in the fact that the chance of hitting this bin in the first guess is $P_{H}\pa{b_{0}}$ under the averaging arguments presented in the previous section, whereas as the number of guesses increases the chance of guessing an input that is mapped to $b_{0}$ increases. The reason is that the attacker does not repeat his guesses and the ensemble of hash functions is based on permutation of the edges that are connected to the inputs (Figure \ref{fig:IntuitiveExplanationPermutation}). When the input size is large enough and $\ph$ decreases exponentially with $m$, the probability of guessing converges to geometric distribution which means that as $m$ increases the average guesswork converges to $1/P_{H}\pa{b_{0}}=2^{\Al\cdot m}$. The proof is in Section \ref{sec:BoundsAverageGuessworkNoBinsAllocation}.
\end{proof}

\begin{rem}
Note that under the assumptions of Theorem \ref{th:TheAverageGuessworkPerBinGeneralExpressions}, the derivation of the average guesswork under an insider attack $E\pa{\Gof}$ is straightforward as long as $\sum_{b\in B}\ph=2^{-\Al^{\ast}\cdot m}$, where $\Al^{\ast}>0$.  
\end{rem}

\section{The Conditional Average Guesswork of type-class statistical profile hash functions}\label{sec:AverageGuessworkPhHashFunctionsNoMod}

We begin this section by defining the conditional average guesswork under insider and remote attacks. We then define a type-class statistical profile that we use in this paper to model bias in hash functions. Finally, we present bounds on the average guesswork for both types of attacks.

In this section we analyze the average guesswork under insider and remote attacks for a type-class statistical profile, which we \textbf{interchangeably use with the terminology $P_{H}$-hash functions} (the type-class statistical profile that we consider in this paper is defined in this section). Essentially, this type-class statistical profile enables us to model bias for hash functions and also come up with closed form expressions for the average guesswork that shed light on the effect of bias on the level of security.

First, let us define the average guesswork under remote and insider attacks.

\begin{defn}[The conditional average guesswork under a remote attack]\label{def:AverageGuessworkOnLineAttackNoEquation}
Assume that the number of users is $2^{R\cdot m}$, where $0\le R\le 1$, as presented in Definition \ref{def:TheSetofOccupieeBins}, and that their passwords are mapped to bins in the set $B$. The average guesswork under a remote attack \textbf{conditioned on the set of occupied bins} $B$, $E\pa{\Gon}$, is the average number of guesses required to find an input that is mapped to the same bin as the password of a user, averaged over all users.
\end{defn}

The following lemma puts in equation form Definition \ref{def:TheSetofOccupieeBins}.
\begin{lem}
The conditional average guesswork under a remote attack for a set of occupied bins $B$ (presented in Definition \ref{def:TheSetofOccupieeBins}) can be written as follows:
\beq{}
{
E\pa{\Gon}=2^{-R\cdot m}\sum_{b\in B}E\pa{\Gbin}.
}

\end{lem}
\begin{proof}
The proof is straight forward based on Definition \ref{def:AverageGuessworkOnLineAttackNoEquation}.
\end{proof}

\begin{defn}[The conditional average guesswork under an insider attack]
The average guesswork under an insider attack conditioned on the set of occupied bins $B$, is 
$E\pa{\Gof}$, where $\Gof$ is presented in Definition \ref{def:TheGuessworkOfACertainBin}.
\end{defn}

\begin{rem}
Note that $\Gof$ can be viewed as an extension of $\Gbin$ to the case where a password has to hit a ``super'' bin, which is the set of occupied bins. 
\end{rem}

\begin{rem}
Note that $\Gof$ and $\Gon$ are conditioned on the set of occupied bins $B$, that is, they are a function of $B$. Basically it means that as $B$ changes their underlying probability mass function as well as their average change too.
\end{rem}

In order to find closed form expressions for 
$$E\pa{\Gon}=2^{-R\cdot m}\sum_{b\in B}E\pa{\Gbin}$$
as well as for $E\pa{\Gof}$, and in order to gain better understanding of the effect of bias on the level of security of hashed passwords, we make in Definition \ref{def:DefinitionOfIidCase} an additional assumption about the statistical profile, that it is a type-class statistical profile, with  $P_{H}$ specified in Definition \ref{def:DefinitionOfIidCase}.

\begin{defn}\label{def:DefinitionOfIidCase}
\beq{}{
\ph = 2^{-m\cdot \pa{H\pa{q}+D\pa{q||p}}}\quad\forall b\in\pac{1,\dots,2^{m}}}
where $q=N\pa{1|b}/m$, and $N\pa{1|b}$ is the number of ones in the binary representation of $b\in\pa{1,\dots,2^{m}}$.
\end{defn}

\begin{rem}
The results in this paper under the assumption made in Definition \ref{def:DefinitionOfIidCase} also hold for any $P_{H}$ that is permutation of the mappings from $b$ to $P_{H}$. 
\end{rem}

\begin{rem}[Motivation for analyzing under the method of types]
The method of types means that elements in certain subsets of certain volumes are \emph{almost} equally likely (e.g., in the most likely subset the probability of each element decreases at rate $H\pa{p}$).  \end{rem}

Based on Definition \ref{def:DefinitionOfIidCase}, we can express the average guesswork for any bin presented in Theorem \ref{th:TheAverageGuessworkPerBinGeneralExpressions} as follows.

\begin{theorem}[Average number of guesses for each bin]\label{th:AverageGuessworkPerBinIidAssumption}
Assume that the attacker guesses inputs of a $P_{H}$-hash function one by one until it finds one that is mapped to bin $b$ (remote or insider attacks). In this case, under the averaging arguments of Definition \ref{def:AveragingArgument} we get that when $n\ge \pa{1+\eps_{1}}\cdot\log\pa{1/p}\cdot m$, $\eps_{1}>0$, the average guesswork of $\Gbin$ is equal to
\beq{eq:AverageGuessworkPerBinWhenUsersAreNotModified}
{
\lim_{m\to\infty}\frac{1}{m}\log\pa{E\pa{\Gbin}}=\lim_{m\to\infty}\frac{1}{m}\log\pa{1/\ph}=H\pa{q\pa{b}}+D\pa{q\pa{b}||p}}
where $b\in\pac{1,\dots,2^{m}}$, and $q\pa{b}=\frac{N\pa{1|b}}{m}$ is the type of $b$.
\end{theorem}
\begin{proof}[Sketch of proof]
Similarly to Theorem \ref{th:TheAverageGuessworkPerBinGeneralExpressions} we show that the probability of hitting a bin converges to geometric distribution. This in turn means that the average guesswork of bin $b\in\pac{1,\dots,2^{m}}$ can be achieved by assigning $\Al=H\pa{q\pa{b}}+D\pa{q\pa{b}||p}$, where this is achieved under the method of types and large deviation theory, which lead to $\ph \approx 2^{-\pa{H\pa{q\pa{b}}+D\pa{q\pa{b}||p}}\cdot m}$. However, unlike Theorem \ref{th:TheAverageGuessworkPerBinGeneralExpressions} here we want guesswork to converge to geometric distribution for every $b\in\pac{1,\dots,2^{m}}$. This in turn leads to the bound $n\ge \pa{1+\eps_{1}}\cdot\log\pa{1/p}\cdot m$. The proof is in Section \ref{sec:BoundsAverageGuessworkNoBinsAllocation}.
\end{proof}

\begin{rem}
The average guesswork ($E\pa{\Gon}$,  $E\pa{\Gof}$) is a function of the bins stored in the system, that is, a function of the set of occupied bins $B$. Therefore, unlike guesswork for the classic case, in the case of hashed passwords it is meaningful to consider upper and lower bounds on the average guesswork. This can be expressed through the maximum and minimum average guesswork as a function of $B$ and the number of users. The definition for maximum and minimum average guesswork can be found in Definition \ref{def:MaxMinAverageGiesswork}.
\end{rem}

Next, we present bounds for the average guesswork. We assume that each password consists of $n$ bits that are drawn i.i.d. Bernoulli$\pa{1/2}$. The assumption that passwords are drawn uniformly prevents the attacker form preferring one strategy of guessing password over the other, as long as he does not know the structure of the hash function.  In order to derive the upper bound we assume that the passwords are mapped to the set of occupied bins that achieves maximum average guesswork, whereas in order to derive the lower bound we assume that passwords are mapped to the set of occupied bins whose average guesswork is minimum. We formally define the maximum and minimum conditional average guesswork in Definition \ref{def:MaxMinAverageGiesswork}.

In the following theorems we make a slight change of notations for representation purposes and write $R=H\pa{s}$, where $1/2\le s\le 1$.

\begin{theorem}[Bounds on the conditional average guesswork under an insider attacks]\label{cor:LowerUpperTheAverageGuessworkForGuessingAnyPassword}
The average guesswork under an insider attack is bounded as follows:
\beq{eq:UpperLowerBoundIndiderGuessword190920}
{
\barr{cc}{D\pa{1-s||p} &0\le 1-s\le p\\ 0 &p\le 1-s\le 1/2 }\le\lim_{m\to\infty}\frac{1}{m}\log\pa{E\pa{\Gof}}\le D\pa{s||p},
}
under the following assumptions: 
\bitm
{
\item The passwords that the users have chosen are mapped to $2^{H\pa{s}\cdot m-1}$ different (unique) bins, where $1/2\le s\le 1$ (i.e., $|B| = 2^{H\pa{s}\cdot m-1}$); the passwords are drawn i.i.d. Bernoulli$\pa{1/2}$.
\item $P_{H}$ is defined in Definition \ref{def:DefinitionOfIidCase} under the method of types, where $p\le 1/2$.
\item $n\ge\pa{1+\eps}\cdot m\cdot \log\pa{1/p}$ where $\eps > 0$.
}
Where we average based on  Definition \ref{def:AveragingArgument}.
\end{theorem}
\begin{proof}[Sketch of proof]
The general idea is bounding the average guesswork based on equation \eqref{eq:MaxMinAveageGuessworkOnlineOffline}, \eqref{eq:MaxMinAveageGuessworkOnlineOffline1} and the method of types, assuming that the set of occupied bins is the set of bins of size $|B|=2^{H\pa{s}\cdot m-1}$ that occupies the $2^{H\pa{s}\cdot m-1}$ bins with highest (lowest) degrees in the bipartite graph that represent $P_{H}$. Basically, we show that the set of occupied bins whose bins have the highest degrees in $P_{H}$ achieves the maximum average guesswork, whereas the set of occupied bins whose bins have the lowest degrees, achieves the minimum, average guesswork. By finding the conditional average guesswork for
these two sets of occupied bins, we come up with the bounds for the average guesswork. We also wish to point out that the the bound $n\ge\pa{1+\eps}\cdot m\cdot \log\pa{1/p}$ where $\eps > 0$, guarantees that the underlying probability of the conditional guesswork (based on Definition \ref{def:AveragingArgument}) converges to a geometric distribution with parameter $P_{H}\pa{B}=\sum_{b\in B}P_{H}\pa{b}$  for every $B$; essentially, the $\log\pa{1/p}$ guarantees that even after guessing a large number of inputs, the underlying probability of hitting bin $b$ in the next guess is approximately $P_{H}\pa{b}$, where this approximation holds uniformly for every $b\in\pac{1,\dots,2^{m}}$. The full proof appears in Section \ref{sec:BoundsAverageGuessworkNoBinsAllocation}.
\end{proof}

\begin{rem}
The justification for the assumption that users are mapped to unique bins is based on Corollary \ref{cor:NoBinAllocationMostLikelyAverageGuesswork} and Corollary \ref{cor:NoBinAllocationMostLikelyAverageGuessworkOnLine} that show that for any practical purpose it is very likely that the passwords are mapped to unique bins. Essentially, the expressions can be modified to also reflect the case where users are not mapped to unique bins; however, this results in less elegant expressions.
\end{rem}

We now find bounds for the average guesswork under a remote attack.

\begin{theorem}[Bounds on the conditional average guesswork under a remote attack]\label{th:LowerUpperBoundExpAverageGuessworkAnyHashFunction}
The average guesswork when averaged according to Definition \ref{def:AveragingArgument}, is bounded by
\bal{eq:AverageGuessworkLowerUpperAnyhashFunctionFirstTimeAppears}
{
H\pa{s}+D\pa{1-s||p}&\le&\lim_{m\to\infty}\frac{1}{m}\log\pa{E\pa{\Gon}}\nn\\
&\le&\barr{cc}{H\pa{s}+D\pa{s||p} &\pa{1-p}\le s\le 1\\ 2\cdot H\pa{p}+D\pa{1-p||p}-H\pa{s} &1/2\le s\le \pa{1-p} },
}
under the following assumptions:
\bitm
{
\item There are $2^{H\pa{s}\cdot m-1}$ users whose passwords are mapped to different (unique) bins, where $1/2\le s\le 1$; passwords are drawn i.i.d. Bernoulli$\pa{1/2}$.
\item $P_{H}$ is based on the method of types as presented in Definition \ref{def:DefinitionOfIidCase}, where $p\le 1/2$.
\item $n\ge\pa{1+\eps}\cdot m\cdot \log\pa{1/p}$ where $\eps > 0$.
}
\end{theorem}
\begin{proof}
The full proof appears in Section \ref{sec:BoundsAverageGuessworkNoBinsAllocation}.
\end{proof}

\begin{rem}[Tightness of the bounds]
Note that the bounds in equation \eqref{eq:UpperLowerBoundIndiderGuessword190920} and equation \eqref{eq:AverageGuessworkLowerUpperAnyhashFunctionFirstTimeAppears} are tight in the sense that there exist sets of occupied bins whose average guesswork achieve those appear and lower bounds. In particular, the upper bound is achieved by the set of bins whose degree is maximum in the bipartite graph that represents the $P_{H}$-set of hash functions, whereas the lower bound is achieved by the bins whose degrees are minimum. 
\end{rem}

\begin{rem}
Note that the upper bounds in equation \eqref{eq:UpperLowerBoundIndiderGuessword190920} and equation \eqref{eq:AverageGuessworkLowerUpperAnyhashFunctionFirstTimeAppears} are unbounded functions that increase as $p$ decreases; this means that the average guesswork over a set of $2^{m}$ bins can be larger than $2^{m}$. This is possible because the guesses are done over the set of inputs to the hash function, whose size is $2^{n}$ combined with the fact that $n\ge\log\pa{1/p}\cdot m$. In particular, as $p$ decreases these bounds become more asymptotic.
\end{rem}

\begin{rem}
In this subsection we assume that passwords are drawn uniformly. When passwords are biased there is a certain optimal strategy \cite{Arikan_Ineq_Guessing} and therefore, there is no use of averaging over all possible strategies (i.e.,  the $P_{H}$-hash function unknown to the attacker case in Definition \ref{def:AveragingArgument}); in this case averaging is done over a keyed $P_{H}$-hash function as presented in Definition \ref{def:AveragingArgument}. Furthermore, similarly to Corollary \ref{cor:EffectofBiasedPassword} the average guesswork in this case is equal to the dominant term between the average number of guesses required to guess \textbf{the password}, and the average number of guesses required to guess \textbf{a password} that is mapped to the same bin.
\end{rem}

\section{The Average Guesswork of type-class statistical profile hash functions}\label{sec:NonConditionalAverageGuessworkPhHashFunctionsNoMod}
In this section we define the probability over the set of occupied bins induced by drawing passwords at random. Furthermore, we present the average guesswork averaged over all passwords under a remote attack as well as a concentration result for $\Gbin$. Finally, we explain why $E\pa{\Gbin}$ is not affected by large deviations and show how it is different from classic average guesswork.  

The fact that passwords of users are chosen at random induces a probability over the set of occupied bins $B$. It results from the fact that for a given hash function $H_{F}\pa{\cdot}$,  choosing passwords at random leads to certain probability that a set $B$ of occupied bins is occupied. Furthermore, when passwords are drawn uniformly and independently, the probability that these passwords occupy some set of occupied bins is the same across all hash functions in a $P_{H}$-set of hash functions. These facts are defined and proven bellow.

\begin{defn}[Probability over the set of occupied bins]\label{def:DefinitionOfDistributionOverSetofOccupiedBins}
Given a hash function $H_{F}\pa{\cdot}$, the fact that passwords of users are drawn at random according to some probability mass function, induces a probability mass function over the set of occupied bins, which we denote by $P_{B}^{\pa{H_{F}}}$.
\end{defn}

\begin{lem}\label{lem:TheProbabilityoverBWhenPasswordsAreDrawnAtRandom}
Consider a $P_{H}$-set of hash functions and assume that passwords of the users are drawn uniformly and independently over the set of inputs to these hash functions. In this case we get 
\beq{}{
P_{B}^{\pa{H_{F}}}=P_{B}\quad\forall H_{F}\in\text{$P_{H}$-set of hash functions}.
}
\end{lem}
\begin{proof}
When considering the bipartite representation of every hash function in the $P_{H}$-set of hash functions, based on Definition \ref{def:FractionsofMappingsanyhash} of the statistical profile, one can see that the outputs have the same degree profile across bins. The only difference between hash functions in the set is permutations over the connections of edges to the inputs (see Remark \ref{rem:PermutationsHashFunctionsRepresentTheSet} as well as Figure \ref{fig:IntuitiveExplanationPermutation}). The set of occupied bins is essentially a set of bins, and the probability of occupying a certain set of occupied bins $B^{\ast}=\pac{b_{1},\dots,b_{K}}$ is determined by the number of passwords that hit edges that are connected to bins $\pac{b_{1},\dots,b_{K}}$. Since passwords are drawn uniformly, permutations over the edges that go to the inputs (Figure \ref{fig:IntuitiveExplanationPermutation}) lead to the same probability of hitting edges of the bins $\pac{b_{1},\dots,b_{k}}$. Therefore, $P_{B}^{\pa{H_{F}}}$ is the same for every $H_{F}\in P_{H}$-set of hash functions, and can be denoted by $P_{B}$.
\end{proof}

Essentially, Lemma \ref{lem:TheProbabilityoverBWhenPasswordsAreDrawnAtRandom} states that all hash functions in the ensemble have the same probability induced over the set of occupied bins as long as passwords are drawn uniformly and independently.

Next, we define the average guesswork, when averaging the conditional average guesswork over all sets of occupied bins. 

\begin{defn}[The average guesswork over all sets of occupied bins]\label{def:TheAverageGuessworkAveragedOverAllPassords}
The average guesswork over all sets of occupied bins is averaging $E\pa{G_{remote/insider}\pa{B}}$ over all possible sets of occupied bins $B$ through $P_{B}$. We denote this by $E\pa{G_{P}\pa{B}}$. 
\end{defn}

\begin{lem}[The average guesswork over the sets of all occupied bins]\label{lem:AverageGuessworkSetOccupiedBins}
The average guesswork over all sets of occupied bins under a remote/insider attacks, when the average for every set of occupied  bins is done according to Definition \ref{def:AveragingArgument} is
\beq{eq:AveragingOverAllPasswords}{
E\pa{G_{P}\pa{B}} = \sum_{B}E\pa{G_{remote/insider}\pa{B}}\cdot P_{B}\pa{B}.
}
\end{lem}
\begin{proof}
The proof is straight forward based on Definition \ref{def:TheAverageGuessworkAveragedOverAllPassords}.
\end{proof}

\begin{rem}
$E\pa{G_{remote/insider}\pa{B}}$ can be viewed as the conditional average guesswork conditioned on the set of occupied bins $B$, whereas  $E\pa{G_{P}\pa{B}}$ can be viewed as the average over the conditional average guesswork, that is, the average guesswork when averaging over all sets of occupied bins.
\end{rem}

The next corollary presents the average guesswork under a remote attack when averaging over all passwords/set of occupied bins as presented in equation \eqref{eq:AveragingOverAllPasswords}.
\begin{cor}[The Average Guesswork under a remote attack when averaged over all passwords/all sets of occupied bins]\label{Cor:AverageGuessworkOverAllPasswordsNoBInsAllocation}
When each user chooses its password uniformly and passwords are statistically independent, the average guesswork under a remote attack when averaging over the passwords increases at rate that is equal to
\beqn{}
{
\lim_{m\to\infty}\frac{1}{m}\log\pa{E\pa{G_{P}\pa{B}}}=1
}
where $n\ge\pa{1+\eps}\cdot m\cdot\log\pa{1/p} $.
\end{cor}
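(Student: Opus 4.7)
The plan is to decompose $E\pa{G_{D}\pa{B}}$ by conditioning on the bin to which the user's uniformly drawn password is mapped. By Definition \ref{def:FractionsofMappingsanyhash}, when $X$ is uniform on $\pac{0,1}^{n}$ the induced bin $H\pa{X}$ has distribution $\ph$, so
\beqn{}
{
E\pa{G_{D}\pa{B}}=\sum_{b\in\pac{0,1}^{m}}\ph\cdot E\pa{G\pa{b}},
}
which reduces the problem to evaluating the per-bin guesswork.

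For a fixed bin $b$ there are $k_{b}=\ph\cdot 2^{n}$ preimages among the $2^{n}$ candidate inputs. Under the averaging convention of Definition \ref{def:AveragingArgument}, the expected guesswork is the classical expected position of the first marked element in a uniformly random permutation of the $2^{n}$ inputs, namely
\beqn{}
{
E\pa{G\pa{b}}=\frac{2^{n}+1}{k_{b}+1}=\frac{2^{n}+1}{\ph\cdot 2^{n}+1};
}
averaging instead over the $\ph$-set of hash functions for any fixed strategy yields the same value, since in that case the preimage set of $b$ is a uniformly random subset of size $k_{b}$. Substituting gives
\beqn{}
{
E\pa{G_{D}\pa{B}}=\sum_{b}\frac{\ph\pa{2^{n}+1}}{\ph\cdot 2^{n}+1}.
}

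The input-length hypothesis $n\ge\pa{1+\eps}\cdot m\cdot\log\pa{1/p}$ forces the smallest preimage count to satisfy $k_{b}\ge p^{m}\cdot 2^{n}\ge 2^{\eps\cdot m\cdot\log\pa{1/p}}$, which grows without bound uniformly in $b$. Hence for $m$ large each summand lies between $1/2$ and $2$, so summing over the $2^{m}$ bins gives $2^{m-1}\le E\pa{G_{D}\pa{B}}\le 2^{m+1}$ and therefore $\lim_{m\to\infty}\tfrac{1}{m}\log\pa{E\pa{G_{D}\pa{B}}}=1$. The only real obstacle is verifying that no bin is so sparse that the replacement $E\pa{G\pa{b}}\approx 1/\ph$ breaks down; this is exactly what the hypothesis on $n$ is designed to buy, since the least likely bin under a Bernoulli$\pa{p}$ pattern has fraction as small as $p^{m}$, and the extra $\eps\cdot m\cdot\log\pa{1/p}$ bits of input ensure every preimage set is still exponentially large. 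The terminating identity $\sum_{b}\ph\cdot\pa{1/\ph}=2^{m}$ then delivers the rate $1$ as pure bookkeeping.
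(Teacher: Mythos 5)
Your proposal is correct and follows essentially the same route as the paper: both decompose $E\pa{G_{D}\pa{B}}=\sum_{b}\ph\cdot E\pa{G\pa{b}}$ and use the hypothesis on $n$ to show that $E\pa{G\pa{b}}$ is uniformly close to $1/\ph$, so the sum collapses to roughly $2^{m}$ and the rate is $1$. The only cosmetic difference is that you evaluate the per-bin expectation with the exact without-replacement formula $\pa{2^{n}+1}/\pa{k_{b}+1}$, whereas the paper uses the truncated geometric mean $1/\ph-\pa{1-\ph}^{2^{n}}\cdot\pa{1/\ph+2^{n}}$; both reduce to $1/\ph$ in the stated regime.
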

\begin{proof}[Sketch of proof]
Basically, averaging over all passwords is equivalent to averaging over all possible sets of occupied bins. Based on Lemma \ref{def:TheSetofOccupieeBins} the conditional average guesswork under a remote attack can be broken into $\Gbin$ for every user. This in turn allows us to look at the probability that a password of a user is mapped to a certain bin rather than the entire probability over the set of occupied bins $P_{B}$. When the passwords of the users are drawn i.i.d. from a uniform distribution the probability that a password is mapped to bin $b$ and its average guesswork is in turn $1/\ph$ is $\ph$; multiplying this two terms gives 1 for any $b\in\pac{1,\dots,2^{m}}$. The average is achieved by summing over all bins and so the average guesswork is $2^{m}$ as there are $2^{m}$ bins. This holds for every user and so when averaging over all users we also get that the average guesswork under a remote attack is $2^{m}$. The lower bound on the input size is required so that as $m$ increases the average guesswork converges to $1/\ph$ for every $b\in\pac{1,\dots,2^{m}}$. The full proof appears in Section \ref{sec:BoundsAverageGuessworkNoBinsAllocation}.
\end{proof}

\begin{rem}
The fact that the probability that a password is mapped to some bin is the inverse of the average guesswork of that bin, as presented in Corollary \ref{Cor:AverageGuessworkOverAllPasswordsNoBInsAllocation}, leads to average guesswork over all passwords that scales like $2^{m}$ which is larger than $2^{m\cdot H_{1/2}\pa{p}}$.
\end{rem}

Finally, we provide a concentration result for $\Gbin$ (this also holds for the set of occupied bins).

\begin{theorem}[Concentration result under both remote and insider attacks]\label{th:LowerUpperConcentrationAnyHash}
\beq{}
{
-\lim_{m\to\infty}\frac{1}{m}\log\pa{P\pa{\Gbin\le 2^{\pa{1-\eps_{1}}\cdot\pa{H\pa{q\pa{b}}+D\pa{q\pa{b}||p}}\cdot m}}}\ge \eps_{1}\cdot\log\pa{1/p},
}
where $0<\eps_{1}<1$ and bin $b$ is of type $q\pa{b}$, under the following assumptions:
\bitm
{
\item $P_{H}$ is based on the method of types as presented in Definition \ref{def:DefinitionOfIidCase}, where $p\le 1/2$.
\item $n\ge\pa{1+\eps}\cdot m\cdot \log\pa{1/p}$ where $\eps > 0$.
\item Averaging according to Definition \ref{def:AveragingArgument}.
}
\end{theorem}
\begin{proof}
The proof is in Section \ref{sec:BoundsAverageGuessworkNoBinsAllocation}.
\end{proof}

A few remarks are in order.

\begin{rem}
From Theorem \ref{th:LowerUpperConcentrationAnyHash} we can state that for any $P_{H}$-hash function and any bin, the fraction of strategies of guessing passwords one by one, for which the number of guesses is smaller than $2^{\pa{1-\eps_{1}}\cdot\pa{H\pa{q\pa{b}}+D\pa{q\pa{b}||p}}\cdot m}$, decreases like $2^{-\eps_{1}\cdot\log\pa{1/p}\cdot m}$; in addition, for the $P_{H}$-set of hash functions, any bin and any strategy of guessing passwords one by one,  the fraction of $P_{H}$-hash functions for which the number of guesses is smaller than $2^{\pa{1-\eps_{1}}\cdot\pa{H\pa{q\pa{b}}+D\pa{q\pa{b}||p}}\cdot m}$, also decreases like $2^{-\eps_{1}\cdot\log\pa{1/p}\cdot m}$.
\end{rem}

\begin{rem}
The concentration result of Theorem \ref{th:LowerUpperConcentrationAnyHash} shows that the probability mass function of $\Gbin$ is concentrated around its mean value $E\pa{\Gbin}$. This is in contrast with classic average guesswork for guessing a password \cite{Arikan_Ineq_Guessing} in which case the probability mass function is concentrated around the typical set in the i.i.d. case, whereas the average guesswork can be derived based on a large deviations argument \cite{MaloneGuessworkandEntropy}. On the other hand, when averaging over all passwords as in Corollary \ref{Cor:AverageGuessworkOverAllPasswordsNoBInsAllocation} we get that the average guesswork $E\pa{G_{P}\pa{B}}$ lies in the tail of the probability mass function similarly to the classic case of \cite{Arikan_Ineq_Guessing}.
\end{rem}

\subsection{Large Deviation Vs. Concentration: Another Difference From Classic Guesswork}\label{subsec:LargeDeviationVsConcentration}
Theorem \ref{th:LowerUpperConcentrationAnyHash} along with Theorem \ref{th:AverageGuessworkPerBinIidAssumption} show that the probability mass function of $\Gbin$ is concentrated around its mean value. This is in contrast with classic guesswork \cite{Arikan_Ineq_Guessing} where the average guesswork is based on large deviations (i.e., the average lies in the tail of the probability mass function).  

In this subsection we explain the mechanism that leads to concentration of the probability mass function of $\Gbin$ around its mean value, and highlight the differences between this mechanism and the underlying mechanism that enable large deviations to dominate in classic average guesswork (i.e., directly guessing a password).

We begin by explaining why the probability mass function of $\Gbin$ is centered around its mean value $E\pa{\Gbin}$. For this let us consider a bin $b$ of type $q$.

In general the maximum number of guesses is $2^{n} >> 2^{m}$ (i.e., generally, the set of inputs of a hash function is much larger than the set of outputs). We first focus on the case where the number of guesses $1\le i_{g}\le 2^{\log\pa{1/p}\cdot m}$ for which
\beq{eq:ClassicGwVsGbin190921}{
P\pa{\Gbin = i_{g}}\approx \pa{1-p_{g}}^{i_{g}}\cdot p_{g},
}
where $p_{g}=2^{-\pa{D\pa{q|p}+H\pa{q}}\cdot m}$, when $n\ge\pa{1+\eps}\cdot m\cdot \log\pa{1/p}$ and $\eps > 0$; this is based on Theorem \ref{cor:LowerUpperTheAverageGuessworkForGuessingAnyPassword} and Theorem \ref{th:LowerUpperBoundExpAverageGuessworkAnyHashFunction} that show that in this regime the underlying probability of $\Gbin$ converges to geometric distribution. We later address the case where $i_{g} > 2^{\pa{1+\eps}\cdot m\cdot \log\pa{1/p}}$.

Theorem \ref{th:LowerUpperConcentrationAnyHash} shows that the probability that the number of guesses $i_{g} < E\pa{\Gbin}$ goes to $0$. On the other hand, when the number of guesses is in the order of $1/p_{g}=2^{\pa{D\pa{q|p}+H\pa{q}}\cdot m}$, that is, when $i_{g}\approx E\pa{\Gbin}$ we get that the first term in
\eqref{eq:ClassicGwVsGbin190921}, $\pa{1-p_{g}}^{i_{g}}$ goes to a constant and so in this case $P\pa{\Gbin\approx E\pa{G_{bin}}}\approx p_{g}$, where $\Gbin\approx E\pa{\Gbin}$ means that the number of guesses is in the exponential order of the mean value. 

When, $E\pa{\Gbin} < i_{g} \le 2^{\pa{1+\eps}\cdot m\cdot \log\pa{1/p}}$ we get that the first terms in \eqref{eq:ClassicGwVsGbin190921} decays super exponentially. In particular, when $i_{g} = 2^{\pa{D\pa{q|p}+H\pa{q}+\delta}\cdot m}$, where $\delta > 0$ we get that
\beq{}{
\pa{1-p_{g}}^{i_{g}}\approx 2^{-m\cdot 2^{\delta\cdot m}}.} 
This means that the probability of large deviations in the number of guesses decreases super exponentially, whereas the number of guesses can only increase exponentially. This leads to the fact that large deviations are extremely unlikely in this case, and as a result the probability mass function of $\Gbin$ is centered around $E\pa{\Gbin}$.

The underlying mechanism that leads to super exponential decrease in the probability of large deviations is directly related to the $P_{H}$-ensemble of hash functions (\textbf{the fact that there is a super exponential number of such hash functions}), and in particular to the bipartite graph presented in Figure \ref{fig:IntuitiveExplanationPermutation}. When $\Gbin= i_{g}$, it means that there are no edges in the bipartite graph that are connected to bin $b$ and that are also connected to any of the first $i_{g}-1$ guessed inputs. When $n$ is large enough, and $i_{g}>1/p_{g}$ the probability of this event decays super exponentially, as for every guess you have a new draw of edges from bin $b$ that can be mapped to the guessed input, the probability that an edge from bin $b$ hits the guessed input decreases exponentially, and the number of guesses that have been made at this point is also exponential.

When $i_{g}$ is in the order of $2^{n}$ the super exponential decay has even higher rate than what is presented above, due to the same mechanism as the one discussed in the previous paragraph.

On the other hand, in classic guesswork, the probability of large deviation in the number of guesses under ideal strategy is simply the probability of drawing a password that lies in the tail of the probability mass function of passwords. Since this probability decreases exponentially, and the number of guesses increases exponentially, the average guesswork in the classic case can lie in the tail of its probability mass function.

\section{The Most Likely Conditional Average Guesswork of type-class statistical profile hash functions}\label{sec:MostLikelyAverageGuessworkPhHashFunctionsNoMod}
In this section we define and analyze the most likely conditional average guesswork under both remote and insider attacks.

The results of this section conbined with the results of Section \ref{sec:AverageGuessworkPhHashFunctionsNoMod} provide quantifiable bounds for the effect of bias as well as the effect of the number of users, on the conditional average guesswork of a hash function. Furthermore, these results show that increasing the number of stored hashed passwords of users has a far greater effect on the average guesswork than bias.

The next definitions present the most likely conditional average guesswork, based on the fact that the conditional average guesswork is a function of the set of occupied bins $B$, and the fact that drawing passwords at random induces certain probability on the set of occupied bins, $P_{B}$, as presented in Definition \ref{def:DefinitionOfDistributionOverSetofOccupiedBins} and Lemma \ref{lem:TheProbabilityoverBWhenPasswordsAreDrawnAtRandom}. 

Essentially, some sets of occupied bins have the same conditional average 
guesswork; given a certain conditional average guesswork value, we are interested in the probability of  drawing a set of occupied bins whose conditional average guesswork is equal to that value, that is, the probability of drawing a set of occupied bins $B_{0}$, whose conditional average guesswork is $E\pa{G_{remote/insider}\pa{B_{0}}}=a$. In particular, we are interested in the conditional average guesswork whose value is the most likely in terms of drawing a set of occupied bins that achieves this value.

\begin{defn}[The most likely conditional average guesswork]\label{def:TheMostLikelyAverageGuesswork}
Given that the number of users is $|B|$, the most likely conditional average guesswork under a remote/insider attack is the  most probable average number of guesses required to break into the system; the probability of a conditional average number of guesses is determined by  $P_{B}$, and the sets of occupied bins whose conditional average guesswork is equal to this number of guesses. \end{defn}

For analysis simplicity, we focus on sets of occupied bins that consist of \textbf{unique} elements; we show in Corollary \ref{cor:NoBinAllocationMostLikelyAverageGuesswork} and Corollary \ref{cor:NoBinAllocationMostLikelyAverageGuessworkOnLine} that when passwords are drawn uniformly and independently it is very likely that the set of occupied bins consists of unique elements.

\begin{lem}[The most likely conditional average guesswork over sets of occupied bins consisting of unique elements]\label{lem:TheMostLikelyAverageGuesswork}
The most likely conditional average guesswork as a function of the sets of occupied bins of size $|B|=A$ that consist of unique elements is 
\beq{eq:ThemostLikelyaverageGiesswprk2}{
\arg\max_{a\ge 0}\sum_{B_{0}:E\pa{G_{remote/insider}\pa{B_{0}}}=a, |B_{0}|=A, B_{0}\text{ consists of unique elements}}P_{B}\pa{B_{0}}
}
when averaging the guesswork according to definition \ref{def:AveragingArgument}, and passwords are drawn at random according to a probability mass function that induces $P_{B}$ as presented in Definition \ref{def:DefinitionOfDistributionOverSetofOccupiedBins}.
\end{lem}
\begin{proof}
Basically the conditional average guesswork is a function of the set of occupied bins $B$. Furthermore, certain sets of occupied bins can have the same conditional average guesswork. Therefore, it is a question of what is the most probable group (or set) of sets of occupied bins whose guesswork is the same. This leads to equation \eqref{eq:ThemostLikelyaverageGiesswprk2}. 
\end{proof}

\begin{rem}
$|B_{0}|=A$ in Lemma \ref{lem:TheMostLikelyAverageGuesswork} represents the number of users. Therefore, the most likely average guesswork depends on $A$, as shown in the expressions in Corollary \ref{cor:NoBinAllocationMostLikelyAverageGuesswork} and Corollary \ref{cor:NoBinAllocationMostLikelyAverageGuessworkOnLine}.
\end{rem}

Figure \ref{fig:ExampleMostLikelyAverageGuesswork} illustrates the concept of the most likely conditional average guesswork.

\begin{figure}[!h]
\centering\scalebox{.8}{\input{Most_Likely_Average_Guesswork.TpX}}\caption{Assume that $B_{1}$, $B_{2}$, and $B_{3}$ are the only sets of occupied bins whose average guesswork is $E\pa{G_{remote/insider}\pa{B_{1}}}=E\pa{G_{remote/insider}\pa{B_{2}}}=E\pa{G_{remote/insider}\pa{B_{3}}}=a$, and also that $|B_{1}|=|B_{2}|=|B_{3}|=A$. In this case when $P_{B}\pa{B_{1}}+P_{B}\pa{B_{2}}+P_{B}\pa{B_{3}}$ is maximum compared to the probability of  other groups of occupied bins of size $A$, where the elements in each group have the same conditional average guesswork (not $a$), we get that the most likely average guesswork is $a$ and the probability of this guesswork is $P_{B}\pa{B_{1}}+P_{B}\pa{B_{2}}+P_{B}\pa{B_{3}}$.    }\label{fig:ExampleMostLikelyAverageGuesswork}
\end{figure}

We now find the most likely conditional average guesswork as presented in Lemma \ref{lem:TheMostLikelyAverageGuesswork} when the elements in the set of occupied bins are unique. Again, for representation purposes we denote the number of users $|B|=2^{m\cdot H\pa{s}}$, where $1/2\le s\le 1$.
\begin{cor}[The most likely conditional average guesswork under an insider attacks]\label{cor:NoBinAllocationMostLikelyAverageGuesswork}
When the number of users is $|B|=2^{H\pa{s}\cdot m}$, the most likely conditional average guesswork under an insider attack as presented in Lemma \ref{lem:TheMostLikelyAverageGuesswork} increases like
\beq{}{
2^{m\cdot\pa{H\pa{p}-H\pa{s}}},
}
as long as $0\le 1-s\le p$ and $p\le 1/2$. In addition, we get the following:
\bitm
{
\item The probability that all passwords are mapped to different bins of type $q$ such that $0\le 1-s< q\le p$, decreases like $$e^{-2^{m\cdot\pa{2\cdot H\pa{s}-H\pa{q}}}}\times 2^{-m\cdot D\pa{q||p}\cdot 2^{H\pa{s}\cdot m}}.$$
\item In this case the average guesswork $E\pa{\Gof}$ increases like  $2^{m\cdot\pa{H\pa{q}+D\pa{q||p}-H\pa{s}}}$.
\item The most likely conditional average guesswork is achieved when $q=p$ (with probability that goes to 1).
\item When $p\le 1-s\le 1/2$ the most likely conditional average guesswork is again achieved when $q=p$; in this case the exponent of the conditional average guesswork is equal to $0$.
}

This is achieved under the following assumptions:
\bitm
{
\item Passwords are drawn i.i.d. Bernoulli$\pa{1/2}$, and the elements in $B$ are unique.
\item $P_{H}$ defined based on the method of types and presented in Definition \ref{def:DefinitionOfIidCase}.
\item $n\ge\pa{1+\eps}\cdot m\cdot \log\pa{1/p}$ where $\eps > 0$.
}
Finally, we average based on Definition \ref{def:AveragingArgument}.
\end{cor}
\begin{proof}[Sketch of proof]
The general idea behind the proof is that the guesswork is a function of the set of occupied bins $B$. Based on Theorem \ref{th:AverageGuessworkPerBinIidAssumption} it can be shown that for every set of occupied bins that are in the typical set of $P_{H}$, the average guesswork is the same. Using Lemma \ref{th:methodoftypesprobth} and the method of types it can be shown that for the typical set $q=p$ we get $\ph=2^{-m\cdot H\pa{p}}$, in which case the average guesswork increases like $2^{m\cdot\pa{H\pa{p}-H\pa{s}}}$. Furthermore, when $|B|\le 2^{H\pa{p}\cdot m}$ and passwords are drawn i.i.d. Bernoulli$\pa{1/2}$ the probability that the set of occupied bins is in the typical set of $P_{H}$ goes to 1. The full proof appears in Section \ref{sec:BoundsAverageGuessworkNoBinsAllocation}.
\end{proof}

\begin{rem}
Corollary \ref{cor:NoBinAllocationMostLikelyAverageGuesswork} shows that when $0\le 1-s\le p$ it is most likely that $B$ consists of unique bins of type $p$ (with probability that goes to 1), in which case the average guesswork increases like $2^{m\cdot\pa{H\pa{p}-H\pa{s}}}$.
\end{rem}

\begin{rem}[The effect of bias vs. the effect of the number of users]\label{rem:EffectofBiasonAverageGuesswork}
The effect of the number of users on the average guesswork is far greater than the effect of bias, that is, as the number of users increases the average guesswork decreases at a higher rate than the one when the number of users remains constant and bias increases. In order to illustrate this statement let us focus on the rate at which the most likely average guesswork increases, $H\pa{p}-H\pa{s}$, where $0\le 1-s \le p$ (although it holds for the bounds of Theorem \ref{cor:LowerUpperTheAverageGuessworkForGuessingAnyPassword} as well).
First, let us focus on the effect of increasing the number of users on the average guesswork.
The first derivative of $H\pa{p}-H\pa{s}$ with respect to $p$ is
\beq{}{
\log_{2}\pa{p}-\log_{2}\pa{1-p}
}
which is equal to zero at $p=1/2$ (i.e., when there is no bias). In addition, the first derivative around $p=1/2$ is very small and therefore bias has a little effect on the average guesswork. On the other hand the rate at which the number of users increases is $H\pa{s}$, and so the average guesswork decreases linearly with $H\pa{s}$. This in turn shows that change in bias does not affect the average guesswork to the same extent as change in the number of users. We illustrate this observation in Table \ref{table:guess}.
\end{rem}

\begin{table}
\centering
\caption{The rate at which the most likely average guesswork increases along with the lower and upper bounds. Note that at $p=1/2$ the bounds meet. Furthermore, this example illustrates that increasing the number of users has a far greater effect than bias as stated is Remark \ref{rem:EffectofBiasonAverageGuesswork}. When there is no bias and the number of users increases like $2^{m\cdot H\pa{0.2}}$ the average guesswork increases at the same rate as in the case when the number of users is $2^{m\cdot H\pa{0.1}}$ and the bias is $p=0.21$. Furthermore, when the bias is $p=0.45$ the average guesswork is very close to one.}
\begin{tabular}{|c|c|c|c|}
\hline
$p$, $s$  &$ H\pa{p}-H\pa{s}$ & $D\pa{s||p}$ & $D\pa{1-s||p}$ \\
\hline
p=1/2, $s=0$ & $1$ & $1$ & $1$ \\
\hline
$p=0.45$, $s=0$ & $0.9948$ & $0.8625$ & $1.15$ \\
\hline
$p=1/2$, $s=0.2$ & $0.2781$  & $0.2781$ & $0.2781$  \\
\hline
$p=0.21$, $s=0.1$ & $0.2725$ & $0.0622$ & $1.5914$ \\
\hline

\end{tabular}
\label{table:guess}
\end{table}

Next, we find the most likely average guesswork under remote attacks.
\begin{cor}[The most likely conditional average guesswork under a remote attack]\label{cor:NoBinAllocationMostLikelyAverageGuessworkOnLine}
When the number of users is $|B|=2^{H\pa{s}\cdot m}$, the most likely conditional average guesswork under a remote attack as presented in Lemma \ref{lem:TheMostLikelyAverageGuesswork} increases like
\beq{}{
2^{m\cdot H\pa{p}},
}
as long as $0\le 1-s\le p$ and $p\le 1/2$. In addition, we get the following:
\bitm
{
\item The probability that all passwords are mapped to different bins of type $q$ such that $0\le 1-s< q\le p$, decreases like $$e^{-2^{m\cdot\pa{2\cdot H\pa{s}-H\pa{q}}}}\times 2^{-m\cdot D\pa{q||p}\cdot 2^{H\pa{s}\cdot m}}.$$
\item In this case  the average guesswork of any of the users $E\pa{\Gbin}$ as well as $E\pa{\Gon}$ increase like  $2^{m\cdot \pa{H\pa{q}+D\pa{q||p}}}$.
\item The most likely conditional average guesswork is achieved when $q=p$ (with probability that goes to 1).
}
This is achieved under the following assumptions:
\bitm
{
\item Passwords are drawn i.i.d. Bernoulli$\pa{1/2}$, and the elements in $B$ are unique.
\item $P_{H}$ is based on the method of types as Defined in Definition \ref{def:DefinitionOfIidCase}.
\item $n\ge\pa{1+\eps}\cdot m\cdot \log\pa{1/p}$ where $\eps > 0$.
}
Finally, the average is done in accordance with Definition \ref{def:AveragingArgument}.
\end{cor}
\begin{proof}
The full proof appears in Section \ref{sec:BoundsAverageGuessworkNoBinsAllocation}.
\end{proof}

\section{Overview of the Results}\label{sec:RestiltsOutline}
In this section we outline the main results presented in the paper and  reference the relevant theorems and definitions in the paper, so that the motivation and the essence of the results can stand out. 

\subsection{Hashing Passwords, Brute Force Attacks, and Bias}

Modern systems do not store the passwords of their users in plain text but rather hash them (a many to one mapping) and store their hash value, which is  paired with a user name, as described in Figure \ref{fig:Problem_Setting} and presented in Definition \ref{def:GivingUserAnAccess}. This in turn protects the passwords of users when the system is compromised. In order to gain access to a system a user provides the system with his user name and his password; the system then calculates its hash value and compares it to the one stored in the system.  

The set of bins to which the passwords of the users are mapped is termed \textbf{the set of occupied bins} and denoted by $B$. This is presented in Definition \ref{def:TheSetofOccupieeBins} and Figure \ref{fig:ExampleForOccupiedBins}.

In a brute force attack an adversary who \textbf{does not know the structure of the hash function}, and therefore does not know which inputs map to which bins, guesses inputs one by one in order to find \textbf{an input} that is mapped to an element in $B$ (one of the hash values stored in the system), that is, a hash value of a password of a valid user. This sort of attacks deviate away from classic works on guesswork, which consider attacks where an attacker has to find \textbf{the password} of a user (go to Subsection \ref{subsec:BackgroundClassicGuesswork} for background on classic guesswork). The reason for this is that in password cracking the attacker does not know the hash function and so cannot directly guess specific bin values, but rather guess inputs, and it does not know to what outputs they are mapped. Indeed, in Theorem \ref{th:AverageGuessworkAnyHashFunction} we analyze the case where the attacker does know the hash function and also therefore knows the inputs mapping to each bin, thereby it can guess bins directly (i.e., cracking passwords over a broken hash function); in this case we show that the average guesswork is very similar to classic guesswork. 

In this work we consider two types of brute force attacks: A remote attack and an insider attack. These attacks are presented in Figure \ref{fig:TheDifferenceBetweenOnlineAndOfflineAttacks} and defined in Definition \ref{def:AnOnlineAttackModel} and Definition \ref{def:AnOfflineAttackModel} respectively. In a remote attack the adversary targets a specific user and has to find an input that is mapped to the element in $B$ that is paired with that user (i.e., the hash value of his password). Whereas under an insider attack  an adversary (who knows the set of user-names and hashed passwords) has to find an input whose hash value matches any of the hash values that are stored in the system, that is, it is not targeting a specific user. Clearly, the chance of success of an insider attack is higher than a remote attack; the reason is that given a set of occupied bins the attacker has to find an input that is mapped to any of the bins in this set, whereas in a remote attack he has to target a certain bin in the set that is paired with a user name. This is because for an insider attack, it just needs one password that maps to any of the known hashed values, then as it knows all user-name and hashed passwords, it has compromised the system. For a remote attacker, it needs to find a password that maps to a \emph{specific} bin (or hashed value) that is associated with a user-name.

We now discuss bias in hash functions. A hash function maps a large set of inputs to a set of smaller size, where the elements in the output are termed \textbf{bins}. When it comes to cryptographic hash functions and in particular password hashing functions, it is assumed for every bin that it is hard to find an input that is mapped to a bin. One of the most basic assumptions to support this is that a hash function is \textbf{not biased}, that is, the same number of inputs is mapped to any of the bins. However, in practical cryptographic hash functions it has never been proven that this assumption holds.

In this work we represent bias based on the statistical profile of a hash function. As presented in Figure \ref{fig:HashFunctionBipartiteRepresentation} a hash function can be represented by a bipartite graph. The degree distribution \cite{UrbankeBook2008} of the bins (the outputs) is the statistical profile of the hash function. The reason for this is that the set of degrees of the outputs can be normalized to a probability mass function as shown in Definition \ref{def:FractionsofMappingsanyhash}. 

When there is no bias and the set of bins is of size $2^{m}$, we get that the number of mappings per bin is the same across all bins and so the statistical profile is a uniform PMF, that is, $2^{-m}$ for any $b\in\pac{1,\dots,2^{m}}$.

We will use the statistical profile of a hash function to calculate the average guesswork per bin as well as under remote and insider attacks.

\subsection{Defining Average 
Guesswork for Hash Functions}

In this work we derive the average guesswork for every bin as well  as the average guesswork under remote and insider attacks, under certain assumptions about the statistical profile of a hash function. Therefore, a question that may come to mind is what is the source of randomness over which we average, that is, what are the \textbf{averaging arguments}. 

The first source of randomness is the probability according to which users choose their passwords; the second source of randomness can be either a key in a keyed hash function or alternatively the strategy according to which an attacker guesses passwords one by one; this two sources of randomness are presented in Definition \ref{def:AveragingArgument}.

First, we assume that users draw their passwords independently and uniformly over the set of inputs. This assumption can be partially supported by the use of \textbf{passwords managers}. Furthermore, in Subsection \ref{subsec:TheCaseWhereoasswordsAreBiased} we analyze the case where passwords are biased.

The first averaging argument says that we average over all hash functions of a certain degree distribution, where the degree of each bin is fixed (i.e., permutation over the connection of edges to inputs in the bipartite graph as illustrated in Figure \ref{fig:IntuitiveExplanationPermutation}). Essentially, this is the same as considering a keyed hash function that is biased, where all elements in the set of hash functions have the same statistical profile. 

The second averaging argument (which provides an alternative to the first one) considers a hash function with a given statistical profile and averages over all strategies of guessing passwords one by one, that is, over all brute force attacks. We wish to point out that this averaging method is meaningful when passwords are drawn uniformly, however it falls short when passwords are biased (as in this case an attacker is likely to favor a certain strategy over the others). However, this averaging argument comes handy when considering guessing using  nonce in the context of proof of work. In this case the attacker draws an input at random (nonce) at every step as an input to the hash function. This is equivalent to drawing a strategy for guessing passwords one by one.


Under these averaging arguments we present in Definition \ref{def:TheGuessworkOfACertainBin} the average guesswork of a bin  $\Gbin$. Basically, when there is bias, this average varies across bins as a function of the statistical profile (i.e., as a function of $b$). In Theorem \ref{th:TheAverageGuessworkPerBinGeneralExpressions} we show that when there are $2^{m}$ bins and the normalized degree (number of inputs mapping to it normalized by the total number of inputs)  of of bin $b_{0}$ is $2^{-\Al\cdot m}$, where $\Al>0$, the average guesswork of this bin $E\pa{G_{bin}\pa{b_{0}}}=2^{\Al\cdot m}$, which is the average number of guesses to find the input mapping to this particular bin. Note that the underlying probability mass function of $\Gbin$ as well as its average are not dependent on the passwords or the probability according to which they are drawn, but rather on the structure (statistical profile) of the hash function.

Since the attacker does not know the hash function, and there is bias (non uniform statistical profile), it is meaningful in our problem to define the conditional average guesswork, conditioned on the set of occupied bins $B$. Essentially, the conditional average guesswork changes as a function of the set of occupied bins due to bias. In this case users draw their passwords uniformly and these passwords are mapped to a certain set of occupied bins $B$. Another measure that we analyze is the average guesswork (i.e., the unconditional average guesswork), which is basically the average over the conditional average guesswork  and the average is taken over the set of occupied bins.

In terms of the conditional average guesswork under remote and insider attacks we answer the following questions under the assumption that the set of occupied bins consists of $A$ unique elements \emph{i.e.,} $|B|=A$, and a given statistical profile:
\begin{itemize}
    \item What is the set of occupied bins $B$ whose conditional average guesswork is maximum; what is the maximum conditional average guesswork (see Definition \ref{def:MaxMinAverageGiesswork}).
    \item What is the set $B$ whose conditional average guesswork is minimum; what is the minimum conditional average guesswork (see Definition \ref{def:MaxMinAverageGiesswork}).
    \item What is the most likely conditional average guesswork, that is, what is the most likely group of sets of occupied bins that have the same conditional average guesswork, and what is their conditional average guesswork (see Definition \ref{def:TheMostLikelyAverageGuesswork} and Figure \ref{fig:ExampleMostLikelyAverageGuesswork}).
\end{itemize}

Furthermore, when considering the average over the passwords that users choose or alternatively averaging over all possible sets of occupied bins, we answer the following question:
\begin{itemize}
    \item What is the average guesswork when averaging over all possible sets of bins $B$. This is presented in Definition \ref{def:TheAverageGuessworkAveragedOverAllPassords}.
\end{itemize}

\subsection{The average Guesswork under the type-class statistical profile}

In order to derive closed form expressions for the average guesswork under remote and insider attacks we model bias (statistical profile) by using the distribution of type-classes from the method of types. We assume a natural mapping from bins to the statistical profile, that is, when considering the binary representation of bin $b$ we get  
\beq{eq:NatualMappingBitsToPmf}{\ph = 2^{-m\cdot \pa{H\pa{q}+D\pa{q||p}}},} 
where $0<p<1/2$ and $q=(\# \text{ ones in b})/m$. This means that the number of edges of bin $b$ divided by the total number of edges in the bipartite graph is $\ph$.

For type-class statistical profile, given in  equation \eqref{eq:NatualMappingBitsToPmf}, we get that bins of the same type have the same degree/probability (at least in terms of their exponent). This in turn means that under equation \eqref{eq:NatualMappingBitsToPmf} we get sets of bins that are biased and are equally likely. In particular, every bin in the typical set has normalized degree that decreases like $2^{-m\cdot H\pa{p}}$ and there are $2^{m\cdot H\pa{p}}$ such bins. Therefore, under this assumption we get that the hash function is biased towards a certain subset of bins, whose elements are (approximately) uniformly distributed within the set. This is presented in Lemma \ref{th:methodoftypesprobth} and Lemma \ref{th:methodoftypessize}.

We wish to point out that under both remote and insider attacks, the average guesswork does not change when considering permutation of the degree distribution; the average guesswork $E\pa{G_{remote/insider}\pa{B}}$ remains the same up to permutation over the set of occupied bins.  In terms of the example given in Figure \ref{fig:HashFunctionBipartiteRepresentation} it means that considering the degree distribution $\pac{3/22,3/22,3/22,\emph{1/22},3/22,3/22,3/22}$ and the degree distribution $\pac{\emph{1/22},3/22,3/22,3/22,3/22,3/22,3/22}$ leads to the same average guesswork under remote and insider attacks up to a permutation over the set of occupied bins that achieves this average guesswork. Hence, the assumption made in equation \eqref{eq:NatualMappingBitsToPmf} is not very restrictive. 

We begin by presenting the average guesswork per bin $E\pa{\Gbin}$ for a type-class statistical profile. In this case, when bin $b$ is of type $q$ and the input size is large enough we get that $E\pa{\Gbin}$ increases like $1/\ph=2^{m\cdot \pa{H\pa{q}+D\pa{q||p}}}$ for large $m$, which is nothing but the average over geometric distribution, where the probability of success is $\ph$. This is presented in Theorem \ref{th:AverageGuessworkPerBinIidAssumption}.

Next, we focus on the conditional average guesswork under an insider attack in order to highlight our results on the most likely conditional average guesswork as well as the maximum and minimum conditional average guesswork. For all of this cases we assume that the number of users is $2^{R\cdot m}=2^{m\cdot H\pa{s}}$, where $1/2\le s\le 1$, and $|B|=2^{R\cdot m}$. Furthermore, we assume that the passwords of the users are mapped to unique bins\footnote{We assume that users are mapped to unique bins for simplicity. The expressions can be modified for the case where they are not mapped to unique bins; essentially, it only affects the expressions for guesswork under an insider attack as the set of occupied bins decreases. In addition, as shown in Corollary \ref{cor:NoBinAllocationMostLikelyAverageGuesswork} and Corollary \ref{cor:NoBinAllocationMostLikelyAverageGuessworkOnLine}, for any practical purpose it is very likely that the passwords are mapped to unique bins.}.

The most likely conditional average guesswork in this case increases like $2^{m\cdot\pa{H\pa{p}-R}}$, when $0\le R\le H\pa{p}$. Furthermore, when $0\le R\le H\pa{p}/2$ the probability for the most likely conditional average guesswork goes to 1 (the probability presented in equation \eqref{eq:ThemostLikelyaverageGiesswprk2}), that is, the probability that all elements in $B$ are unique and the average guesswork is $2^{m\cdot\pa{H\pa{p}-R}}$ goes to 1. This result is presented in Corollary \ref{cor:NoBinAllocationMostLikelyAverageGuesswork}. Note that Corollary \ref{cor:NoBinAllocationMostLikelyAverageGuesswork} also shows that the probability of hitting the most likely average guesswork scales super exponentially. 

In Remark \ref{rem:EffectofBiasonAverageGuesswork} we show that the effect of increasing $R$ (i.e., the number of users) is far greater than increasing bias (when $p$ deviates away from $1/2$) in terms of the exponent of the average guesswork.

We now present the maximum conditional average guesswork under an insider attack. In this case the conditional average guesswork increases like $2^{m\cdot D\pa{s||p}}$. This is achieved when passwords are mapped to the $2^{m\cdot R}=2^{m\cdot H\pa{s}}$ bins with lowest degrees, where $1/2\le s\le 1$. Interestingly enough, as $p$ decreases basically the exponent $D\pa{s||p}$  increases unboundedly as it becomes less likely for an attacker to hit these bins. We present this result in Theorem \ref{cor:LowerUpperTheAverageGuessworkForGuessingAnyPassword}. Note that the chance of drawing a set of occupied bins of size $|B|=2^{R\cdot m}$ that is mapped to the least likely bins decreases super exponentially in $m$. 

The minimum conditional average guesswork when $|B|=2^{R\cdot m}$ is also presented in Theorem \ref{cor:LowerUpperTheAverageGuessworkForGuessingAnyPassword}. In this case as long as $0\le R\le H\pa{p}$ the conditional average guesswork under an insider attack increases like $2^{m\cdot D\pa{1-s||p}}$, where again $R=H\pa{s}$. This function \textbf{does not} increase unboundedly as $p$ decreases.   

We wish to point out that in this subsection we do not average over all possible sets of occupied bins (or alternatively over the passwords), but rather use the averaging arguments of Definition \ref{def:AveragingArgument}. 

\subsection{How is it Different from Classical Results on Guesswork?}

We now consider the case where averaging over all sets of occupied bins or alternatively averaging over the conditional average guesswork as presented in Definition \ref{def:TheAverageGuessworkAveragedOverAllPassords} and Lemma \ref{lem:AverageGuessworkSetOccupiedBins} and denoted by $E\pa{G_{P}\pa{B}}$. Note that this is equivalent to averaging over all passwords.

We focus on the average guesswork under a remote attack. In this case the conditional average guesswork increases like $2^{H\pa{p}\cdot m}$ (this is presented in Corollary \ref{cor:NoBinAllocationMostLikelyAverageGuessworkOnLine}), when every element of the set of occupied bins is in the typical set. When $R<H\pa{p}/2$ we get that the probability that every bin of the set of occupied bins is in the typical set and is unique goes to 1.

Furthermore, the results of Theorem \ref{th:LowerUpperConcentrationAnyHash} show that the probability that the conditional average guesswork is smaller than $2^{m\cdot H\pa{p}}$ goes to zero. Indeed, the conditional average guesswork is concentrated around its mean value (i.e., concentrated around $2^{H\pa{p}\cdot m}$), when every element in the set of occupied bins is in the typical set and $0\le R<H\pa{p}/2$. This is in contrast with classical results on average guesswork (see \cite{Arikan_Ineq_Guessing} as well as Subsection \ref{subsec:BackgroundClassicGuesswork}), where it is shown that the average guesswork, when guessing a secret directly, lies in the tail of the probability function of guesswork, which means that in the classic case the probability mass function is not concentrated around its mean value

Subsection \ref{subsec:LargeDeviationVsConcentration} presents the mechanism that leads to concentration of the probabilty function of $\Gbin$. Essentially, it shows that it is related to the fact that the tail of the probability function decays super-exponentially because of the number of hash functions in the $P_{H}$-set of hash functions, which is super-exponential.

On the other hand, when averaging over all passwords (basically over all sets of occupied bins) and passwords are drawn uniformly i.i.d. we get that the average guesswork $E\pa{G_{P}\pa{B}}$ increases like $2^{m}$ \textbf{which is exactly the average guesswork when there is no bias.} This result is presented in Corollary \ref{Cor:AverageGuessworkOverAllPasswordsNoBInsAllocation}.

Therefore, we say that $E\pa{G_{P}\pa{B}}$ is a measure that is too crude to quantify the effect of bias on the average guesswork of hash functions. In this sense the conditional average guesswork presented in the previous subsection captures the effect of bias better than averaging over all elements including the set of occupied bins. 
It also means that the when $0\le R\le H\pa{p}/2$, the random variable $G_{P}\pa{B}$  is concentrated around $2^{H\pa{p}\cdot m}$, whereas its average lies in the tail of its probability mass function. This is similar to results on the average guesswork in the classic case \cite{YonaForensicsPUFS}.
This is also illustrated in Figure \ref{fig:ConcentrationVsNonConcentrationaverageGuesswork}.

\begin{figure}
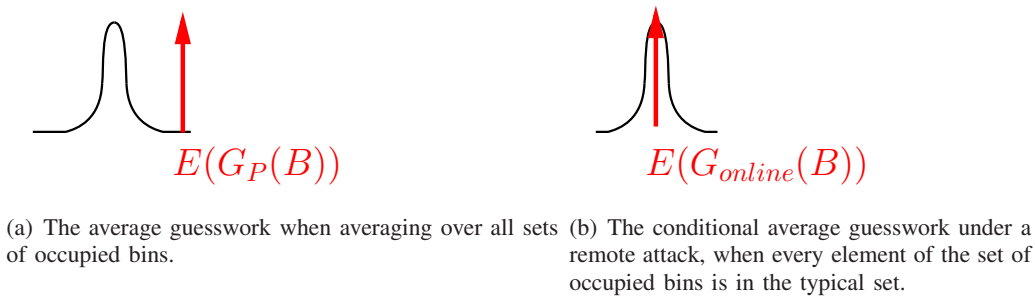

\centering\subfigure[The average guesswork when averaging over all sets of occupied bins.]{
\input{Av_Ps.TpX}} \centering\subfigure[The conditional average guesswork under a remote attack, when every element of the set of occupied bins is in the typical set.]{
\input{Concentration.TpX}} \caption{\textbf{On the right hand side:} The conditional average guesswork. When $0\le R < H\pa{p}/2$ the probability that the conditional average guesswork is $2^{H\pa{p}\cdot m}$ goes to 1. Therefore, the probability mass function of $G_{P}\pa{B}$ is concentrated around $2^{H\pa{p}\cdot m}$. \textbf{On the left hand side:} When averaging over all passwords/sets of occupied bins we get that $E\pa{G_{P}\pa{B}}$ lies in the tail of its probability mass function.}\label{fig:ConcentrationVsNonConcentrationaverageGuesswork}
\end{figure}

\subsection{Using Bias to Increase the Average Guesswork}
Another question that may come to mind is whether bias can be used to increase the average guesswork of hash functions. As we show in Theorem \ref{th:LowerBoundExpAverageGuessworkAnyHashFunction} indeed bias can increase the average guesswork as long as there is a backdoor mechanism (such as the one presented in Definition \ref{def:BackdoorForAnyHash}) that maps password to the least likely set of occupied bins. In this case the average guesswork under a remote attack increases like $2^{\pa{2\cdot H\pa{p}+D\pa{1-p||p}-R}\cdot m}$; this function increases unboundedly as $p$ decreases. This is also illustrated in Figure \ref{fig:Intuitive_Explanation}. We wish to point out that as $p$ decreases, the size of inputs to the hash functions required for this result to hold also increases, and so there is no contradiction in the fact that the average guesswork increases unboundedly as $2^{n}>> 2^{\pa{2\cdot H\pa{p}+D\pa{1-p||p}-R}\cdot m}$, that is, the set of inputs to the hash function is much larger than the average guesswork.

\section{The Average Guesswork of a Modified $P_{H}$-Hash Function}\label{sec:ModifiedHashFunctionAverageGueddwork}
In this section we again analyze the average guesswork of the $P_{H}$-hash function of Section \ref{sec:AverageGuessworkPhHashFunctionsNoMod} with the exception that a centralized procedure maps the passwords of the users to the least likely bins. We show that in this case the average guesswork can increase unboundedly as a function of bias.

We begin by defining a centralized procedure that maps the password of the users to the least likely bins. We then find the average guesswork under both a remote and an insider attacks, when passwords are drawn uniformly. Finally, we quantify how bias in drawing the passwords of the users affects the average guesswork. 

\begin{defn}[bin allocation]\label{def:PasswordAllocationwithNoDetails}
Assume that the number of users $M=2^{m\cdot H\pa{s}-1}\le 2^{m}$, the bin allocation method is defined as follows. For each user a procedure allocates a different bin $b\in\left\{1,\dots,2^{m}\right\}$ according to the following method: The first user receives the least likely bin (i.e., when considering any hash function, this is the bin that has the smallest fraction of mappings; when considering a universal set of hash functions, this is the all ones bin), the second user receives the second least likely bin, whereas the last user receives the $M$th least likely bin.
\end{defn}

Three remarks regarding bins allocation are in order.
\begin{rem}\label{rem:passwordallocationwithnodetails}
In Definition \ref{def:BackdoorForAnyHash} we present a ``backdoor'' mechanism that efficiently implements  the bins allocation procedure presented in Definition \ref{def:PasswordAllocationwithNoDetails}, without decreasing the average guesswork.
\end{rem}

\begin{rem}
In Definition \ref{def:PasswordAllocationwithNoDetails} we state that when two users have different passwords, they are mapped to different bins. In general, this is required in order for the backdoor mechanism discussed in remark \ref{rem:passwordallocationwithnodetails} not to decrease the average guesswork.
\end{rem}

\begin{rem}
mapping passwords to the least likely bins, either based on the fractions of mappings or on the distribution of the key when considering a keyed hash function, is essential for using bias to increase the average guesswork.
\end{rem}

Next, we find the average guesswork for every bin $b\in\pac{1,\dots, 2^{m}}$.

\begin{cor}\label{cor:AverageGuessworkAnyHashFunctionAnyBin}
Assume that the attacker guesses inputs of a $P_{H}$-hash function one by one until it finds one that is mapped to bin $b$ (remote or insider attacks). In this case, under the averaging arguments of Definition \ref{def:AveragingArgument} and when bins are allocated to users according to Definition \ref{def:PasswordAllocationwithNoDetails} we get for $n\ge\pa{1+\eps}\cdot m\cdot \pa{\log\pa{1/p}+H\pa{s}}$ where $\eps > 0$; the average guesswork of bin $b\in\pac{1,\dots,2^{m}}$ 
\beq{}
{
\lim_{m\to\infty}\frac{1}{m}\log\pa{E\pa{\Gbin}}= \lim_{m\to\infty}\frac{1}{m}\log\pa{1/\ph}=H\pa{q\pa{b}}+D\pa{q\pa{b}||p}
}
where $q\pa{b}$ is the type of bin $b\in\pac{1,\dots,2^{m}}$ and averaging is done according to Definition \ref{def:AveragingArgument}.
\end{cor}

\begin{proof}
The proof is similar to the poof of Theorem \ref{th:AverageGuessworkPerBinIidAssumption} with the exception that here bin allocation is in place. In order to make sure that bin allocation does not affect the average guesswork the minimum input length increases from $n\ge \pa{1+\eps_{1}}\cdot\log\pa{1/p}\cdot m$ bits  to $n\ge\pa{1+\eps}\cdot m\cdot \pa{\log\pa{1/p}+H\pa{s}}$. The proof appears in Section \ref{sec:GuessworkAnyHash}.
\end{proof}

\subsection{The Case where Passwords are Drawn Uniformly}

We begin by finding the average guesswork across users under a remote attack when the passwords are drawn uniformly, followed by a concentration result. We then consider an insider attack in which the attacker tries to find a password that is mapped to any of the assigned bins.

\begin{theorem}[The average guesswork under a remote attack]\label{th:LowerBoundExpAverageGuessworkAnyHashFunction}
Under the following assumptions.
\bitm
{
\item The attack defined in Definition \ref{def:AnOnlineAttackModel}.
\item Bins are allocated as in Definition \ref{def:BackdoorForAnyHash} to $M=2^{H\pa{s}\cdot m-1}$ users, where $1/2\le s\le 1$.
\item Passwords are drawn i.i.d. Bernoulli$\pa{1/2}$.
\item $P_{H}$ is based on the method of types as presented in Definition \ref{def:DefinitionOfIidCase}, where $p\le 1/2$.
\item $n\ge\pa{1+\eps}\cdot m\cdot \pa{\log\pa{1/p}+H\pa{s}}$ where $\eps > 0$.
}
The average guesswork when averaging according to Definition \ref{def:AveragingArgument} is equal to
\beq{eq:AverageGuessworkAnyhashFunctionFirstTimeAppears}
{
\lim_{m\to\infty}\frac{1}{m}\log\pa{E\pa{\Gon}}=\barr{cc}{H\pa{s}+D\pa{s||p} &\pa{1-p}\le s\le 1\\ 2\cdot H\pa{p}+D\pa{1-p||p}-H\pa{s} &1/2\le s\le \pa{1-p} }.
}

\end{theorem}
\begin{proof}
The full proof appears in Section \ref{sec:GuessworkAnyHash}.
\end{proof}

Figure \ref{fig:The_Average_Guesswork} illustrates the above result.

\begin{figure}
\epsfig{figure=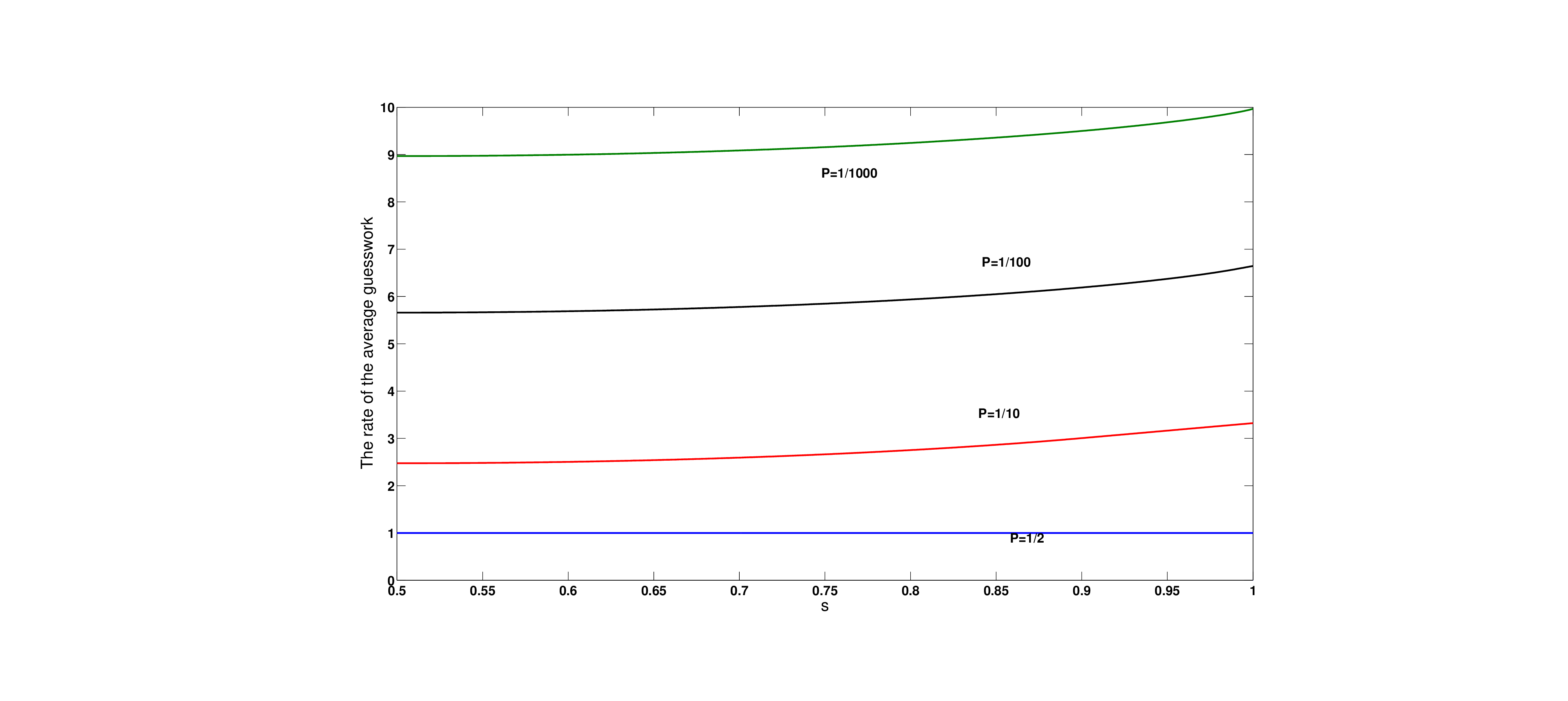,height=8cm}
\caption{The rate at which the average guesswork increases (i.e., $\lim_{m\to\infty}\frac{1}{m}\log\pa{E\pa{\Gon}}$) as a function of $s\in\left[1/2,1\right]$; the number of users is $2^{H\pa{s}\cdot m-1}$ and so the larger $s$ is, the smaller the number of users is. Hence, $s=1/2$ represents the largest number of users, whereas $s=1$ represents the smallest number. For $1/2\le s \le \pa{1-p}$ the average guesswork is equal to $2\cdot H\pa{p}+D\pa{1-p||p}-H\pa{s}$, and for $\pa{1-p}\le s\le 1$ the average guesswork equals $H\pa{s}+D\pa{s||p}$. Note that for $p=1/2$ the average guesswork increases at rate that equals $1$ regardless of the number of users, whereas as $p$ decreases the rate of the average guesswork increases.}\label{fig:The_Average_Guesswork}
\end{figure}

\begin{theorem}[A concentration result]\label{th:ConcentrationAnyHash}
Under the following assumptions.
\bitm
{
\item The attacks defined Definition \ref{def:AnOnlineAttackModel} and Definition \ref{def:AnOfflineAttackModel}.
\item Bins are allocated as in Definition \ref{def:BackdoorForAnyHash} to $M=2^{H\pa{s}\cdot m-1}$ users, where $1/2\le s\le 1$.
\item Passwords are drawn i.i.d. Bernoulli$\pa{1/2}$.
\item $P_{H}$ is based on the method of types as presented in Definition \ref{def:DefinitionOfIidCase}, where $p\le 1/2$.
\item $n\ge\pa{1+\eps}\cdot m\cdot \pa{\log\pa{1/p}+H\pa{s}}$ where $\eps > 0$.

}
The following concentration result holds for all users when avraging according to Definition \ref{def:AveragingArgument}:
\beq{}
{
-\lim_{m\to\infty}\frac{1}{m}\log\pa{P\pa{\Gbin\le 2^{\pa{1-\eps_{1}}\cdot\pa{H\pa{s}+D\pa{s||p}}\cdot m}}}\ge \eps_{1}\cdot\log\pa{1/p}\quad\forall b\in B
}
where $0<\eps_{1}<1$.
\end{theorem}
\begin{proof}
The proof is in Section \ref{sec:GuessworkAnyHash}.
\end{proof}

Figure \ref{fig:Intuitive_Explanation} illustrates the intuition behind the above results.

\begin{figure}[h!]
\input{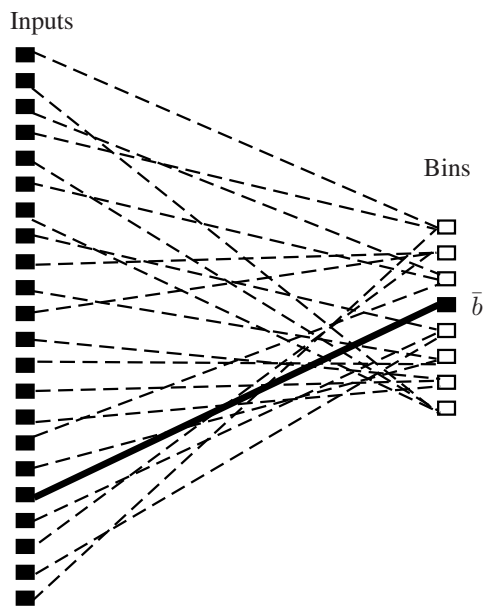}\caption{Due to unbalance, there is only one mapping from the set of inputs to the solid black bin $\bar{b}$ (the solid black line). Hence, the average number of guesses required to find a password that is mapped to it, is larger than the average number of guesses of the other bins, that is, the average guesswork of $\bar{b}$ is larger than the average guesswork of the others.}\label{fig:Intuitive_Explanation}
\end{figure}

\begin{cor}[The average guesswork under an insider attack]\label{cor:TheAverageGuessworkForGuessingAnyPassword}
When the following assumptions hold.
\bitm
{
\item The attack defined in Definition \ref{def:AnOfflineAttackModel}.
\item Bins are allocated as in Definition \ref{def:BackdoorForAnyHash} to $M=2^{H\pa{s}\cdot m-1}$ users, where $1/2\le s\le 1$.
\item Passwords are drawn i.i.d. Bernoulli$\pa{1/2}$.
\item $P_{H}$ is based on the method of types as presented in Definition \ref{def:DefinitionOfIidCase}, where $p\le 1/2$.
\item $n\ge\pa{1+\eps}\cdot m\cdot \pa{\log\pa{1/p}+H\pa{s}}$ where $\eps > 0$.
}
the average is equal to
\beq{}
{
\lim_{m\to\infty}\frac{1}{m}\log\pa{E\pa{\Gof}}=D\pa{s||p}.
}
\end{cor}
\begin{proof}
The full proof appears in Section \ref{sec:GuessworkAnyHash}.
\end{proof}

\begin{rem}
We assume that the number of users $M=2^{H\pa{s}\cdot m-1}$ since based on Definition \ref{def:PasswordAllocationwithNoDetails}, it assures that the type of each bin $b\in B$ satisfies
$s\le q\pa{b}\le 1$, where $q\pa{b}$ is the type of bin $b$. See Remark \ref{rem:SomeMoreDetailsonTheNumberOfUsersandTypes} for more details.
\end{rem}


\subsection{The Case where Passwords are Biased}\label{subsec:TheCaseWhereoasswordsAreBiased}
The next corollary presents the effect of a biased password on the average guesswork. We consider the case where passwords are drawn i.i.d. Bernoulli$\pa{\qp}$. Based on the concentration result of Theorem \ref{th:ConcentrationAnyHash}, we characterize a region in which the guesswork of Theorem \ref{th:LowerBoundExpAverageGuessworkAnyHashFunction} dominates the average guesswork as well as another region in which the average guesswork of a password \eqref{eq:GuessworkExpGrowthRate} is the dominant one.
Since in this case the optimal strategy is guessing passwords based on their probabilities in descending order \cite{Arikan_Ineq_Guessing}, the optimal average guesswork should be averaged over a keyed $P_{H}$-hash function (i.e., not over all possible strategies of guessing passwords).
\begin{cor}[Biased passwords]\label{cor:EffectofBiasedPassword}
Under the following assumptions.
\bitm
{
\item The attack defined in Definition \ref{def:AnOnlineAttackModel}.
\item Bins are allocated as in Definition \ref{def:BackdoorForAnyHash} to $M=2^{H\pa{s}\cdot m-1}$ users, where $1/2\le s\le 1$.
\item Unlike Definition \ref{def:BackdoorForAnyHash}, here the passwords are drawn i.i.d. Bernoulli$\pa{\qp}$, where $0<\qp<1/2$.
\item $P_{H}$ is based on the method of types as presented in Definition \ref{def:DefinitionOfIidCase}, where $p\le 1/2$.
\item $n\ge\pa{1+\eps}\cdot m\cdot \pa{\log\pa{1/p}+H\pa{s}}$ where $\eps > 0$.
}
When passwords are guessed based on their probabilities in descending order, the average guesswork of a user who is mapped to bin $b\in B$, averaged over a keyed $P_{H}$-hash function as in Definition \ref{def:AveragingArgument} is equal to
\bal{}
{
\lim_{m\to\infty}\frac{1}{m}\log &\pa{E\pa{\Gbin}}&=\nn\\
&\lim_{m\to\infty}&\barr{cc}{\frac{n}{m}H_{1/2}\pa{\qp} &2\cdot\frac{n}{m}\cdot H\pa{\frac{\sqrt{\qp}}{\sqrt{\qp}+\sqrt{1-\qp}}}< H\pa{q\pa{b}}+D\pa{q\pa{b}||p}\\ H\pa{q\pa{b}}+D\pa{q\pa{b}||p} &\frac{n}{m}H\pa{\qp}>H\pa{q\pa{b}}+D\pa{q\pa{b}||p} }.
}
Furthermore, the average guesswork across users is
\bal{eq:AverageAcrossUsersPasswordBiased1}
{
\lim_{m\to\infty}\frac{1}{m}\log \pa{E\pa{\Gon}}=
\lim_{m\to\infty}\frac{n}{m}H_{1/2}\pa{\qp}\quad \frac{2n}{m}H\pa{\frac{\sqrt{\qp}}{\sqrt{\qp}+\sqrt{1-\qp}}}< H\pa{s}+D\pa{s||p}
}
and
\bal{eq:AverageAcrossUsersPasswordBiased2}
{
\lim_{m\to\infty}\frac{1}{m}\log\pa{E\pa{\Gon}}=\barr{cc}{H\pa{s}+D\pa{s||p} &\pa{1-p}\le s\le 1\\ 2\cdot H\pa{p}+D\pa{1-p||p}-H\pa{s} &1/2\le s\le \pa{1-p} }
}
when $\lim_{m\to\infty}\frac{n}{m}H\pa{\qp}>\log\pa{1/p}$.
\end{cor}
\begin{proof}
The full proof appears in Section \ref{sec:GuessworkAnyHash}.
\end{proof}

We now give an illustrative example for the result above.
\begin{example}\label{ex:BiasedPasswords190926}
Consider the case where  $n=\pa{1+\eps}\cdot\log\pa{1/p}\cdot m$ and every user out of the $2^{H\pa{s}\cdot m-1}$ users draws his password i.i.d. Bernoulli$\pa{\qp}$. In this case, when $2\pa{1+\eps}\cdot\log\pa{1/p}\cdot H\pa{\frac{\sqrt{\qp}}{\sqrt{\qp}+\sqrt{1-\qp}}}< H\pa{s}+D\pa{s||p}$ the rate at which the average guesswork across users increases equals $\pa{1+\eps}\cdot\log\pa{1/p}\cdot H_{1/2}\pa{\qp}$, whereas when $\pa{1+\eps}\cdot\log\pa{1/p}\cdot H\pa{\qp}>\log\pa{1/p}$ the rate is equal to
$$\barr{cc}{H\pa{s}+D\pa{s||p} &\pa{1-p}\le s\le 1\\ 2\cdot H\pa{p}+D\pa{1-p||p}-H\pa{s} &1/2\le s\le \pa{1-p} }$$.
\end{example}

\begin{rem}
Example \ref{ex:BiasedPasswords190926} highlights the fact that based on the relation between $n$ and $m$ (i.e., the rate at which $n$ grows compared to $m$), the bias in the hash function (parameterized by $p$), and the bias in drawing the passwords (parameterized by $\theta$), we get a shift in the exponent of the average guesswork. When passwords are drawn from sufficiently biased distribution compared to the ratio $n/m$ and to $p$ we get that the most likely event that determines the average guesswork is the case where the attacker simply guesses the password of the user; in this case the average guesswork is the same as classic average guesswork. For example, when the password of the user is $Ps$, and the hashed value of the password is $H_{F}\pa{Ps}$, is it very likely in this case that the attacker manages to guess an input $n^{\ast}=Ps$ so that $H_{F}\pa{n^{\ast}} = H_{F}\pa{Ps}$. On the other hand, in the case where the bias in drawing passwords is not that severe, the most likely event that dominates the average guesswork is that the attacker manages to guess an input that is mapped to the same bin as the password of the user; therefore, in this case we get that the average guesswork deviates from classic results. Essentially, it means that when the bias in drawing passwords is not that severe, the most likely event is that the attacker guesses an input $n^{\ast}\neq Ps$ so that $H_{F}\pa{n^{\ast}}=H_{F}\pa{Ps}$. 
\end{rem}

\section{The Average Guesswork of a Strongly Universal Set of Hash Functions}\label{sec:AverageGuessworkCrackingPasswords}

In this section we derive the average guesswork for strongly universal set of hash functions when the key is drawn i.i.d. Bernoulli$\pa{p}$, $0<p\le 1/2$. In Subsection \ref{subsec:GWEachBin} we derive the average guesswork for any bin as a function of the bias $p$. Then, in Subsection \ref{subsec:GWCentralized} we calculate the average guesswork as a function of the number of users and the bias. Furthermore, we present the backdoor mechanism to allocate bins. Finally, in Subsection \ref{subsec:distributedpasswordallocation} we analyze the guesswork when the hash function is not modified.

We begin by defining strongly universal set of hash functions \cite{Wegman_CarterNewHashFunc1981}, which is nothing but the set of all hash functions with an input of length $n$ bits and an output of length $m$ bits.

\begin{defn}
A strongly universal set of hash functions is the set  $\left\{H_{f\pa{\udl{k}}}\pa{\cdot}\right\}$, where $\udl{k}$ is a key of size $m\cdot 2^{n}$, $f\pa{\udl{k}}$ is a bijection, and $\left\{H_{i}\pa{\cdot}\right\}$, $i\in\left\{1,\dots ,m\cdot 2^{n}\right\}$, is the set of all hash function with input of length $n$ bits and output of length $m$ bits. 
\end{defn}

We now define the mapping of keys to hash functions that we consider in this work for biased keys.

\begin{defn}\label{def:StronglyUniversalSetofHashFunctionsBiasedKeys}
When the key is broken into $2^{n}$ segments of size $m$ such that
\beq{eq:KeySegmentation}
{
\udl{k}=\left\{k_{1},\dots,k_{2^{n}}\right\}.
}
we state that the mapping $g\pa{\cdot}$ leads to the following mapping from keys to hash functions
\beq{eq:USHF}
{
H_{g\pa{\udl{k}}}\pa{i}=k_{i}\quad i\in\left\{1,\dots,2^{n}\right\}.
}
\end{defn}
 
In this work, when considering biased keys we assume the mapping $g\pa{\cdot}$. Figure \ref{fig:Keyed_Hash_Def} illustrates the definition presented above.

\begin{figure}[h!]\input{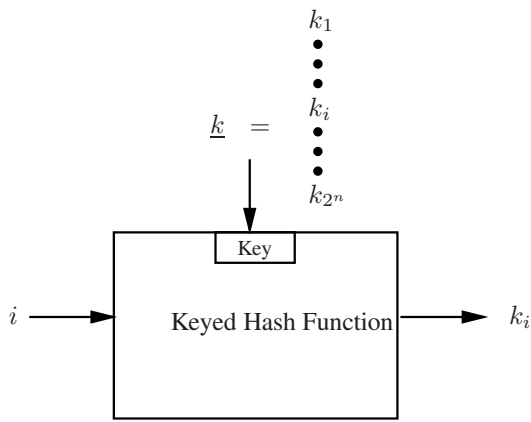}\caption{The definition of the keyed hash function that we analyze in this paper. When the input equals $i$, the output is equal to the $i$th segment of the key, that is, $k_{i}$. Essentially, this is a strongly universal set of hash functions.}\label{fig:Keyed_Hash_Def}
\end{figure}

\begin{rem}
Note that the set of hash functions presented in Definition \ref{def:TheSetofPhHashFunctions} is a subset of a strongly universal set of hash functions.
\end{rem}

\begin{rem}
Note that the mapping defined in equations \eqref{eq:KeySegmentation}, \eqref{eq:USHF} from the set of keys to the set of all possible hash functions, is a strongly universal set of hash functions. Any other mapping that is a bijection from the set of keys to the set of all possible hash function also constitutes a strongly universal set of hash functions. Essentially, in section \ref{sec:Discussion} we show that this mapping along with a biased key achieves better performance than an unbiased key with any mapping that is a bijection (see Corollary  \ref{cor:RelationBetweenRateOfAverageGuessworkBiasandUnbias} for the average guesswork of an unbiased key and any mapping that is a bijection). A biased key with other mappings may also lead to better performance than the one achieved with an unbiased key; however, we have been able to quantify the average guesswork for the mapping defined in equations \eqref{eq:KeySegmentation}, \eqref{eq:USHF}.
\end{rem}

In terms of the knowledge the attacker has under a remote and an insider attacks defined in Definition \ref{def:AnOnlineAttackModel} and Definition \ref{def:AnOfflineAttackModel} respectively, when the system uses the universal set of hash functions defined in equations \eqref{eq:KeySegmentation}, \eqref{eq:USHF}, we make two additional assumptions: The key is drawn i.i.d. Bernoulli$\pa{p}$; and the attacker knows the distribution according to which the key is drawn, but it does not know the actual realization of the key.

\subsection{The Guesswork of Each Bin}\label{subsec:GWEachBin}
In this subsection we derive the guesswork for each bin, when the key is i.i.d. Bernoulli$\pa{p}$. Assume a strongly universal set of hash functions as defined equations \eqref{eq:KeySegmentation}, \eqref{eq:USHF}. Note that although in Definition \ref{def:PasswordsAllocation} we describe procedure according to which bins are allocated, this procedure is different from the backdoor mechanism which is given in Definition \ref{def:backdoor}.

\begin{defn}[Bins allocation]\label{def:PasswordsAllocation}
Assume there are $\lfloor 2^{H\pa{s}\cdot m-1}\rfloor$ users, where $s\in\left[1/2,1\right]$. For each user a procedure allocates a different integer $b\in\left\{1,\dots,2^{m}\right\}$ according to the following method: The first user receives the least likely bin (i.e., the all ones bin), the second user receives the second least likely bin, whereas the last user receives the $\lfloor 2^{H\pa{s}\cdot m-1}\rfloor$th least likely bin. Then, assuming $L_{b}$ passwords are mapped to bin $b$, the procedure uniformly draws a password from $\left\{1,\dots,|L_{b}|\right\}$. In the case when $|L_{b}|=0$, the bin is not mapped to any password and we assume that the average guesswork is equal to zero.
\end{defn}

A remark is in order regarding the case where a user has no password.
\begin{rem}
Note that the backdoor mechanism presented in Definition \ref{def:backdoor} does not allow a scenario where a user has no password. As shown in Theorem \ref{th:backdoor}, this does not affect the average guesswork.
\end{rem}

The next lemma proves that the order according to which the attacker guesses passwords, does not affect the average guesswork.
\begin{lem}\label{lem:AnyStrategySameCrackwork}
For any bin $b$, the average guesswork is not affected by the order according to which the attacker guesses passwords one by one, when averaging over all possible keys.
\end{lem}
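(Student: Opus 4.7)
The plan is to exploit the i.i.d. structure of the key segments under the mapping defined in \eqref{eq:KeySegmentation}, \eqref{eq:USHF}. Recall that there $\udl{k}=\{k_{1},\dots,k_{2^{n}}\}$ with each segment $k_{i}$ being an independent $m$-bit block drawn i.i.d. Bernoulli$\pa{p}$, and $H_{g\pa{\udl{k}}}\pa{i}=k_{i}$. For a fixed bin $b$, the events $\pac{k_{i}=b}$ indexed by $i\in\pac{1,\dots,2^{n}}$ are therefore i.i.d. Bernoulli$\pa{\ph}$ with $\ph=p^{w\pa{b}}\pa{1-p}^{m-w\pa{b}}$, where $w\pa{b}$ is the Hamming weight of $b$.

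A guessing strategy is a permutation $\pi=\pa{i_{1},i_{2},\dots,i_{2^{n}}}$ of the inputs; under this strategy the guesswork is
\beqn{}{
G_{\pi}\pa{b}=\min\pac{j\;:\;k_{i_{j}}=b}.
}
The key observation is that for \emph{any} permutation $\pi$ the sequence $\pa{k_{i_{1}},k_{i_{2}},\dots,k_{i_{2^{n}}}}$ is just a relabeling of i.i.d. random variables, hence is itself i.i.d. Bernoulli$\pa{\ph}$ on the $m$-bit blocks. Therefore $G_{\pi}\pa{b}$ is, as a random variable in $\udl{k}$, distributed as the index of the first success in a sequence of i.i.d. Bernoulli$\pa{\ph}$ trials, truncated at $2^{n}$. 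Since this distribution depends only on $\ph$ and $2^{n}$, and not on $\pi$, we get $E\pa{G_{\pi}\pa{b}}=E\pa{G_{\pi^{\prime}}\pa{b}}$ for every pair of strategies $\pi,\pi^{\prime}$, which proves the claim.

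The proof is essentially a one-line symmetry argument, so the ``main obstacle'' is merely to be careful with one bookkeeping point: the attacker cannot re-guess an already-tried input, so $\pi$ is genuinely a permutation rather than a sequence with repetition. But because the coordinates of $\udl{k}$ are mutually independent (not just exchangeable), sampling without replacement from the index set yields exactly the same joint law as sampling with replacement would, so the geometric-type distribution of the hitting time is preserved. One can also phrase this more formally: for any strategy $\pi$ and any integer $j\le 2^{n}$,
\beqn{}{
P\pa{G_{\pi}\pa{b}=j}=\pa{1-\ph}^{j-1}\ph,
}
and this expression is manifestly independent of $\pi$; averaging over the key then gives an average guesswork that likewise does not depend on the order in which passwords are guessed.
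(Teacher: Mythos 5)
Your proposal is correct and is essentially the paper's own argument: the paper also proves this lemma by a symmetry argument over the key realizations, merely stating that since the attacker does not know the realizations, every guessing order yields the same average when averaging over all keys. You simply make that symmetry explicit -- the key segments $k_{i}$ are i.i.d., so any permutation of the inputs leaves the joint law unchanged and the hitting time of bin $b$ is the same truncated geometric distribution (with parameter $P_{K}\pa{b}$, which is the quantity you denote $\ph$) for every strategy, which is exactly the distribution the paper later uses in Theorem \ref{th:AverageNumberOfGuessesForEachBin}.
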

\begin{proof}
Consider a certain bin $b\in\pac{1,\dots,2^{m}}$. The attacker knows the probability mass function according to which the key is drawn. It also knows the set of bins allocated to the $\lfloor 2^{H\pa{s}\cdot m-1}\rfloor$ users, as presented in Definition \ref{def:PasswordAllocationwithNoDetails}. However, it does not know the actual realizations, and therefore the mapping from passwords to bins. Hence, from symmetry arguments, any strategy according to which the attacker guesses passwords one by one results in the same average guesswork, when averaging over all possible keys.
\end{proof}

Based on Lemma \ref{lem:AnyStrategySameCrackwork} we assume without loss of generality that the attacker guesses passwords one by one in ascending order. Now, we can derive the average guesswork for any bin $b$.
\begin{theorem}[Average number of guesses for each bin]\label{th:AverageNumberOfGuessesForEachBin}
Assume that the attacker tries to guess a password that is mapped to bin $b$. In this case when $n\ge \pa{1+\eps_{1}}\cdot\log\pa{1/p}\cdot m$, $\eps_{1}>0$, the average guesswork of $\Gbin$ is equal to
\beq{eq:theoremGuessworkofeachBin}
{
E\pa{\Gbin}=2^{m\cdot\pa{H\pa{q\pa{b}}+D\pa{q\pa{b}||p}}}-\pa{1-2^{-m\cdot\pa{H\pa{q\pa{b}}+D\pa{q\pa{b}||p}}}}^{2^{n}}\cdot\pa{2^{m\cdot\pa{H\pa{q\pa{b}}+D\pa{q\pa{b}||p}}}+2^{n}}}
where $q\pa{b}=\frac{N\pa{1|b}}{m}$ is the type of $b$, the key is drawn i.i.d. Bernoulli$\pa{p}$, and the universal set of hash functions is defined in equations \eqref{eq:KeySegmentation}, \eqref{eq:USHF}.
\end{theorem}
\begin{proof}
Based on Lemma \ref{lem:AnyStrategySameCrackwork} and without loss of generality we assume that the attacker guesses passwords one by one in ascending order. Essentially, in order to crack bin $b$ the attacker has to find \emph{a password} that is mapped to this bin. The proof relies on the observation that the chance of success in the $l$th guess (i.e., the first time a password which is mapped to bin $b$ is guessed) is drawn according to the following \emph{geometric distribution}
\beq{eq:geometricdistributiongeneralexpression}
{
P_{K}\pa{b}\cdot\pa{1-P_{K}\pa{b}}^{l-1}\quad 1\le l\le 2^{n}
}
where $P_{K}\pa{\cdot}$ is the probability mass function of a vector of length $m$ that is drawn i.i.d. Bernoulli$\pa{p}$. The probability that bin $b$ is not mapped to any password is
\beq{}
{
\pa{1-P_{K}\pa{b}}^{2^{n+1}}.
}
Since $P_{K}\pa{b}\ge p^{m}$, we can state that as long as $n\ge \pa{1+\eps_{1}}\cdot\log\pa{1/p}\cdot m$ where $\eps_{1}>0$, the probability for this event vanishes exponentially fast and therefore does not affect the average guesswork; furthermore, we assume that when this event occurs, it takes zero guesses for the attacker to crack the bin, such that this event does not even add up to the average guesswork. We wish to stress that the ``backdoor'' procedure presented in Definition \ref{def:backdoor} does not allow for a situation where a user does not have a password.

The mean value of \eqref{eq:geometricdistributiongeneralexpression} is equal to
\beq{eq:MeanValueTruncGeometricDist}
{
1/P_{K}\pa{b}-\pa{1-P_{K}\pa{b}}^{2^{n}}\cdot\pa{1/P_{K}\pa{b}+2^{n}}.
}
Assuming bin $b$ is of type $q\pa{b}=\frac{N\pa{1|b}}{m}$, and that the key is drawn i.i.d. Bernoulli$\pa{p}$, we get from Lemma \ref{th:methodoftypesprobth} that
\beq{}
{
1/P_{K}\pa{b}=2^{m\cdot\pa{H\pa{q\pa{b}}+D\pa{q\pa{b}||p}}}.
}
By assigning the equation above in \eqref{eq:MeanValueTruncGeometricDist} we get \eqref{eq:theoremGuessworkofeachBin}.
\end{proof}

Now, we show that the average guesswork converges uniformly to $2^{m\cdot\pa{H\pa{q\pa{b}}+D\pa{q\pa{b}||p}}}$, $\forall b$.

\begin{cor}\label{cor:MeanValueUniformlyConverges}
The average guesswork uniformly converges to $1/P_{K}\pa{b}$ across bins. The difference is upper bounded by the following term.
\bal{eq:TheConvergesRatetooneoverPForEachBin}
{
|2^{m\cdot\pa{H\pa{q\pa{b}}+D\pa{q\pa{b}||p}}}-E\pa{\Gbin}|\le \pa{2^{m\cdot\pa{H\pa{q\pa{b}}+D\pa{q\pa{b}||p}}}+2^{n}}\cdot e^{-2^{n-m\cdot\pa{H\pa{q\pa{b}}+D\pa{q\pa{b}||p}}}}\nonumber\\
\le\pa{2^{m\log\pa{1/p}}+2^{n}}\cdot e^{-2^{n-m\log\pa{1/p}}} \quad\forall b
}
\end{cor}
\begin{proof}
we wish to upper bound the term $\pa{1-2^{-m\cdot\pa{H\pa{q\pa{b}}+D\pa{q\pa{b}||p}}}}^{2^{n}}\cdot\pa{2^{m\cdot\pa{H\pa{q\pa{b}}+D\pa{q\pa{b}||p}}}+2^{n}}$ for any $b\in\pac{1,\dots,2^{m}}$ in order to show that it converges uniformly when $n\ge \pa{1+\eps_{1}}\cdot\log\pa{1/p}$, $\eps_{1}>0$. We begin by using the inequality
\beq{eq:PolynomialUpperBound}
{
\pa{1-\frac{1}{x}}^x\le e^{-1}\quad\forall x>1
}
in order to get
\beq{eq:uniformboundpolynomial}
{
\pa{1-2^{-m\cdot\pa{H\pa{q\pa{b}}+D\pa{q\pa{b}||p}}}}^{2^{n}}\le e^{-2^{n-m\cdot\pa{H\pa{q\pa{b}}+D\pa{q\pa{b}||p}}}}.
}
Furthermore, since the least likely type $q\pa{b_{0}}=1$ occurs with probability $p^{m}=2^{-m\log\pa{1/p}}$, we get
\beq{}
{
H\pa{q}+D\pa{q||p}\leq\log\pa{1/p}\quad 0\le q\le 1
}
with equality at q=1. Therefore, we get that $2^{-m\cdot\pa{H\pa{q\pa{b}}+D\pa{q\pa{b}||p}}}\ge 2^{-m\log\pa{1/p}}$ and
$2^{m\cdot\pa{H\pa{q\pa{b}}+D\pa{q\pa{b}||p}}}\le 2^{m\log\pa{1/p}}$. By assigning these terms along with the inequality in equation \eqref{eq:uniformboundpolynomial} we get the desired result.
\end{proof}

The next theorem shows that the guesswork concentrates around its mean value.
\begin{theorem}[Concentration]\label{th:Concentration}
The probability of finding a password that is mapped to bin $b$ in less than $2^{m\cdot l}$ guesses, where $l\le n$, is upper bounded by
\beq{eq:TheRateAtWhichTheConcentrationDecreases}
{
P\pa{\Gbin\le 2^{m\cdot l}}\le 1-e^{-2\cdot 2^{-\pa{H\pa{q\pa{b}}+D\pa{q\pa{b}||p}-l}\cdot m}}.
}
Therefore, whenever $l<H\pa{q\pa{b}}+D\pa{q\pa{b}||p}$ the probability of success in less than $2^{m\cdot l}$ attempts decays to zero.
\end{theorem}
\begin{proof}
Based on the fact that the guesswork has a geometric distribution, we get that the chance of success within $2^{m\cdot l}$ guesses is equal to
\beq{}
{
P\pa{\Gbin\le 2^{m\cdot l}}=P_{K}\pa{b}\cdot\sum_{i=1}^{2^{m\cdot l}}\pa{1-P_{K}\pa{b}}^{i-1}=1-\pa{1-P_{K}\pa{b}}^{2^{m\cdot l}}.}
The upper bound for $P\pa{\Gbin\le 2^{m\cdot l}}$ is obtained by assigning $P_{K}\pa{b}=2^{-m\pa{H\pa{q\pa{b}}+D\pa{q\pa{b}||p}}}$ to the following inequality
\beq{}
{
e^{-x}\le{1-\frac{x}{2}}\quad x\in\left[0,1.58\right].
}
Therefore, when $l<H\pa{q\pa{b}}+D\pa{q\pa{b}||p}$ the probability of finding a password that is mapped to bin $b$ goes to zero.
\end{proof}

\begin{rem}
Note that although the number of passwords scales exponentially with $n$ (i.e., there are $2^{n}$ passwords), the average number of guesses scales with the size of the bins $m$. This is due to the fact that cracking a password requires finding \emph{a password} which is mapped to the same bin as the true password.
\end{rem}

\subsection{The Average Guesswork for Cracking a Password of a User under Bins Allocation}\label{subsec:GWCentralized}

We later use the solution to the following optimization problem, in order to derive the optimal average guesswork for cracking a password of a user.
\begin{lem}\label{lem:OptimizationProblemForMaximalArgument}
\beq{eq:OptimizationProblemIncludingTypes}
{
\max_{0\le q\le 1}2\cdot H\pa{q}+D\pa{q||p}=2\cdot H\pa{p}+D\pa{1-p||p}
}
where the optimal solution occurs at $q=1-p$. Furthermore, $2\cdot H\pa{p}+D\pa{1-p||p}$ is a positive and unbounded function that monotonically increases as $p$ decreases.
\end{lem}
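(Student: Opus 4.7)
The plan is to treat $f(q) := 2H(q) + D(q\|p)$ as a smooth, strictly concave function of $q \in [0,1]$, locate its unique maximizer by a first-order condition, substitute back to obtain the claimed identity, and then verify positivity, monotonicity, and unboundedness of $g(p) := 2H(p) + D(1-p\|p)$ by a direct one-variable derivative computation on $(0,1/2]$.

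First I would rewrite $f$ using the binary cross-entropy identity $H(q) + D(q\|p) = -q\log p - (1-q)\log(1-p)$, giving $f(q) = H(q) - q\log p - (1-q)\log(1-p)$. The last two summands are affine in $q$ and $H$ is strictly concave on $[0,1]$, so $f$ is strictly concave and its maximum is attained at the unique interior zero of $f'$. A direct computation yields $f'(q) = \log\bigl((1-q)(1-p)/(qp)\bigr)$, whose zero satisfies $(1-q)(1-p) = qp$, i.e.\ $q = 1-p$. Substituting back and using the symmetry $H(1-p) = H(p)$ yields $\max_{q} f(q) = 2H(p) + D(1-p\|p)$, as claimed.

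For the structural properties of $g$, I would rewrite it as $g(p) = H(p) - (1-p)\log p - p\log(1-p)$ and differentiate. Using $H'(p) = \log\bigl((1-p)/p\bigr)$, the logarithmic contributions produced by differentiating the two cross-terms cancel $H'(p)$ exactly, leaving only the rational residue $g'(p) = (2p-1)/\bigl(p(1-p)\ln 2\bigr)$. For $p \in (0,1/2]$ this is non-positive, so $g$ is non-increasing in $p$ on that interval, i.e.\ strictly increasing as $p$ decreases. Unboundedness then follows from the asymptotics at $p \to 0^{+}$: the term $-(1-p)\log p$ diverges while the remaining summands vanish, so $g(p) \to +\infty$; positivity is immediate from monotonicity together with the finite endpoint value $g(1/2) = 2 > 0$.

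The only mildly delicate step is the cancellation in $g'$: one must line up $H'(p) = \log\bigl((1-p)/p\bigr)$ against the logs $\log p - \log(1-p)$ arising from differentiating $-(1-p)\log p - p\log(1-p)$ and observe that the logarithms cancel completely, leaving only the rational pieces $-(1-p)/p$ and $p/(1-p)$ whose sum is $(2p-1)/\bigl(p(1-p)\bigr)$. Once this cancellation is in hand, concavity plus the first-order condition deliver the identity, and the sign of $g'$ together with the $p \to 0^{+}$ asymptotics deliver the remaining claims.
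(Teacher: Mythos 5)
Your proof of the maximization identity follows essentially the same route as the paper: rewrite $2H(q)+D(q\|p)$ as $H(q)$ plus an affine (cross-entropy) term, invoke concavity, and solve the first-order condition $\log\bigl((1-q)(1-p)/(qp)\bigr)=0$ to get $q=1-p$. Your argument is correct, and it additionally supplies a clean verification of the ``Furthermore'' clause (positivity, unboundedness, and monotonicity in $p$ via $g'(p)=(2p-1)/\bigl(p(1-p)\ln 2\bigr)$), which the paper asserts but does not actually prove.
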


\begin{proof}
First let us break down the expression in \eqref{eq:OptimizationProblemIncludingTypes}.
\beq{}
{
2\cdot H\pa{q}+D\pa{q||p}=H\pa{q}+q\cdot\log\pa{1/p}+\pa{1-q}\cdot\log\pa{1/\pa{1-p}}.
}
The above expression is a concave function as a function of $q$ as it is the summation of two concave functions; hence, this function has a maximal value. By calculating the first derivative and making it equal to zero we get
\beq{}
{
\log\pa{\frac{1-q}{q}}=\log\pa{\frac{p}{1-q}}.
}
Equality holds only when $q=1-p$. By assigning it to the expression on the left hand side in \eqref{eq:OptimizationProblemIncludingTypes} we get the desired result.
\end{proof}

We now quantify the number of mappings from passwords to each bin.
\begin{lem}
Following the definition in  equations \eqref{eq:KeySegmentation}, \eqref{eq:USHF}, assume that $k_{i}$ is the mapping from the $i$th password to the range of the hash function, where $k_{i}\in\pac{1,\dots,2^{m}}$ and $i\in\pac{1,\dots, 2^{n}}$. In this case
\beq{eq:ChebyshevsInequalityForTypes}
{
P\pa{|2^{-n}\cdot \sum_{i=1}^{2^{n}}\mathbbm{1}_{b}\pa{k_{i}}-P_{K}\pa{b}|>\epsilon}\le\frac{1}{2^{n}\cdot\epsilon^{2}}
}
where $\mathbbm{1}_{b}\pa{x}=\barr{cc}{1 &x=b\\ 0 &x\neq b}$.
\end{lem}
\begin{proof}
Essentially $1_{b}\pa{k_{i}}$ is a random variable whose mean value is equal to $P_{K}\pa{b}$ and variance value is smaller than $1$. Furthermore, $1_{b}\pa{k_{i}}$, $i\in\pac{1,\dots,2^{n}}$ are all i.i.d. random variables. Therefore, we get \eqref{eq:ChebyshevsInequalityForTypes} from Chebyshev's inequality.
\end{proof}

Now we can derive the maximal average guesswork.
\begin{theorem}[The average number of guesses]\label{th:TheAverageGuesseork}
Let us denote the number of users by $\lfloor 2^{H\pa{s}\cdot m-1}\rfloor$, where $s\in\paq{1/2,1}$ is a parameter. The exponential rate at which the optimal average guesswork increases as a function of $m$ is equal to
\beq{eq:AverageGuessworkTheMostCopactEpression}
{
\lim_{m\to\infty}\frac{1}{m}\log\pa{E\pa{\Gon}}=\barr{cc}{H\pa{s}+D\pa{s||p} &\pa{1-p}\le s\le 1\\ 2\cdot H\pa{p}+D\pa{1-p||p}-H\pa{s} &1/2\le s\le \pa{1-p} }
}
where $n\ge\pa{1+\eps_{1}}\cdot m\cdot \log\pa{1/p}$, $\eps_{1} >0$.
\end{theorem}
\begin{proof}
The general idea behind the proof is to write the average in terms of summation over types, and then use the fact that a vector of length $m$ has up to $m$ types in order to bound the average by the most dominant term among the $m$ types.

First let us define the set of all possible bins $\pac{b_{1},\dots,b_{2^{m}}}$ such that $P_{K}\pa{b_{i}}\ge P_{K}\pa{b_{j}}$ if and only if $j<i$; therefore, according to Definition \ref{def:PasswordAllocationwithNoDetails} $b_{1}$ is assigned to the first user, whereas $b_{2^{H\pa{s}\cdot m}}$ goes to the $2^{H\pa{s}\cdot m-1}$th user. The average guesswork when the attacker chooses a user name to attack, uniformly from $\pac{1,\dots, 2^{H\pa{s}\cdot m-1}}$, is equal to
\beq{eq:AverageGuessStraightForwardExpression}
{
E\pa{\Gon}=2^{-H\pa{s}\cdot m+1}\sum_{i=1}^{2^{H\pa{s}\cdot m-1}}\frac{1}{p_{K}\pa{b_{i}}}-\eps\pa{m,n,q\pa{b_{i}}}
}
where $\eps\pa{m,n,q\pa{b_{i}}}=\pa{2^{m\cdot\pa{H\pa{q\pa{b}}+D\pa{q\pa{b}||p}}}+2^{n}}\cdot e^{-2^{n-m\cdot\pa{H\pa{q\pa{b}}+D\pa{q\pa{b}||p}}}}$ in accordance with Corollary \ref{cor:MeanValueUniformlyConverges}. Note that the term above takes into consideration the event when a mapping to $b$ does not occur within the $2^{n}$ inputs; this probability is multiplied by zero guesses, which leads to the expression in equation \eqref{eq:AverageGuessStraightForwardExpression}.

Based on the method of types \cite{CoverBook} every bin in $\pac{b_{1},\dots,2^{H\pa{s}\cdot m-1}}$, which is the set of bins that are allocated to the $2^{H\pa{s}\cdot m-1}$ users, is of type $s\le q\pa{b_{i}}\le 1$, where $1\le i\le 2^{H\pa{s}\cdot m-1}$, and $q\pa{b_{i}}\ge q\pa{b_{j}}$ if and only if $i<j$. Therefore, from Lemma \ref{th:methodoftypesprobth} and Lemma \ref{th:methodoftypessize} we can rewrite the summation such that is goes across types, such that we can bound the average guesswork by the following terms
\beq{eq:TheSummationWhenTruncatingTheMinimalValue}
{
\frac{1}{\pa{m+1}^{2}}2^{-H\pa{s}\cdot m+1}\max_{{s\le q\le 1}}\pa{2^{m\cdot\pa{2\cdot H\pa{q}+D\pa{q||p}}}-\eps\pa{m,n,q},0}\le E\pa{\Gon}\le 2^{-H\pa{s}\cdot m+1}\sum_{s\le q\le 1}2^{m\cdot\pa{2\cdot H\pa{q}+D\pa{q||p}}}.
}
The number of types is equal to the size of each bin, and therefore there are only $m$ types; furthermore, the elements in the summations above all are non-negative. Hence, when $n\ge\pa{1+\eps_{1}}\cdot m\cdot \log\pa{1/p}$ and $m\gg 1$ the following terms bound the average guesswork.
\beq{eq:TheBoundWithMaximalValueOnly}
{
\frac{2^{-H\pa{s}\cdot m+1}}{\pa{m+1}^{2}}\cdot 2^{m\cdot\max_{s\le q\le 1}\pa{ 2\cdot H\pa{q}+D\pa{q||p}}}\le E\pa{\Gon}\le m\cdot 2^{-H\pa{s}\cdot m+1}\cdot 2^{m\cdot \max_{s\le q\le 1}\pa{2\cdot H\pa{q}+D\pa{q||p}}}.
}
We are interested in the rate at which the guesswork grows, which based on the above bound is equal to
\beq{eq:OptimizationProblemForTheAverageGuesswork}
{
\lim_{n\to\infty}\frac{1}{m}\log\pa{E\pa{\Gon}}=-H\pa{s}+\max_{s\le q\le 1}\pa{2\cdot H\pa{q}+D\pa{q||p}}.
}
The function $2\cdot H\pa{q}+D\pa{q||p}$ is concave as a function of $q$, with a maximal value at $q=1-p$ as shown in Lemma \ref{lem:OptimizationProblemForMaximalArgument}. Therefore, when $\pa{1-p}\le s\le 1$ the maximal value is obtained at $q=s$; whereas when $1/2\le s\le \pa{1-p}$ the maximal value $2\cdot H\pa{1-p}+D\pa{q||p}=2\cdot H\pa{p}+D\pa{q||p}$ is achieved for $q=1-p$. Note that $n\ge\log\pa{1/p}\ge H\pa{q}+D\pa{q||p}$ for $0\le q\le 1$ which leads to the elimination of $\eps\pa{n,m,q}$. These arguments conclude the proof.
\end{proof}

We now present a backdoor mechanism that enables to modify a hash function efficiently without decreasing the average guesswork.

\begin{rem}[A Back door Mechanism for Allocating bins]
Essentially, in order for a procedure to allocate bins to the users, it first allocates a bin to a user, and then maps the password of this user to this bin; furthermore, there is a certain chance that because of the key realization, there is no mapping to a specific bin. This operation is as exhaustive as cracking the bin itself. In order for the system to allocate bins efficiently it has to use a \emph{backdoor} that enables it to allocate bins without the need to crack the hash function as well as without compromising the security of the system (i.e., maintaining the same average guesswork). In Definition \ref{def:backdoor} and Theorem \ref{th:backdoor} we present a back door mechanism that satisfies both of these requirements.
\end{rem}

Now we define a back door mechanism that enables to plant mappings in an efficient way without compromising the security level of the hash function.
\begin{defn} \label{def:backdoor}
Given a key $K=\pac{k_{1}, \dots, k_{2^{n}}}$ where $k_{i}\in\pac{1,\dots, 2^{m}}$ and $1\le i\le 2^{n}$, and a strongly universal set of hash functions as defined in equations \eqref{eq:KeySegmentation}, \eqref{eq:USHF}, the back door mechanism for allocating bins is defined as follows.
\bitm
{
\item For each user choose a different bin according to the procedure described in Definition \ref{def:PasswordAllocationwithNoDetails} (i.e., the first user gets the least likely bin $b_{1}$, the second user receives the second least likely bin $b_{2}$, and the $2^{H\pa{s}\cdot m -1}$th user receives $b_{2^{H\pa{s}\cdot m -1}}$).
\item Each of the users from the first to the $2^{H\pa{s}\cdot m -1}$th draws a password uniformly by drawing $n$ bits i.i.d. Bernoulli$\pa{1/2}$.
\item For each user, if the number that was drawn $g\in\pac{1,\dots,2^{n}}$ has not been drawn by any of the users that have bins that are less likely than the current bin, then replace $k_{g}$ with the bin number allocated to the user.
\item In the case when another user who is coupled with a less likely bin, has drawn $g$: \emph{Do not change} the value of the key again (i.e., $k_{g}$ is changed only once by the user who is coupled with the least likely bin among the bins of the users whose password is $g$). Instead, change the bin allocated to the user to the least likely bin among the bins allocated to users whose password is $g$. Therefore, $k_{g}$ is changed only once, to the least likely bin that is mapped to $g$.
}
\end{defn}
Figure \ref{fig:Back_Door} illustrates the back door mechanism.

\begin{figure}[h!]
\input{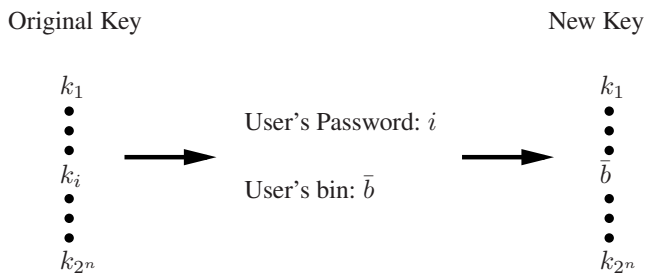}\caption{The backdoor mechanism. When the password of the user is equal to $i$, and no other user who is coupled with a bin that is less likely than the bin of the user, has come up with this password, the $i$th segment of the key is replaced with the bin that is coupled with the user. In the case when another user who is coupled with a less likely bin, has already come up with the exact same password, the more likely bin of the user (and not the key) changes to the bin of the other user.}\label{fig:Back_Door}
\end{figure}

\begin{theorem}\label{th:backdoor}
The optimal average guesswork when allocating bins using the back door mechanism given in Definition \ref{def:backdoor} is the same as the guesswork from Theorem \ref{th:TheAverageGuesseork} and is equal to
\beq{}
{
\lim_{m\to\infty}\frac{1}{m}\log\pa{E\pa{\Gon}}=\barr{cc}{H\pa{s}+D\pa{s||p} &\pa{1-p}\le s\le 1\\ 2\cdot H\pa{p}+D\pa{1-p||p}-H\pa{s} &1/2\le s\le \pa{1-p} }
}
\end{theorem}
when $\pa{1+\eps_{1}}\cdot m\cdot \log\pa{1/p}$, $\eps_{1}>0$.
\begin{proof}
The general idea behind the proof is that the back door procedure described in Definition \ref{def:backdoor} cannot add more than one mapping to each bin. Therefore, given any strategy of guessing passwords one by one, we show that the chance of drawing a password (in the backdoor procedure) that decreases the number of guesses bellow the average guesswork vanishes exponentially fast.

First, a few words are in order regarding the effect of the backdoor procedure presented in Definition \ref{def:backdoor}. The procedure cannot add more than one mapping to each bin. This is due to the fact that when a password that is mapped to a specific bin is drawn, then the mapping is changed to the bin of the user for which the password was drawn. Furthermore, if the same password is drawn more than once by several users, then the mapping does not change again; at worst more than one user is mapped to the same bin (the probability that two users have the same password is the same as the probability of a collision, and is equal to $2^{-n}$). When a collision occurs, the mapping to the least likely bin among the bins that are coupled with these users, is the one assigned to all these users. Therefore, collision does not decrease the average guesswork.

Following what we have discussed above, for any strategy we are interested in the probability of drawing a password which is guessed within a smaller number of attempts than the average guesswork (i.e., given any strategy of guessing passwords one by one, what is the probability that the backdoor mechanism decreases the number of guesses bellow the average guesswork). When this event occurs we assume that the number of guesses required to break the strongly universal set of hash functions is zero. When there are $2^{H\pa{s}\cdot m-1}$ users, the set of type $q\pa{b}=\max\pa{s,1-p}$ bounds the average guesswork as shown in equation \eqref{eq:TheBoundWithMaximalValueOnly}. Therefore, it is sufficient to focus on the bins of this type; as even when the guesswork of bins of other type is equal to zero, the average guesswork remains the same.

Without loss of generality let us focus on the case when passwords are guessed one by one in ascending order; the following argument holds for any other strategy of guessing passwords. For every user, the backdoor mechanism draws a password uniformly. When $n\ge\pa{1+\eps_{1}}\cdot m\cdot \log\pa{1/p}$ as defined in Theorem \ref{th:TheAverageGuesseork}, the probability of drawing a password that is guesses within a smaller number of attempts than $\pa{1+\eps_{2}}\cdot m\cdot \log\pa{1/p}$ where $0<\eps_{2}<\eps_{1}$ is
\beq{}
{
\frac{2^{\pa{1+\eps_{2}}\cdot m\cdot \log\pa{1/p}}}{2^{n}} \le 2^{-\pa{\eps_{1}-\eps_{2}}\cdot\log\pa{1/p}\cdot m}.
}
Thus, for each bin of type $q\pa{b}=\max\pa{s,1-p}$, the average guesswork is multiplied by a factor $\pa{1-2^{-\pa{\eps_{1}-\eps_{2}}\cdot\log\pa{1/p}\cdot m}}$ that approaches $1$ exponentially fast as $m$ increases. Therefore, the average guesswork is not affected by the back door mechanism. The only difference compared to Theorem \ref{th:TheAverageGuesseork} is that now the decay of $\eps\pa{n,m,1-p}$ in is dictated by $\eps_{2}$ instead of $\eps_{1}$.
\end{proof}

\begin{rem}
Note that the procedure for allocating bins which is described in Definition \ref{def:backdoor}, does not require knowledge of the key in order to achieve the average guesswork of Theorem \ref{th:backdoor}. The back door mechanism requires only knowledge of which bins are the least likely to occur based on the probability mass function according to which the key is drawn.
\end{rem}

\subsection{The Average Guesswork when Bins are not Allocated to Users}\label{subsec:distributedpasswordallocation}
In this subsection we show that when the users choose their own passwords, without changing the hash function accordingly, the average guesswork of the strongly universal set of hash functions (averaged over all passwords) is equal to $2^{m}$ for any $p$.

\begin{defn}[No bins allocation]\label{def:DIstributedPasswordsAllocation}
When bins are not allocated to users each user chooses his own password, without modifying the hash function. The server stores the bin to which a password is mapped.
\end{defn}

\begin{cor}[The Average Guesswork of the strongly universal set of hash functions with no bins Allocation]\label{Cor:NoBinsAllocationAveragedOverAllPasswords}
When each user chooses the password uniformly over $\pac{1,\dots, 2^{n}}$, the average guesswork \emph{when averaging over the passwords}, increases at rate that is equal to
\beqn{}
{
\lim_{m\to\infty}\frac{1}{m}\log\pa{E\pa{G_{D}\pa{B}}}=1
}
where $G_{D}\pa{B}$ is the guesswork of any user, $n\ge\pa{1+\eps_{1}}\cdot m\cdot\log\pa{1/p} $, and the hash function is defined in subsection \ref{subsec:SUSHF} equations \eqref{eq:KeySegmentation}, \eqref{eq:USHF}.
\end{cor}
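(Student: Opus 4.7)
The plan is to exploit the fact that when the password $i^\ast$ is drawn uniformly from $\{1,\dots,2^n\}$ and the key segments $k_1,\dots,k_{2^n}$ are i.i.d.\ Bernoulli$\pa{p}^m$, the bin $b=k_{i^\ast}$ on which the user is stored is itself a single Bernoulli$\pa{p}^m$ vector, so $b$ has marginal law $P_K$ and, crucially, the remaining segments $\pac{k_j:j\neq i^\ast}$ are, conditional on $k_{i^\ast}=b$, still i.i.d.\ Bernoulli$\pa{p}^m$ and independent of $b$. This reduces the computation of the conditional expected guesswork to the setting already analyzed in Theorem~\ref{th:AverageNumberOfGuessesForEachBin}.

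First I would fix a bin $b$ and any strategy of guessing passwords (equivalently, a permutation $g_1,\dots,g_{2^n}$ of $\pac{1,\dots,2^n\}$) and compute $E\pa{\gw{b}\mid k_{i^\ast}=b}$. The guesswork is the smallest index $l$ for which $k_{g_l}=b$. Exactly one of the visited positions equals $i^\ast$ and produces a forced hit, while the other positions yield independent successes with probability $P_K(b)$. Because $1/P_K(b)\le 2^{m\log\pa{1/p}}$ and the hypothesis $n\ge\pa{1+\eps_{1}}\cdot m\log\pa{1/p}$ makes $2^n$ double-exponentially larger than $1/P_K(b)$, the forced hit at $i^\ast$ is reached only after a genuine geometric success occurs, with overwhelming probability. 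Hence the truncated-geometric expression of Theorem~\ref{th:AverageNumberOfGuessesForEachBin} continues to apply and
\[
E\pa{\gw{b}\mid k_{i^\ast}=b}=\frac{1}{P_K(b)}-\eps\pa{m,n,q\pa{b}},
\]
with the uniform bound on $\eps\pa{m,n,q\pa{b}}$ supplied by Corollary~\ref{cor:MeanValueUniformlyConverges}.

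Next I would integrate against the marginal distribution of $b$:
\[
E\pa{G_D\pa{B}}=\sum_{b}P_K(b)\cdot E\pa{\gw{b}\mid k_{i^\ast}=b}=2^m-\sum_{b}P_K(b)\,\eps\pa{m,n,q\pa{b}}.
\]
By Corollary~\ref{cor:MeanValueUniformlyConverges} the correction is at most $\pa{2^{m\log\pa{1/p}}+2^n}\cdot e^{-2^{n-m\log\pa{1/p}}}$ uniformly in $b$, which is doubly-exponentially small compared to $2^m$ under the input-size hypothesis. Taking logarithms and dividing by $m$ yields $\lim_{m\to\infty}\frac{1}{m}\log\pa{E\pa{G_D\pa{B}}}=1$.

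The main obstacle is the conditional-reduction step: one must verify that the presence of a forced self-hit at the random position of $i^\ast$ in the guessing sequence does not alter the asymptotics. Concretely, writing the conditional guesswork as $\min\pa{l_\ast,L}$, where $l_\ast$ is the rank of $i^\ast$ in the strategy and $L$ is the first geometric success among the other positions, the hypothesis $n\ge\pa{1+\eps_1}m\log\pa{1/p}$ ensures $E(L)\ll 2^n$ while $l_\ast$ is of order $2^n$, so $E\pa{\min\pa{l_\ast,L}}=E(L)\cdot\pa{1+o(1)}$; this is exactly what is needed for the truncated-geometric formula of Theorem~\ref{th:AverageNumberOfGuessesForEachBin} to carry over to the averaged-over-passwords setting.
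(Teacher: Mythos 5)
Your proposal follows essentially the same route as the paper's proof: decompose over the bin $b$ to which the user's uniformly drawn password is mapped (an event of probability $P_{K}\pa{b}$), apply the truncated-geometric formula of Theorem~\ref{th:AverageNumberOfGuessesForEachBin} together with the uniform bound of Corollary~\ref{cor:MeanValueUniformlyConverges} to get $E\pa{\gw{b}}\approx 1/P_{K}\pa{b}$, and observe that the weighted sum telescopes to $2^{m}$ up to a negligible correction. The only place you go beyond the paper is the explicit check that conditioning on $k_{i^{\ast}}=b$ (the forced self-hit) does not alter the asymptotics --- the paper invokes Theorem~\ref{th:AverageNumberOfGuessesForEachBin} unconditionally without comment --- and your $\min\pa{l_{\ast},L}$ argument is a sound way to close that small gap.
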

\begin{proof}
From Theorem \ref{th:AverageNumberOfGuessesForEachBin} and equation \eqref{eq:MeanValueTruncGeometricDist} we can state that the average guesswork for each bin is equal to
\beq{}
{
E\pa{G_{D}\pa{b}}=1/P_{K}\pa{b}-\pa{1-P_{K}\pa{b}}^{2^{n}}\cdot\pa{1/P_{K}\pa{b}+2^{n}}\quad b\in\pac{1,\dots, 2^{m}}.
}
From Corollary \ref{cor:MeanValueUniformlyConverges} we know that the right hand side of the equation above goes to zero when $n\ge\pa{1+\eps_{1}}\cdot m\cdot\log\pa{1/p} $.

When a user chooses a password uniformly over $\pac{1,\dots,2^{n}}$, and the key is drawn i.i.d. Bernoulli$\pa{p}$, the probability of average guesswork that is equal to $E\pa{G\pa{b}}$ (i.e., the probability of drawing a password that is mapped to $b$) is equal to $P_{K}\pa{b}$. Hence, we get that the average guesswork is equal to
\beq{}
{
E\pa{G_{D}\pa{B}}=\sum_{b=1}^{2^{m}}P_{K}\pa{b}E\pa{G_{D}\pa{b}}
}
which leads to
\beq{}
{
\lim_{m\to\infty}\log\pa{E\pa{G_{D}\pa{B}}}=1.
}
\end{proof}

\begin{rem}
Note that when the least likely bins are allocated to the users (e.g., bins are allocated as in Definition \ref{def:backdoor}), the fact that there is a very small number of mappings to these bins, enables to achieve a better average guesswork with a shorter biased key (as we show in the next section). However, when the bins  are not allocated (i.e., the hash function is not modified), the chance of drawing a password that is mapped to a certain bin is equal to the probability of mapping a password to the very same bin. Therefore, in this case there is a small probability of drawing a bin that has a very small number of passwords mapped to it. This leads to an average guesswork that grows like $2^{m}$.
\end{rem}


\section{Uniform Keys Versus Biased Keys: Size and Storage Requirements}\label{sec:Discussion}

In this section we show that when the average guesswork is larger than the number of users, the minimal size of a biased key that achieves the average guesswork above with a universal set of hash functions defined in equations \eqref{eq:KeySegmentation}, \eqref{eq:USHF}, is smaller than the size of a uniform key (i.e., a key which is i.i.d. Bernoulli$\pa{1/2}$) that achieves the same guesswork for the same number of users, with any mapping from the set of keys to the set of hash functions. Furthermore, we show that the storage space required to store the bins of all users is also significantly smaller. We wish to point out that this holds when the passwords of the users are mapped to the least likely bins, that is, bins allocation as defined in Definition \ref{def:backdoor} is in place.

First, we show that the average guesswork of biased keys is larger than the average guesswork of any strongly universal set of hash functions whose key is drawn i.i.d. according to Bernoulli$\pa{1/2}$, that is, any mappings from the set of keys to the set of hash functions would lead to the same guesswork when the key is uniform.
\begin{cor}\label{cor:RelationBetweenRateOfAverageGuessworkBiasandUnbias}
When $n\ge m$, the guesswork of any strongly universal set of hash functions whose key is i.i.d. Bernoulli$\pa{1/2}$ satisfies
\beq{}
{
\lim_{m\to\infty}\frac{1}{m}\log\pa{G_{remote}^{\pa{u}}\pa{B,f\pa{\cdot}}} = 1\le\lim_{m\to\infty}\frac{1}{m}\log\pa{ \Gon}\quad 0\le s\le 1
}
with equality only when $p=1/2$; where $f\pa{\cdot}$ is a bijection from the set of keys to the set of all possible hash functions. Therefore, whenever $0<p<1/2$, the average guesswork of the strongly universal set of hash functions defined in equations \eqref{eq:KeySegmentation}, \eqref{eq:USHF}, is larger than the one achieved by any mapping $f\pa{\cdot}$ over a balanced key.
\end{cor}
\begin{proof}
When the key is drawn i.i.d. Bernoulli$\pa{1/2}$ the problem of finding the average guesswork can be regarded as a counting problem. Since $f\pa{\cdot}$ is a bijection, the number of functions for which the $i$th password is mapped to the $j$th bin is the same for any $i$ and $j$. Furthermore, when $l$ mappings are revealed, the set of keys decreases by a factor of $2^{l\cdot m}$ and the remaining mappings are still balanced. Therefore, for any bin the chance of cracking a password in the $l$th guess is a geometric distribution that equals to $2^{-m}\cdot \pa{1-2^{-m}}^{l-1}$, $l\ge 0$; this holds whenever $f\pa{\cdot}$ is a bijection. Hence, the rate at which the average guesswork increases when $n\ge\pa{1+\eps}\cdot m$ and $\eps > 0$, is equal to
\beq{}
{
\lim_{m\to\infty}\log\pa{G_{remote}^{\pa{u}}\pa{B,f\pa{\cdot}}} = 1\quad s\in\paq{1/2,1}
}
whereas from Theorem \ref{th:TheAverageGuesseork} we get that
\beq{}
{
\lim_{m\to\infty}\log\pa{ \Gon}\ge 2\cdot H\pa{p}+D\pa{1-p||p}-1\ge 1
}
with equality only when $p=1/2$, where the second inequality results from the fact that $2\cdot H\pa{p}+D\pa{1-p||p}$ decreases as $p$ approaches $1/2$, and is equal to $2$ at $p=1/2$.
\end{proof}

We now decrease the minimal size of the input, $n$, such that when $1/2\le s\le \pa{1-p}$, the average guesswork is smaller than the one in \eqref{eq:AverageGuessworkTheMostCopactEpression}; by decreasing $n$, we later show that in some cases the size of a biased key required to achieve a certain average guesswork, is significantly smaller than the size of an unbiased key that achieves the same guesswork.
\begin{lem}
When the number of users is $2^{H\pa{s}\cdot m-1}$, $n=\pa{1+\eps_{1}}\cdot m\cdot \pa{H\pa{s}+D\pa{s||p}}$ where $1/2\le s\le 1$, and the key is drawn i.i.d. Bernoulli$\pa{p}$, the average guesswork increases at rate that is equal to
\beq{}
{
\lim_{m\to\infty}\frac{1}{m}\log\pa{E\pa{\Gon}}=H\pa{s}+D\pa{s||p}.
}
\end{lem}
\begin{proof}
The general idea behind the proof is related to the fact that when the number of users is $2^{m\cdot H\pa{s}-1}$, and the users are coupled with the $2^{m\cdot H\pa{s}-1}$ least likely bins, the average guesswork of any user is larger than or equal to $2^{m\cdot \pa{\cdot{H\pa{s}+D\pa{s||p}}}}$ as long as $n\ge\pa{1+\eps_{1}}\cdot m\cdot\log\pa{1/p}$; this follows from Theorem \ref{th:AverageNumberOfGuessesForEachBin}, the fact that $s\le q\pa{b_{i}}\le 1$ where $i\in\pac{1,\dots,2^{m\cdot H\pa{s}-1}}$, and also because the backdoor mechanism does not decrease the guesswork of any bin.
However, here we consider the case when the input is equal to $n=\pa{1+\eps_{1}}\cdot m\cdot \pa{H\pa{s}+D\pa{s||p}}$, for which some of the assumptions above may not hold. Therefore, we examine two cases: The case where $q\pa{b_{i}}=s$ as well as $s<q\pa{b_{i}}\le 1$. Once we show that for both cases the rate at which the average guesswork increases is larger than or equal to $H\pa{s}+D\pa{s||p}$, we can also show that the average guesswork across all users increases at this rate.

For $q\pa{b_{i}}=s$ the size of the input supports the average guesswork, and so following along the same lines as Theorem \ref{th:TheAverageGuesseork} and Theorem \ref{th:backdoor}, we get that the average guesswork of these bins increases at rate $H\pa{s}+D\pa{s||p}$.

For $s<q\pa{b_{i}}\le 1$ the probability of a mapping to these bins is $P_{K}\pa{b_{i}}=^{-m\cdot\pa{H\pa{q\pa{b_{i}}}+D\pa{q\pa{b_{i}}||p}}}$, where $H\pa{q\pa{b_{i}}}+D\pa{q\pa{b_{i}}||p}>H\pa{s}+D\pa{s||p}$. Therefore, the average guesswork of these elements has to be larger than or equal to the case when $q\pa{b_{i}}=s$. In fact, when $\eps_{1}$ is small enough and $n=\pa{1+\eps_{1}}\cdot m\cdot \pa{H\pa{s}+D\pa{s||p}}$, the probability that there is a mapping to bin $b_{i}$ within the $2^{n}$ inputs goes to zero, and so the only occurrences of $b_{i}$ is due to the backdoor mechanism; the backdoor mechanism achieves average guesswork that increases at rate $\pa{1+\eps_{1}}\cdot \pa{H\pa{s}+D\pa{s||p}}$ when the password is drawn uniformly in accordance with Definition \ref{def:backdoor} (again, following the same lines as Theorem \ref{th:backdoor}).

\end{proof}

\begin{rem}
In order to show that a shorter biased key achieves the same guesswork that a longer balanced key achieves we need to find the actual guesswork and not only the rate at which it increases. The next lemma resolves this issue.
\end{rem}

Now, we derive bounds for the actual average guesswork and not only for the rate at which it increases.
\begin{lem}\label{lem:ActualLowerBoundAverageGuesswork}
When the number of users is $2^{H\pa{s}\cdot m-1}$, $1/2\le s\le 1$, and $n=\pa{1+\eps_{1}}\cdot m\cdot \pa{H\pa{s}+D\pa{s||p}}$ the average guesswork of the universal hash function defined in subsection \ref{subsec:SUSHF} equations \eqref{eq:KeySegmentation}, \eqref{eq:USHF}, a key which is i.i.d. Bernoulli$\pa{p}$, and an output of size $m$, is lower bounded by
\beq{eq:ExactLowerBoundAverageGuessworkWhenTheKeyisBiased}
{
E\pa{\Gon}\ge \pa{1-\gamma\pa{m,p,s}}\cdot 2^{m\cdot\pa{H\pa{s}+D\pa{s||p}}}-e^{-2^{\eps_{1}\cdot m}}\cdot\pa{2^{m\cdot\pa{H\pa{s}+D\pa{s||p}}}+2^{n}}
}\footnote{Note that the result from Theorem \ref{th:backdoor} can also be extended to a lower bound when $\pa{1-p}\le s\le 1$ by arguments that follow the same lines as Lemma \ref{lem:ActualLowerBoundAverageGuesswork}.}
where $\gamma\pa{s,p,m}$ decays exponentially to zero (i.e., $\lim_{m\to\infty}\gamma\pa{s,p,m}= 0$). On the other hand, for any universal set of hash functions, with a key that is i.i.d. Bernoulli$\pa{1/2}$, and an output of size $\pa{H\pa{s}+D\pa{s||p}}\cdot m$, the average guesswork is equal to
\beq{eq:ExactLowerBoundAverageGuessworkWhenTheKeyisUnbiased}
{
E\pa{G_{remote}^{\pa{u}}\pa{B},f\pa{\cdot}}=2^{m\cdot\pa{H\pa{s}+D\pa{s||p}}}-\pa{1-2^{-m\cdot\pa{H\pa{s}+D\pa{s||p}}}}^{2^{n}}\cdot\pa{2^{m\cdot\pa{H\pa{s}+D\pa{s||p}}}+2^{n}}.
}
Therefore, we get that
\beqn{}
{
\lim_{m\to\infty}\frac{E\pa{\Gon}}{E\pa{G_{remote}^{\pa{u}}\pa{B},f\pa{\cdot}}}\ge 1.
}
\end{lem}
\begin{proof}
Let us start by proving the inequality in \eqref{eq:ExactLowerBoundAverageGuessworkWhenTheKeyisBiased}. When $n=\pa{1+\eps_{1}}\cdot m\cdot \pa{H\pa{s}+D\pa{s||p}}$, the probability of a bin of type $q\pa{b}>s$ to be mapped by the key to any password decreases according to the expression
\beq{}
{
1-\pa{1-2^{-m\cdot \pa{H\pa{q\pa{b}}+D\pa{q\pa{b}||p}}}}^{2^{\pa{1+\eps_{1}}\cdot m\cdot \pa{H\pa{s}+D\pa{s||p}}}};
}
this is because of the fact that when $q\pa{b}>s$ the term $\pa{H\pa{q\pa{b}}+D\pa{q\pa{b}||p}}>\pa{H\pa{s}+D\pa{s||p}}$. The back door mechanism in Definition \ref{def:backdoor} plants a mapping by drawing a password uniformly. Therefore, we get with probability $\pa{1-\gamma_{1}\pa{s,p,m}}$ where $\gamma_{1}\pa{s,p,m}$ decays to zero exponentially fast (i.e., with probability that approaches $1$ very quickly), that when $q\pa{b}>s$ the problem of cracking a bin is as hard as guessing the mapping created by the back door mechanism. Since in this case the password is uniformly distributed over all possible passwords (i.e., $2^{n}$ possibilities), it can be easily shown that the average guesswork is equal to $\frac{2^{n}}{2}\ge 2^{\pa{H\pa{s}+D\pa{s||p}}\cdot m}$ when $m\ge m_{0}$. Finally, by assigning the right parameters to equations \eqref{eq:theoremGuessworkofeachBin}, \eqref{eq:TheConvergesRatetooneoverPForEachBin} when $q\pa{b}=s$, the average guesswork is lower bounded by
\beq{}
{
G\pa{b|q\pa{b}=s}\ge \pa{1-\gamma_{2}\pa{s,p,m}}\cdot 2^{\pa{H\pa{s}+D\pa{s||p}}\cdot m}-e^{-2^{\eps_{1}\cdot m}}\cdot\pa{2^{m\cdot\pa{H\pa{s}+D\pa{s||p}}}+2^{n}}.
}
Hence, the average guesswork over all users (or alternatively all bins for which $s\le q\pa{b}\le 1$) is lower bounded by the term in equation \eqref{eq:ExactLowerBoundAverageGuessworkWhenTheKeyisBiased}.

In the case when the key is uniform, \eqref{eq:ExactLowerBoundAverageGuessworkWhenTheKeyisUnbiased} is proven by assigning the right parameters in equation \eqref{eq:theoremGuessworkofeachBin} and keeping in mind that when the key is uniform and $f\pa{\cdot}$ is a bijection, the average guesswork is the same for any bin. These arguments prove the equality in equation \eqref{eq:ExactLowerBoundAverageGuessworkWhenTheKeyisUnbiased}.
\end{proof}

\begin{rem}\label{rem:SomeMoreDetailsonTheNumberOfUsersandTypes}
Note that when $s=1/2$, the number of users is $\sum_{i=1}^{m/2}\binom{m}{i}$. Thus, in this case as $m$ increases we get that $\frac{\sum_{i=1}^{m/2}\binom{m}{i}}{2^{m}}=1/2$. By doing so we do not allocate bins that have an average guesswork which is smaller than the total average guesswork. Furthermore, in our analysis we restrict the size of the input, $n$, to be proportional to $m$. However, in practice $n$ can be much larger. This can be achieved by simply duplicating the mappings for the first $\pa{H\pa{s}+D\pa{s||p}}\cdot m$ inputs over and over again. By doing so the number of inputs can increase as much as needed, whereas the size of the key and the total average guesswork remain the same. Finally, by allocating a bin to each user and performing the back door procedure in Definition \ref{def:backdoor}, the chance of collision between any two users is the same as the chance of collision for a strongly universal hash function with a uniform key, that has the same average guesswork, that is, $2^{n}$.
\end{rem}

We now show that when the average guesswork is larger than the number of users, a biased key which ``beamforms'' to the subset of users, requires a key of smaller size than an unbiased key that achieves the same level of security.
\begin{theorem}
When the number of users is $2^{H\pa{s}\cdot m-1}$, $1/2\le s\le 1$, any universal set of hash functions for which $f\pa{\cdot}$ is a bijection, requires an unbiased key $k_{u}$ (i.e., which is drawn i.i.d. Bernoulli$\pa{1/2}$) of size
\beq{}
{
|k_{u}|=\pa{H\pa{s}+D\pa{s||p}}\cdot m\cdot 2^{\pa{H\pa{s}+D\pa{s||p}}\cdot m}
}
and output of size $|m_{u}|=\pa{H\pa{s}+D\pa{s||p}}\cdot m$, in order to achieve an average guesswork for which
\beqn{}
{
\lim_{m\to\infty}\frac{E\pa{G_{remote}^{\pa{u}}\pa{B,f\pa{\cdot}}}}{2^{\pa{H\pa{s}+D\pa{s||p}}\cdot m}} = 1.
}
On the other hand when the key is drawn Bernoulli$\pa{p}$ the universal set of hash functions defined in equations \eqref{eq:KeySegmentation}, \eqref{eq:USHF}, achieves the same average guesswork with a key $k_{b}$ of size
\beq{}
{
|k_{b}| = m\cdot 2^{\pa{H\pa{s}+D\pa{s||p}}\cdot m}
}
and an output of size $m$. In both cases the minimal input is of size $n=\pa{1+\eps_{1}}\cdot\pa{H\pa{s}+D\pa{s||p}}\cdot m$. The average size of the biased key can be decreased further to $H\pa{p}\cdot |k_{b}|$ through entropy encoding.

Finally, distributing passwords according to the procedure in Definition \ref{def:backdoor} requires an unbiased key of size
\beq{}
{
\pa{1+\eps_{1}}\cdot\pa{H\pa{s}+D\pa{s||p}}\cdot m\cdot 2^{H\pa{s}\cdot m}.
}

\end{theorem}
\begin{proof}
From Corollary \ref{cor:RelationBetweenRateOfAverageGuessworkBiasandUnbias} and Lemma \ref{lem:ActualLowerBoundAverageGuesswork} we get that when the key is drawn i.i.d. Bernoulli$\pa{1/2}$, the average guesswork is determined by the size of the output. Therefore, in order to achieve an average guesswork that equals $2^{m\cdot\pa{H\pa{s}+D\pa{s||p}}}$ the size of the output must be $\pa{H\pa{s}+D\pa{s||p}}\cdot m$. In addition $n=\pa{1+\eps_{1}\cdot}\pa{H\pa{s}+D\pa{s||p}}$ and so the size of the key is equal to $|k_{u}|=\pa{\hdps}\cdot m\cdot 2^{\pa{1+\eps_{1}}\cdot m\cdot\pa{\hdps}}$.

For the case when the key is drawn i.i.d. Bernoulli$\pa{p}$, we get from Lemma \ref{lem:ActualLowerBoundAverageGuesswork} that an output of size $m$ achieves the desired guesswork. Hence, in this case the size of the key is $|k_{b}|=m\cdot 2^{\pa{1+\eps_{1}}\cdot m\cdot\pa{\hdps}}$

Finally, in order to allocate bins according to the method in Definition \ref{def:backdoor} each user has a key of size $n$, and since there are $2^{H\pa{s}\cdot m}$ users the size of the key required to draw passwords is $\pa{1+\eps_{1}}\cdot\pa{H\pa{s}+D\pa{s||p}}\cdot m\cdot 2^{H\pa{s}\cdot m}$.
\end{proof}

\begin{cor}\label{cor:Thekeysratio}
When the number of users is equal to $2^{H\pa{s}\cdot m-1}$ and the average guesswork increases at rate $\pa{H\pa{s}+D\pa{s||p}}$, the ratio between the size of an unbiased key and a biased key which is drawn Bernoulli$\pa{p}$, when both keys achieve the average guesswork above, is
\beqn{}
{
\lim_{m\to\infty}|k_{u}|/|k_{b}|=H\pa{s}+D\pa{s||p}.
}
\end{cor}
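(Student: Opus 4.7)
The plan is to apply the preceding theorem directly and take a ratio; no new estimate is required. First I would invoke the expressions given by the theorem for the two key sizes at the same input length $n=(1+\eps_{1})\cdot(H(s)+D(s||p))\cdot m$ and the same target average guesswork $2^{m\cdot(H(s)+D(s||p))}$. Namely, the unbiased key has size
\[
|k_{u}|=(H(s)+D(s||p))\cdot m\cdot 2^{(1+\eps_{1})\cdot m\cdot(H(s)+D(s||p))},
\]
while the biased key has size
\[
|k_{b}|=m\cdot 2^{(1+\eps_{1})\cdot m\cdot(H(s)+D(s||p))}.
\]

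Next I would form the ratio $|k_{u}|/|k_{b}|$. The dominant exponential factor $2^{(1+\eps_{1})\cdot m\cdot(H(s)+D(s||p))}$ and the linear factor $m$ appear identically in both the numerator and the denominator and therefore cancel exactly. What remains is the prefactor $H(s)+D(s||p)$, which is independent of $m$. Taking the limit as $m\to\infty$ then yields $\lim_{m\to\infty}|k_{u}|/|k_{b}|=H(s)+D(s||p)$, which is the claim.

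There is no genuine analytic obstacle here; the corollary is essentially a bookkeeping consequence of the theorem. The only point that needs to be checked is that both key-size expressions are indeed being compared under the same operating regime, that is, the same $n$ and the same asymptotic average guesswork rate. This is guaranteed by the theorem: it asserts that both constructions attain the common ratio $\lim_{m\to\infty} E(G_{T}^{(u)}(B,f(\cdot)))/2^{(H(s)+D(s||p))\cdot m}=1$ and $\lim_{m\to\infty} E(G_{T}(B))/2^{(H(s)+D(s||p))\cdot m}=1$ in their respective settings, so comparing the required key sizes is well-posed and the ratio reduces to the stated constant.
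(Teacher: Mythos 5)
Your proposal is correct and follows essentially the same route as the paper: invoke the key-size expressions from the preceding theorem and observe that the common factors $m\cdot 2^{\pa{1+\eps_{1}}\cdot m\cdot\pa{\hdps}}$ cancel, leaving the prefactor $\hdps$. The only difference is that the paper's denominator also charges the biased scheme for the unbiased key of size $\pa{1+\eps_{1}}\cdot\pa{\hdps}\cdot m\cdot 2^{H\pa{s}\cdot m}$ used by the backdoor bin-allocation procedure, and notes that this term is exponentially negligible relative to $m\cdot 2^{\pa{1+\eps_{1}}\cdot m\cdot\pa{\hdps}}$, so the limit is unchanged.
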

\begin{proof}
The proof results from the fact that
\beqn{}
{
\lim_{m\to\infty}\frac{\pa{\hdps}\cdot m\cdot 2^{\pa{1+\eps_{1}}\cdot m\cdot\pa{\hdps}}}{\pa{1+\eps_{1}}\cdot\pa{H\pa{s}+D\pa{s||p}}\cdot m\cdot 2^{H\pa{s}\cdot m}+ m\cdot 2^{\pa{1+\eps_{1}}\cdot m\cdot\pa{\hdps}}}=\hdps .
}
\end{proof}

In the next corollary we find the minimal size of a key that enables to achieve through bias, a desired average guesswork that is larger than the number of users.
\begin{cor}\label{cor:EngOptimizationFortheKeysRatio}
When there are $2^{m-1}$ users and the desired average guesswork increases at rate $\Al>1$, the key of minimal size is drawn Bernoulli$\pa{p_{0}}$ such that
\beq{}
{
1+D\pa{1/2||p_{0}}=\Al.
}
A universal set of hash functions defined in equations \eqref{eq:KeySegmentation}, \eqref{eq:USHF}, whose output is of size $m$ achieves the average guesswork above.
\end{cor}
\begin{proof}
From Corollary \ref{cor:RelationBetweenRateOfAverageGuessworkBiasandUnbias} and Lemma \ref{lem:ActualLowerBoundAverageGuesswork} we know that when $\Al > 1$ and the key is i.i.d. Bernoulli$\pa{1/2}$, the size of the output has to be $\Al\cdot m$ in order to achieve an average guesswork $2^{\Al\cdot m}$. Since in this case the size of the input has to be larger than $\Al\cdot m$, the size of an unbiased key that achieves the desired average guesswork is $\Al\cdot m\cdot 2^{\Al\cdot m}$.

From Lemma \ref{lem:ActualLowerBoundAverageGuesswork} we also know that for any biased key that achieves average guesswork $2^{\Al\cdot m}$ the size of the input has to be at least $\Al\cdot m$. However, the size of the output can in fact be smaller than $\Al\cdot m$. Essentially, we can define the output as $m^{\prime}$ such that number of users $2^{m-1}=2^{H\pa{s_{0}}\cdot m^{\prime}-1}$ and $m^{\prime}<\Al\cdot m$, where $1/2\le s_{0}\le 1$. The size of the key in this case is equal to $m^{\prime}\cdot 2^{\Al\cdot m}$. The minimal value of $m^{\prime}$ that still enables to allocate a different bin to each user is $m^{\prime}=m$ (i.e., $s=1/2$). In this case the size of the key is $m\cdot 2^{\Al\cdot m}$; this key is drawn i.i.d. Bernoulli$\pa{p_{0}}$ such that
\beqn{}
{
1+D\pa{1/2||p_{0}}=\Al.
}
\end{proof}

\begin{theorem}\label{th:TheSecondMainTheoremKeyedHashFunctions}
When the number of users $M=2^{m-1}$, a uniform key that achieves average guesswork $2^{m\cdot\Al}$, where $\Al>1$, with any mapping from the set of keys to the set of all hash functions, is $\Al$ times larger than a biased key which is i.i.d. Bernoulli$\pa{p_{0}}$ that achieves the same average guesswork with the universal set of hash functions defined in subsection \ref{subsec:SUSHF} equations \eqref{eq:KeySegmentation}, \eqref{eq:USHF}, where $p_{0}$ satisfies
\beq{}
{
1+D\pa{1/2||p_{0}}=\Al.
}
\end{theorem}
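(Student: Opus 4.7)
The plan is to identify the minimum key sizes $L_{\mathrm{biased}}$ and $L_{\mathrm{unif}}$ that achieve an average guesswork of $2^{\Al m}$ in the two settings, and show that their ratio equals $\Al$. Recall that the physical key has length $m \cdot 2^n$ bits in the biased construction of equations \eqref{eq:KeySegmentation}, \eqref{eq:USHF} (with $m$-bit output and $n$-bit input), and $m' \cdot 2^{n'}$ bits in any uniform-key construction, so minimizing the key size amounts to minimizing these parameters subject to the guesswork constraint.

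On the uniform side, a uniform key with any bijection to the set of all hash functions yields a uniformly random hash. Under bin allocation each user's bin has one forced preimage plus $\mathrm{Binomial}(2^{n'}-M,\, 2^{-m'})$ additional random preimages, so the average guesswork behaves like $2^{\min(n',m')}$. Achieving $2^{\Al m}$ therefore requires $n' \ge \Al m$ and $m' \ge \Al m$; minimizing $m' \cdot 2^{n'}$ under these constraints (together with $m' \ge m-1$ needed to accommodate the $M = 2^{m-1}$ users, which is automatic for $\Al > 1$) gives $m' = n' = \Al m$, and hence $L_{\mathrm{unif}} = \Al m \cdot 2^{\Al m}$.

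On the biased side the preimage count is nonuniform: for a bin of type $q = w(b)/m$ the expected number of random preimages is $2^n P_b$ with $P_b = p_0^{qm}(1-p_0)^{(1-q)m}$, and the same capping argument gives per-bin guesswork of order $\min(1/P_b,\, 2^n)$. Choosing $n = \Al m$ with $\Al = 1 + D(1/2 || p_0)$ arranges that $2^n P_b = 1$ exactly at $q = 1/2$; since $M = 2^{m-1}$ and bin allocation covers (up to lower-order terms) precisely the bins with $q \in [1/2,1]$, a type count shows that the overwhelming majority of assigned bins lie in the regime $q > 1/2$, where $1/P_b > 2^n$ and each per-bin guesswork is pinned at the capped value $2^n = 2^{\Al m}$. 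Any smaller $n$ would cap every per-bin guesswork at $2^n$ and yield a strictly smaller rate, so $n = \Al m$ is the minimum, giving $L_{\mathrm{biased}} = m \cdot 2^{\Al m}$ and the ratio $L_{\mathrm{unif}}/L_{\mathrm{biased}} = \Al$.

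The main obstacle will be the careful probabilistic analysis at the critical input length $n = \Al m$, where $2^n P_b$ crosses unity as $q$ passes through $1/2$. I need to rule out that the narrow slab of bins with $q \in [1/2,\, 1/2 + o(1)]$, where the relevant Binomial is neither clearly subcritical nor clearly supercritical, inflates the average; and I must verify that the forced preimage introduced by the bin-allocation backdoor simultaneously keeps the guesswork well-defined on deeply subcritical bins and prevents it from exceeding order $2^n$. Combining a type-counting bound over $q$ with Binomial concentration around the mean, together with Jensen-style control of $E[1/N_b]$ in the regime $E[N_b] \approx 1$, should suffice to pin down both contributions and close the argument.
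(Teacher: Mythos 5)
Your proposal is correct and follows essentially the same route as the paper: the paper likewise computes the two minimal key sizes ($\Al\cdot m\cdot 2^{\Al\cdot m}$ for the uniform key, since a uniform key forces the output to be $\Al\cdot m$ bits, versus $m\cdot 2^{\Al\cdot m}$ for the Bernoulli$\pa{p_{0}}$ key with $s=1/2$ and $1+D\pa{1/2||p_{0}}=\Al$) and takes their ratio, via Corollaries \ref{cor:Thekeysratio} and \ref{cor:EngOptimizationFortheKeysRatio}. Your unified $\min\pa{1/P_{b},2^{n}}$ capping is just a cleaner packaging of the paper's two-case argument in Lemma \ref{lem:ActualLowerBoundAverageGuesswork} (geometric distribution for $q\pa{b}=s$, backdoor-planted preimage for $q\pa{b}>s$), and the critical-slab issue you flag is present, and handled no more delicately, in the paper itself.
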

\begin{proof}
The proof follows directly from Corollary \ref{cor:Thekeysratio} and Corollary \ref{cor:EngOptimizationFortheKeysRatio} of section \ref{sec:Discussion}.
\end{proof}

Figure \ref{fig:Key_Size} illustrates Theorem \ref{th:TheSecondMainTheoremKeyedHashFunctions}.

\begin{figure}
\input{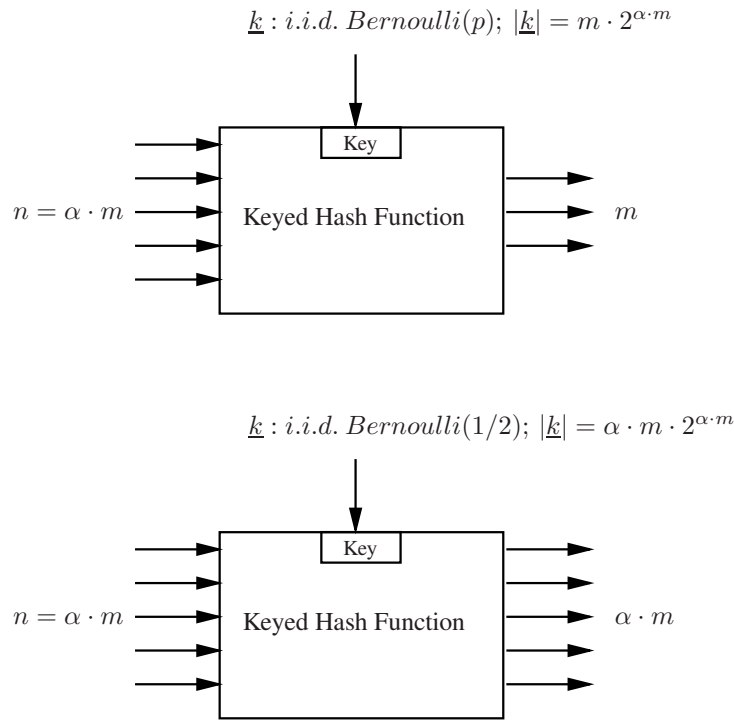}\caption{When the number of users is $2^{m-1}$, and the average guesswork is equal to $2^{\Al\cdot m}$, the size of a biased key is $\Al>1$ times smaller than a uniform key, where  $\Al=1+D\pa{1/2||p}$. Note that in order to achieve an average guesswork $2^{\Al\cdot m}$, the output of a strongly universal set of hash functions has to be of size $\Al\cdot m$ when the key is unbiased, whereas when the key is biased the output can reduce to $m$ bits, as long as the number of users is no larger than $2^{m-1}$.}\label{fig:Key_Size}
\end{figure}

Now, we show that biased keys also require smaller space to store the bins on the server.
\begin{cor}
When the number of users is $2^{H\pa{s}\cdot m-1}$ and the average guesswork is equal to $2^{m\cdot\pa{H\pa{s}+D\pa{s||p}}}$, a biased key which is drawn Bernoulli$\pa{p}$ requires to store $2^{H\pa{s}\cdot m-1}$ bins of size $m$ each, whereas an unbiased key requires to store $2^{H\pa{s}\cdot m-1}$ bins of size $\pa{H\pa{s}+D\pa{s||p}}\cdot m$. Therefore, under the constraints above a biased key decreases the storage space by a factor of $\pa{H\pa{s}+D\pa{s||p}}$. The average stored space can be reduced further to $H\pa{p}\cdot m\cdot 2^{H\pa{s}\cdot m-1}$ through entropy encoding.
\end{cor}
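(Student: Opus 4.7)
The plan is to read off the bin sizes directly from the hash-function output sizes established in the preceding lemmas, since what the server stores for each user is exactly the hash output of that user's password; the storage ratio and the entropy-encoded bound then follow by combining these output sizes with the number of users.

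First I would invoke Lemma \ref{lem:ActualLowerBoundAverageGuesswork}: with the strongly universal set of hash functions of Subsection \ref{subsec:SUSHF} and a key drawn i.i.d. Bernoulli$\pa{p}$, an output of size $m$ already suffices to attain average guesswork $2^{m\cdot\pa{\hdps}}$. Hence each stored bin is an $m$-bit string, and over the $2^{H\pa{s}\cdot m-1}$ users the total storage is $m\cdot 2^{H\pa{s}\cdot m-1}$ bits. In the unbiased case Corollary \ref{cor:RelationBetweenRateOfAverageGuessworkBiasandUnbias} bounds the rate of the average guesswork by one per output bit, so in order to match the same guesswork $2^{m\cdot\pa{\hdps}}$ the output length must be at least $\pa{\hdps}\cdot m$; the total storage is then $\pa{\hdps}\cdot m\cdot 2^{H\pa{s}\cdot m-1}$ bits. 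Taking the ratio yields the stated compression factor $\pa{\hdps}$.

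For the entropy-encoded bound I would treat each $m$-bit stored bin as an $m$-tuple of i.i.d. Bernoulli$\pa{p}$ symbols, which is the statistical profile governing the key throughout the section, and apply Shannon's source coding theorem to obtain an average code length of $H\pa{p}\cdot m$ bits per bin; summing over the $2^{H\pa{s}\cdot m-1}$ users gives the claimed $H\pa{p}\cdot m\cdot 2^{H\pa{s}\cdot m-1}$ bits. The main subtlety I expect lies precisely in this last step: the bins actually allocated to users are the least likely ones and hence are atypical draws from Bernoulli$\pa{p}$, so a per-bin Shannon code calibrated to that source does not directly apply to a specific fixed allocation. I would address this by taking the expected storage over realizations of the key ensemble (under which each bit position of a stored bin is marginally Bernoulli$\pa{p}$), or equivalently by applying enumerative coding over the deterministic index set of allocated least-likely bins; both recover the same $H\pa{p}\cdot m$ per-bin rate in the asymptotic regime $m\gg 1$, completing the bound.
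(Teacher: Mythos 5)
Your treatment of the main claim coincides with the paper's own proof, which is a one-line observation: by Lemma \ref{lem:ActualLowerBoundAverageGuesswork} the biased key attains the target guesswork with an output of $m$ bits, while by Corollary \ref{cor:RelationBetweenRateOfAverageGuessworkBiasandUnbias} the unbiased key needs an output of $\pa{H\pa{s}+D\pa{s||p}}\cdot m$ bits, so the $2^{H\pa{s}\cdot m-1}$ stored bins shrink by exactly that factor. Nothing to add there.

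The gap is in your final step. The bins actually stored are the $2^{H\pa{s}\cdot m-1}$ \emph{least likely} strings under Bernoulli$\pa{p}$, i.e., the strings of type $q\ge s$. This set is determined by $p$ alone and does not depend on the key realization, so averaging over the key ensemble does not make a stored bin marginally Bernoulli$\pa{p}$: the empirical type of a typical allocated bin concentrates at $s$, not at $p$. Consequently a Shannon code matched to Bernoulli$\pa{p}$ assigns a bin of type $s$ a codeword of length about $-\log_{2}P_{K}\pa{b}=m\cdot\pa{H\pa{s}+D\pa{s||p}}$, which is no saving at all over the unbiased storage; and your alternative, enumerative coding over the deterministic set of allocated bins, gives $\log_{2}\pa{2^{H\pa{s}\cdot m-1}}\approx H\pa{s}\cdot m$ bits per bin, not $H\pa{p}\cdot m$. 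So neither of your two proposed fixes ``recovers the same $H\pa{p}\cdot m$ per-bin rate''; indeed $H\pa{s}\cdot m$ and $H\pa{p}\cdot m$ generally differ (for $s=1/2$ one gets $m>H\pa{p}\cdot m$, for $s=1$ one gets $0<H\pa{p}\cdot m$). In fairness, the paper's stated proof covers only the ratio of output sizes and offers no argument for the $H\pa{p}\cdot m$ figure either --- that sentence of the corollary is asserted rather than proved --- but as written your justification of it does not go through.
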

\begin{proof}
The proof is straight forward and results from the fact that when considering a biased key the output is $\hdps$ times smaller than the output when the key is unbiased.
\end{proof}

A remark is in order regarding the results above.
\begin{rem}
Note that the results above apply to strongly universal sets of hash functions, which are balanced sets. On the other hand in sections \ref{sec:GuessworkAnyHash} and \ref{sec:BoundsAverageGuessworkNoBinsAllocation} we consider sets of hash functions which are not balanced, that is,  the size of a conditional set depends on the actual values of the bins on which it is conditioned; for example, if a fraction of $\pa{1-P_{K}\pa{b_{0}}}$ mappings is revealed such that none of these mappings map to $b_{0}$, then the remaining fraction of $P_{k}\pa{b_{0}}$ mappings are all mapped to $b_{0}$. Hence, in this case the conditional set is of size $1$.
\end{rem}

\section{Proofs of the Results of Section \ref{sec:ModifiedHashFunctionAverageGueddwork}}\label{sec:GuessworkAnyHash}

We first calculate the guesswork when bins are not allocated to the users, and for any bin the attacker knows of a password that is mapped to it.
\begin{theorem}[Guesswork of a broken hash function]\label{th:AverageGuessworkAnyHashFunction}
When each user draws his password uniformly over $\pa{1,\dots,2^{n}}$ as in Definition \ref{def:DIstributedPasswordsAllocation}, and for every bin the attacker knows of a password that is mapped to it, the $\rho$th moment of the guesswork of every user is
\beq{}
{
E\pa{\pa{G_{H}\pa{B}}^{\rho}}\ge \frac{1}{\pa{1+m\cdot\ln\pa{2}}^\rho}\pa{\sum_{b=1}^{2^{m}}\pa{\ph}^{1/\pa{1+\rho}}}^{1+\rho}.
}
When $P_{H}$ is defined based on the method of types as presented in Definition \ref{def:DefinitionOfIidCase}, the $\rho$th moment of the guesswork increases at rate that is equal to
\beq{}
{
\lim_{m\to\infty}\frac{1}{m}\log\pa{E\pa{\pa{G_{H}\pa{B}}^{\rho}}}=H_{1/\pa{1+\rho}}\pa{p}
}
where $H_{1/\pa{1+\rho}}\pa{p}$ is the R\'{e}nyi entropy which is defined in equation \eqref{eq:RenyiEntropyFunctionofRho}.
\end{theorem}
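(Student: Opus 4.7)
The plan is to reduce the problem to Arikan's classical guesswork setting on the $2^{m}$ bins, and then invoke the two-sided bound \eqref{eq:GeneralExp}. The crucial observation is that, given one representative password per bin, the attacker never benefits from guessing two passwords in the same bin, so any reasonable strategy is a permutation of the $2^{m}$ bins, and the optimal one lists them in decreasing order of the induced bin probability.

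I would first observe that a uniformly drawn password lands in bin $b$ with probability exactly $\ph$ by Definition \ref{def:FractionsofMappingsanyhash}. Hence $G_{H}\pa{B}$ has the same distribution as Arikan's optimal guesswork $G\pa{B^{*}}$ of a single random variable $B^{*}\sim P_{H}$ on an alphabet of size $M=2^{m}$. Applying the unconditional form of \eqref{eq:GeneralExp} with $M=2^{m}$ (so $\ln M=m\ln 2$) yields
\begin{equation*}
\frac{1}{\pa{1+m\ln 2}^{\rho}}\pa{\sum_{b=1}^{2^{m}} \ph^{1/\pa{1+\rho}}}^{1+\rho}\le E\pa{G_{H}\pa{B}^{\rho}}\le \pa{\sum_{b=1}^{2^{m}} \ph^{1/\pa{1+\rho}}}^{1+\rho},
\end{equation*}
which is exactly the claimed lower bound and sandwiches the $\rho$th moment up to a polynomial factor. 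For the i.i.d.\ Bernoulli$\pa{p}$ case the inner sum factors across coordinates as $\pa{p^{1/\pa{1+\rho}}+\pa{1-p}^{1/\pa{1+\rho}}}^{m}$, so after taking $\log_{2}$, dividing by $m$, and letting $m\to\infty$, the polynomial pre-factor $\pa{1+m\ln 2}^{\rho}$ vanishes on the rate scale and both the upper and lower estimates converge to the Renyi-entropy rate of the Bernoulli$\pa{p}$ source defined in \eqref{eq:RenyiEntropyFunctionofRho}.

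The only point that requires care is the reduction itself: one must formally justify that knowing a representative in every bin collapses the guesswork on the $2^{n}$-sized password alphabet to a guesswork on the $2^{m}$-sized bin alphabet. This follows because a second guess inside a bin already tested is strictly dominated and can be removed from any strategy without increasing any moment; hence the optimal $\rho$th moment over password strategies coincides with the optimal $\rho$th moment for guessing $B^{*}\sim P_{H}$, to which Arikan's inequality is directly applicable. Apart from this reduction, the argument is essentially a direct invocation of Arikan's bound together with the standard factorization of i.i.d.\ products, and I do not anticipate any further technical obstacle.
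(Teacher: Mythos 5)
Your proposal is correct and follows essentially the same route as the paper: reduce to guessing the bin $B^{*}\sim P_{H}$ (since a known representative per bin makes repeated guesses within a bin redundant), apply Arikan's bound \eqref{eq:GeneralExp} with $M=2^{m}$, and factor the sum for the i.i.d.\ Bernoulli$\pa{p}$ case. You actually spell out the reduction and the limit computation more explicitly than the paper, which simply states that the problem "reduces to the guesswork problem analyzed in Arikan" and cites that work.
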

\begin{proof}
We assume that for every bin the attacker knows of a password that is mapped to it. Therefore, the problem of guessing a password reduces to the problem of guessing the correct bin for each user. Since each user chooses the password uniformly over $\pac{1,\dots, 2^{n}}$, the chance of it being mapped to bin $b\in\pac{1,\dots,2^{m}}$ is $\ph$. Hence, the problem is reduced to the guesswork problem that is analyzed in \cite{Arikan_Ineq_Guessing}. We get the average guesswork that is stated above by following the same lines as \cite{Arikan_Ineq_Guessing} .
\end{proof}

\begin{rem}
Essentially, the result of Theorem \ref{th:AverageGuessworkAnyHashFunction} captures the average time required to crack a hash function when a \emph{key stretching mechanism} \cite{WagnerKeyStreatching} is used in order to protect a system against passwords cracking. For example, when $\rho=2$, it means that the time that elapses between attempts increases quadratically.
\end{rem}

\begin{rem}
When passwords are not drawn uniformly, the optimal strategy of guessing passwords one by one is to first calculate the probability of each bin by summing the probabilities over all passwords that are mapped to this bin, and then guessing bins according to their probabilities in descending order.
\end{rem}

We begin by defining a back door procedure for any hash function.
\begin{defn}\label{def:BackdoorForAnyHash}
The back door mechanism for modifying a hash function is defined as follows.
\bitm
{
\item For each user choose a different bin according to the procedure described in Definition \ref{def:PasswordAllocationwithNoDetails} (i.e., the first user gets the least likely bin $b_{1}$, the second user receives the second least likely bin $b_{2}$, and the $2^{H\pa{s}\cdot m -1}$th user receives $b_{2^{H\pa{s}\cdot m -1}}$).
\item Each of the users from the first to the $2^{H\pa{s}\cdot m -1}$th draws a password uniformly by drawing $n$ bits i.i.d. Bernoulli$\pa{1/2}$.
\item For each user, if the number that was drawn $g\in\pac{1,\dots,2^{n}}$ has not been drawn by any of the users that have bins that are less likely than the current bin, then replace the mapping of $g$ with a mapping from $g$ to the bin assigned to the user.
\item In the case when another user who is assigned to a less likely bin, has drawn $g$: \emph{Do not change} the value of the mapping of $g$ again (i.e., the mapping from $g$ is changed only once by the user who is coupled with the least likely bin among the bins of the users whose password is $g$). Instead, change the bin allocated to the user to the least likely bin among the bins allocated to users whose password is $g$. Therefore, the mapping from $g$ is changed only once, to the least likely bin that is mapped to $g$.
}
\end{defn}

We are now ready to prove Theorem \ref{th:LowerBoundExpAverageGuessworkAnyHashFunction}.
\begin{proof}[The proof of Theorem \ref{th:LowerBoundExpAverageGuessworkAnyHashFunction}]
The underlying idea behind the proof is to lower bound the average guesswork by upper bounding the probability of finding a password that cracks the bin at each round; we upper bound the probability by incorporating the maximal number of mappings that the backdoor mechanism can add to each bin as well as the number of unsuccessful guesses that have been made so far. We start by proving this theorem for any $P_{H}$-hash function when averaged over all possible strategies of guessing passwords one by one, and we then show this also for the $P_{H}$-set of hash functions and for any strategy. Then, in order to prove the equality rather than just an upper bound, we show that the probability that the number of mappings eliminated by the backdoor mechanism affects the fraction of mappings, is small to the extent that does not allow the average to grow beyond the lower bound discussed above.

Since the attacker does not know the mappings of the hash function, there is no reason for it to favor any guessing strategy over the other. Hence, we average over all strategies of guessing passwords one by one; essentially, this is equivalent to averaging over all possible permutations of the inputs, assuming that the permutations are uniformly distributed. This can be achieved by first drawing a random variable uniformly over $\pac{1,\dots, 2^{n}}$, then drawing a random variable over the remaining $2^{n}-1$ possibilities, etc.

We now wish to lower bound the average guesswork by upper bounding $\ph$. In order to lower bound the average guesswork we consider only the first $\pa{1+\eps_{1}}\cdot m\cdot \log\pa{1/p}$ guesses when averaging, where $0<\eps_{1}<\eps$. Clearly the number of guesses can be greater than the term above and may go up to $2^{n}$. Furthermore, we observe that after $k$ unsuccessful guesses, the chance of guessing bin $b$ is $\ph\cdot\frac{2^{n}}{2^{n}-k}$ where $0\le k\le \pa{1-\ph}2^{n}$. In addition, the back door procedure of Definition \ref{def:BackdoorForAnyHash} can add one mapping at most to each bin. Thus, we can upper bound the probability by
\beq{eq:UpperBoundOnTheProbabilityOfAnyHashFunction}
{
\ph\le \frac{\ph\cdot 2^{n}+1}{2^{n}-k}=P_{H}^{\pa{k}}\pa{b}\quad k\in\pac{0,\dots, 2^{\pa{1+\eps_{1}}\cdot m\cdot \log\pa{1/p}}}.
}
Note that $\lim_{m\to\infty}\frac{2^{\pa{1+\eps_{1}}\cdot m\cdot \log\pa{1/p}}}{\pa{1-\ph}2^{n}}=0$ since $\eps_{1}<\eps$.

The average guesswork of each bin can be lower bounded by
\beq{eq:LowerBoundByGeometricalDistAvrtageGusswork}
{
E\pa{\Gbin}\ge \ph\cdot\pa{1+ \sum_{i=1}^{2^{m_{0}}-1}\pa{i+1}\cdot\prod_{j=1}^{i}\pa{1-P_{H}^{\pa{j}}\pa{b}}}\ge \ph\cdot\sum_{i=0}^{2^{m_{0}}-1}\pa{i+1}\cdot\pa{1-P_{H}^{\pa{2^{m_{0}}}}\pa{b}}^{i}
}
where $m_{0}=\pa{1+\eps_{1}}\cdot m\cdot \log\pa{1/p}$; the first inequality is due to equation \eqref{eq:UpperBoundOnTheProbabilityOfAnyHashFunction}, whereas the second one is because of the fact that $P_{H}^{\pa{2^{m_{0}}}}\pa{b}\ge P_{H}^{\pa{i}}\pa{b}$ for any $i\in\pac{1,\dots, 2^{m_{0}}-1}$. Hence, we get
\beq{eq:LowerBoundAverageGuessworkOfAnyHashFunctionAnyBin}
{
E\pa{\Gbin}\ge\frac{\ph}{P_{H}^{\pa{2^{m_{0}}}}\pa{b}}\cdot \pa{1/P_{H}^{\pa{2^{m_{0}}}}\pa{b}-\pa{1-P_{H}^{\pa{2^{m_{0}}}}\pa{b}}^{2^{m_{0}}}\cdot\pa{1/P_{H}^{\pa{2^{m_{0}}}}\pa{b}+2^{m_{0}}}}.
}

Now, let us break equation \eqref{eq:LowerBoundAverageGuessworkOfAnyHashFunctionAnyBin} down. First, we know that
\beq{}
{
P_{H}^{\pa{2^{m_{0}}}}\pa{b} = \frac{\ph\cdot 2^{\pa{1+\eps}\cdot m\cdot\log\pa{1/p}}+1}{2^{\pa{1+\eps}\cdot m\cdot\log\pa{1/p}}-2^{\pa{1+\eps_{1}}\cdot m\cdot\log\pa{1/p}}}=\frac{\ph+2^{-\pa{1+\eps}\cdot m\cdot\log\pa{1/p}}}{1-2^{-\pa{\eps-\eps_{1}}\cdot\log\pa{1/p}\cdot m}}
}
In addition, since $\ph\ge 2^{-m\cdot\log\pa{1/p}}$ for any $b\in\pac{1,\dots, 2^{m}}$ we get
\beq{}
{
\ph\le P_{H}^{\pa{2^{m_{0}}}}\pa{b}\le \ph\cdot \frac{1+2^{-\eps\cdot \log\pa{1/p}\cdot m}}{1-2^{-\pa{\eps-\eps_{1}}\cdot\log\pa{1/p}\cdot m}}\quad\forall b\in\pac{1,\dots,2^{m}}.
}

From Corollary \ref{cor:MeanValueUniformlyConverges} we get that
\beq{}
{
\lim_{m\to\infty}\pa{1-P_{H}^{\pa{2^{m_{0}}}}\pa{b}}^{2^{m_{0}}}\cdot\pa{1/P_{H}^{\pa{2^{m_{0}}}}\pa{b}+2^{m_{0}}}=0.
}
Further,
\beq{eq:LowerBoundAverageGuessworkPerBinFOrPHSET}
{
1/P_{H}^{\pa{2^{m_{0}}}}\pa{b}\ge 1/\ph\cdot \frac{1-2^{-\pa{\eps-\eps_{1}}\cdot\log\pa{1/p}\cdot m}}{1+2^{-\eps\cdot \log\pa{1/p}\cdot m}}\quad\forall b\in\pac{1,\dots,2^{m}}
}
Hence, by averaging over all the types of the assigned bins similarly to what is done in Theorem \ref{th:TheAverageGuesseork} we get the inequality
\beq{eq:TheFinalInequalityForAvergeGuessworkAnyHash}
{
\lim_{m\to\infty}\frac{1}{m}\log\pa{E\pa{\Gon}}\ge\barr{cc}{H\pa{s}+D\pa{s||p} &\pa{1-p}\le s\le 1\\ 2\cdot H\pa{p}+D\pa{1-p||p}-H\pa{s} &1/2\le s\le \pa{1-p} }.
}

The equality is achieved based on the following arguments. Assume that the backdoor mechanism eliminates $l$ mappings to bin $b$. In order for the inequality in \eqref{eq:TheFinalInequalityForAvergeGuessworkAnyHash} to turn to equality, $l$ has to increase at a rate that is smaller than $2^{n}\cdot\ph$. Therefore, since there are $2^{H\pa{s}\cdot m-1}$ users, whenever $n\ge\pa{1+\eps}\cdot\pa{\log\pa{1/p}+H\pa{s}}$, $l$ cannot affect the rate at which $2^{n}\cdot\ph$ increases.


Furthermore, this theorem also applies to the case when averaging over the $P_{H}$-set of hash functions for any strategy of guessing passwords one by one (i.e., for any fixed strategy, the average is performed over the set of hash functions). This is due to the fact that any $P_{H}$-hash function can be represented by a bipartite graph (e.g., Figure \ref{fig:Intuitive_Explanation}) for which averaging over the order according to which nodes that represent the domain, are chosen while keeping the edges fixed, leads to the same result as averaging over the order of the edges  connected to these nodes while choosing the nodes in any order. The effect of the backdoor procedure given in Definition \ref{def:BackdoorForAnyHash} is similar. When considering a user who is mapped to bin $b\in B$, the other users can only decrease the number of mappings to this bin. Based on the arguments presented in this proof, the other users cannot increase the average guesswork. Furthermore, the user can add up to one mapping, which does not affect the average guesswork.
\end{proof}

\begin{proof}[The proof of Corollary \ref{cor:AverageGuessworkAnyHashFunctionAnyBin}]
The proof is straightforward based on the proof of Theorem \ref{th:LowerBoundExpAverageGuessworkAnyHashFunction}.
\end{proof}

Next, we prove Theorem \ref{th:ConcentrationAnyHash}.
\begin{proof}[The proof of Theorem \ref{th:ConcentrationAnyHash}]
We are interested in upper bounding the probability
\beq{eq:ProbAyFuncGuessinglessThanMinimalAverage}
{
Pr\pa{\Gbin\le 2^{\pa{1-\eps_{1}}\cdot\pa{\hdps}\cdot m}}=P_{H}^{\pa{0}}\pa{b}+\sum_{i=1}^{2^{\pa{1-\eps_{1}}\cdot\pa{\hdps}\cdot m}}P_{H}^{\pa{i}}\pa{b}\cdot\prod_{j=0}^{i-1}\pa{1-P_{H}^{\pa{j}}\pa{b}}
}
where
\beq{}
{
P_{H}^{\pa{i}}\pa{b}=\frac{P_{H}\pa{b}\cdot 2^{n}+1}{2^{n}-i}.
}
As $P_{H}\pa{b}<P_{H}^{\pa{0}}\pa{b}<\dots <P_{H}^{\pa{2^{m_{1}}}}\pa{b}$ where $m_{1}=\pa{1-\eps_{1}}\cdot\pa{\hdps}\cdot m$, we can upper bound the expression in \eqref{eq:ProbAyFuncGuessinglessThanMinimalAverage} by
\beq{}
{
Pr\pa{\Gbin\le 2^{\pa{1-\eps_{1}}\cdot\pa{\hdps}\cdot m}}\le \sum_{i=0}^{2^{m_{1}}-1}P_{H}^{\pa{2^{m_{1}}}}\pa{b}\cdot\pa{1-P_{H}\pa{b}}^{i}=\frac{P_{H}^{\pa{2^{m_{1}}}}}{P_{H}\pa{b}}\cdot\pa{1-\pa{1-P_{H}\pa{b}}^{2^{m_{1}}}}.
}
Following along the same lines as the proof of Theorem \ref{th:LowerBoundExpAverageGuessworkAnyHashFunction} we know that the ratio $\lim_{m\to\infty}\frac{P_{H}^{\pa{2^{m_{1}}}}}{P_{H}\pa{b}}=1$, and because of the fact that $P_{H}\pa{b}\le 2^{-\pa{\hdps}\cdot m}$ for any $b\in B$ we get
\beq{}
{
\lim_{m\to\infty}1-\pa{1-P_{H}\pa{b}}^{2^{\pa{1-\eps_{1}}\cdot\pa{\hdps}\cdot m}}\le \lim_{m\to\infty}1-\pa{1-2^{-\pa{\hdps}\cdot m}}^{2^{\pa{1-\eps}\cdot\pa{\hdps}\cdot m}}=0
}
for any $b\in B$.

Note that the result above also applies to the case when
\beq{eq:ConcentrationwithBinAllocationfotEveryBin}
{
Pr\pa{\Gbin\le 2^{\pa{1-\eps_{1}}\cdot\pa{H\pa{q\pa{b}}+D\pa{q\pa{b}||p}}\cdot m}}\le \lim_{m\to\infty}1-\pa{1-P_{H}\pa{b}}^{2^{\pa{1-\eps_{1}}\cdot\pa{H\pa{q\pa{b}}+D\pa{q\pa{b}||p}}\cdot m}}=0
}

This result also applies to the case when averaging over the $P_{H}$-set of hash functions and considering any strategy of guessing passwords one by one. It results from the same arguments that are given in Theorem \ref{th:LowerBoundExpAverageGuessworkAnyHashFunction}; the backdoor mechanism can add no more than one mapping which is drawn uniformly over the set of passwords. Therefore, the same bounds and arguments hold for this case as well. This concludes the proof.
\end{proof}

\begin{proof}[The proof of Corollary \ref{cor:EffectofBiasedPassword}]
The proof of this corollary relies on the concentration property of the average guesswork of every bin presented in the proof of Theorem \ref{th:ConcentrationAnyHash}, along with the fact that the rate at which the average guesswork of a password increases is dominated by elements of type $\frac{\sqrt{\qp}}{\sqrt{\qp}+\sqrt{1-\qp}}$ \cite{MaloneGuessworkandEntropy}.

We consider the case where $n\ge\pa{1+\eps}\cdot m\cdot \pa{\log\pa{1/p}+H\pa{s}}$ as for this case the average guesswork of bin $b$ cannot increase due to mappings that are removed by other users (for more details see the proof of Theorem \ref{th:LowerBoundExpAverageGuessworkAnyHashFunction}).

We denote the average guesswork of bin $b\in B$ when averaged over the $P_{H}$-set of hash functions by $G^{\pa{H_{F}}}\pa{b}$. From equation \eqref{eq:LowerBoundAverageGuessworkPerBinFOrPHSET} along with the fact that  $n\ge\pa{1+\eps}\cdot m\cdot \pa{\log\pa{1/p}+H\pa{s}}$, we know that
\beq{}
{
\lim_{m\to\infty}\frac{1}{m}\log \pa{E\pa{G^{\pa{H_{F}}}\pa{b}}}=H\pa{q\pa{b}}+D\pa{q\pa{b}||p}.
}
Furthermore, let us denote the average guesswork of a password that is drawn i.i.d. Bernoulli$\pa{\qp}$ by $G^{\pa{ps}}\pa{\qp}$ whose rate is \cite{Arikan_Ineq_Guessing}
\beq{}
{
\lim_{n\to\infty}\frac{1}{n}E\pa{G^{\pa{ps}}\pa{\qp}}=H_{1/2}\pa{\qp}.
}

The average guesswork of bin $b\in B$ is upper bounded by
\beq{eq:BiassedPasswordTheMinimumUpperBoundonAverageGuesswork}
{
\pa{E\pa{\Gbin}}\le\min\pa{E\pa{G^{\pa{H_{F}}}\pa{b}},E\pa{G^{\pa{ps}}\pa{\qp}}}
}
because of the fact that the guesswork $\Gbin=\min\pa{G^{\pa{H_{F}}}\pa{b},G^{\pa{ps}}\pa{\qp}}$.

We begin be considering the case when $\lim_{m\to\infty}2\cdot \frac{n}{m}H\pa{\frac{\sqrt{\qp}}{\sqrt{\qp}+\sqrt{1-\qp}}}\le\pa{1-\eps_{2}}\cdot \pa{H\pa{q\pa{b}}+D\pa{q\pa{b}||p}}$ for any $0<\eps_{2}<1$. We wish to show that in this case the rate at which the average guesswork increases is equal to $\lim_{m\to\infty}\frac{n}{m}H_{1/2}\pa{\qp}$. First, note that
\beq{eq:BiasedPasswordsCondProbEqualsTheBiasedPassword}
{
Pr\pa{\Gbin=i|G^{\pa{H_{F}}}\pa{b}\ge 2^{\pa{1-\eps_{2}}\cdot \pa{H\pa{q\pa{b}}+D\pa{q\pa{b}||p}}\cdot m}}=Pr\pa{G^{\pa{ps}}\pa{\qp}=i}
}
when $i\in\pac{0,\dots, 2^{\pa{1-\eps_{2}}\cdot \pa{H\pa{q\pa{b}}+D\pa{q\pa{b}||p}}\cdot m}}$ because $G^{\pa{H_{F}}}\pa{b}$ and $G^{\pa{ps}}\pa{\qp}$ are independent as well as $\Gbin=\min\pa{G^{\pa{H_{F}}}\pa{b},G^{\pa{ps}}\pa{\qp}}$. In \cite{MaloneGuessworkandEntropy} it was shown based on large deviation arguments that the rate at which the average guesswork of a password that is drawn i.i.d. Bernoulli$\pa{\qp}$ increases is dominated by elements of type $\frac{\sqrt{\qp}}{\sqrt{\qp}+\sqrt{1-\qp}}$;  and when guessing passwords according to their probabilities in descending order the number of guesses that include all elements of this type increases like $2^{2\cdot n\cdot H\pa{\frac{\sqrt{\qp}}{\sqrt{\qp}+\sqrt{1-\qp}}}}$. Hence, based on this fact and \eqref{eq:BiasedPasswordsCondProbEqualsTheBiasedPassword} we get that the rate at which the conditional average guesswork increases is lower bounded by
\beq{}
{
\lim_{m\to\infty}\frac{1}{m}\log\pa{E\pa{G_{bin}\pa{b|G^{\pa{H_{F}}}\pa{b}\ge 2^{\pa{1-\eps_{2}}\cdot \pa{H\pa{q\pa{b}}+D\pa{q\pa{b}||p}}\cdot m}}}
} \ge \frac{n}{m}H_{1/2}\pa{\qp}}
when $\lim_{m\to\infty}2\cdot \frac{n}{m}H\pa{\frac{\sqrt{\qp}}{\sqrt{\qp}+\sqrt{1-\qp}}}\le\pa{1-\eps_{2}}\cdot \pa{H\pa{q\pa{b}}+D\pa{q\pa{b}||p}}$. In addition, we know from equation \eqref{eq:ConcentrationwithBinAllocationfotEveryBin} that
\beq{}
{
\lim_{m\to\infty}\frac{1}{m}\log\pa{Pr\pa{G^{\pa{H_{F}}}\pa{b}\ge 2^{\pa{1-\eps_{2}}\cdot \pa{H\pa{q\pa{b}}+D\pa{q\pa{b}||p}}\cdot m}}
} = 0
}
Therefore, we can lower bound the rate at which the average guesswork of bin $b$ increases by
\bal{eq:BiasPasswordsLowerBoundBiassedPassword}
{
\lim_{m\to\infty}\frac{1}{m}\log\pa{E\pa{\Gbin}}\ge\frac{1}{m}\log\pa{E\pa{G_{bin}\pa{b|G^{\pa{H_{F}}}\pa{b}\ge 2^{\pa{1-\eps_{2}}\cdot \pa{H\pa{q\pa{b}}+D\pa{q\pa{b}||p}}}}}}\times\nn\\
 Pr\pa{G^{\pa{H_{F}}}\pa{b}\ge 2^{\pa{1-\eps_{2}}\cdot \pa{H\pa{q\pa{b}}+D\pa{q\pa{b}||p}}}} \ge \frac{n}{m}H_{1/2}\pa{\qp}.
}
Hence, based on \eqref{eq:BiassedPasswordTheMinimumUpperBoundonAverageGuesswork} and \eqref{eq:BiasPasswordsLowerBoundBiassedPassword} we get
\bal{}
{
\lim_{m\to\infty}\frac{1}{m}\log\pa{E\pa{\Gbin}}= \frac{n}{m}H_{1/2}\pa{\qp}.
}

Next we wish to find the average guesswork when $H\pa{q\pa{b}}+D\pa{q\pa{b}||p}\le\pa{1-\eps_{2}}\cdot \frac{n}{m}H\pa{\qp}$
for any $0<\eps_{2}<1$. We wish to show that in this case the rate at which the average guesswork increases is equal to $H\pa{q\pa{b}}+D\pa{q\pa{b}||p}$. First, note that
\beq{eq:BiasedPasswordsCondProbEqualsTheHash}
{
Pr\pa{\Gbin=i|G^{\pa{ps}}\pa{\qp}> 2^{\pa{1-\eps_{2}}\cdot \pa{H\pa{\qp}}\cdot n}}=Pr\pa{G^{\pa{H_{F}}}\pa{b}=i}
}
when $i\in\pac{0,\dots, 2^{\pa{1-\eps_{2}}\cdot \pa{H\pa{\qp}}\cdot n}}$ because $G^{\pa{H_{F}}}\pa{b}$ and $G^{\pa{ps}}\pa{\qp}$ are independent as well as $\Gbin=\min\pa{G^{\pa{H_{F}}}\pa{b},G^{\pa{ps}}\pa{\qp}}$.  Based on equation \eqref{eq:ConcentrationwithBinAllocationfotEveryBin} it can be easily shown that the average guesswork of the $P_{H}$-set of hash functions is concentrated around its mean value.  Hence, based on this fact and \eqref{eq:BiasedPasswordsCondProbEqualsTheHash} we get that the rate at which the conditional average guesswork increases is lower bounded by
\beq{}
{
\lim_{m\to\infty}\frac{1}{m}\log\pa{E\pa{G_{bin}\pa{b|G^{\pa{ps}}\pa{\qp}\ge 2^{\pa{1-\eps_{2}}\cdot H\pa{\qp}\cdot n}}}
} \ge H\pa{q\pa{b}}+D\pa{q\pa{b}||p}}
when $H\pa{q\pa{b}}+D\pa{q\pa{b}||p}\le\pa{1-\eps_{2}}\cdot \frac{n}{m}H\pa{\qp}$. In addition
\beq{}
{
\lim_{n\to\infty}\frac{1}{n}\log\pa{Pr\pa{G^{\pa{ps}}\pa{\qp}\ge 2^{\pa{1-\eps_{2}}\cdot H\pa{\qp}\cdot n}}
} = 0.
}
Therefore, we can lower bound the rate at which the average guesswork of bin $b$ increases by
\bal{eq:BiasPasswordsLowerBoundHash}
{
\lim_{m\to\infty}\frac{1}{m}\log\pa{E\pa{\Gbin}}\ge\frac{1}{m}\log\pa{E\pa{G_{bin}\pa{b|G^{\pa{ps}}\pa{\qp}\ge 2^{\pa{1-\eps_{2}}\cdot H\pa{\qp}\cdot n}}}}\times\nn\\
 Pr\pa{G^{\pa{ps}}\pa{\qp}\ge 2^{\pa{1-\eps_{2}}\cdot H\pa{\qp}\cdot n}} \ge H\pa{q\pa{b}}+D\pa{q\pa{b}||p}.
}
Hence, based on \eqref{eq:BiassedPasswordTheMinimumUpperBoundonAverageGuesswork} and \eqref{eq:BiasPasswordsLowerBoundHash} we get
\bal{}
{
\lim_{m\to\infty}\frac{1}{m}\log\pa{E\pa{\Gbin}}= H\pa{q\pa{b}}+D\pa{q\pa{b}||p}.
}

Equations \eqref{eq:AverageAcrossUsersPasswordBiased1} and \eqref{eq:AverageAcrossUsersPasswordBiased2} follow in a straightforward fashion when either $2^{n\cdot H\pa{\frac{\sqrt{\qp}}{\sqrt{\qp}+\sqrt{1-\qp}}}}$  is smaller than the minimal average guesswork, that is, smaller than $2^{m\cdot\pa{\hdps}}$; or when $\frac{n}{m}\cdot H\pa{\qp}$ is larger than the maximal rate at which the average guesswork increases (i.e., larger than $\log\pa{1/p}$).
\end{proof}

We now prove Corollary \ref{cor:TheAverageGuessworkForGuessingAnyPassword}.
\begin{proof}[The proof of Corollary \ref{cor:TheAverageGuessworkForGuessingAnyPassword}]
In order to calculate the average guesswork for guessing a password that is mapped to any bin that is allocated to any user, we first need to bound the probability $P_{B}$ which is the probability of guessing such a password in the first guess (i.e., the fraction of mappings to any of those bins). Since $P_{B}$ is the probability of guessing a password that is mapped to any of the assigned bins, we get
\beq{}
{
P_{B}=\sum_{b\in B}\ph
}
where $B$ is the set of assigned bins as in Definition \ref{def:PasswordAllocationwithNoDetails}.
Therefore, based on Lemma \ref{th:methodoftypesprobth}, Lemma \ref{th:methodoftypessize}, the fact that there are $m$ types, and the fact that the number of users $M=2^{H\pa{s}\cdot m-1}<2^{H\pa{s}\cdot m}$ as well as the fact that $\max_{1/2\le s\le q\le 1}D\pa{q||p} = D\pa{s||p}$ when $p\le 1/2$; $P_{B}$ can be bounded by
\beq{}
{
\frac{1}{\pa{m+1}^{2}}\cdot 2^{-m\cdot D\pa{s||p}}\le P_{B}\le m\cdot 2^{-m\cdot D\pa{s||p}}.
}
Next, similarly to the proof of Theorem \ref{th:LowerBoundExpAverageGuessworkAnyHashFunction} equation \eqref{eq:UpperBoundOnTheProbabilityOfAnyHashFunction} we can further bound the probability of success after $k$ unsuccessful guesses by
\beq{}
{
P_{B}\le P_{B}^{\pa{k}}\le \frac{P_{B}\cdot 2^{n}+m\cdot 2^{m\cdot H\pa{s}}}{2^{n}-k}
}
where $P_{B}^{\pa{k}}$ is the probability of guessing a password that is mapped to any $b\in B$, after $k$ unsuccessful guesses; and $m\cdot 2^{m\cdot H\pa{s}}$ takes into consideration the fact that the backdoor mechanism from Definition \ref{def:BackdoorForAnyHash} may add a mapping to each bin. Similarly to \eqref{eq:LowerBoundByGeometricalDistAvrtageGusswork} we can lower bound the guesswork by
\beq{}
{
E\pa{\Gof}\ge P_{B}\cdot\sum_{i=0}^{2^{m_{0}}-1}\pa{i+1}\cdot\pa{1-P_{B}^{\pa{2^{m_{0}}}}}^{i}.
}
where $m_{0}=\pa{1+\eps_{1}}\cdot m\cdot \log\pa{1/p}$, $0<\eps_{1}<\eps$. Since
\beq{}
{
P_{B}^{\pa{2^{m_{0}}}}\le P_{B}\frac{1+\pa{m+1}^{2}\cdot m\cdot 2^{-\eps\cdot\log\pa{1/p}\cdot m}}{1-2^{-m\cdot\log\pa{1/p}\cdot\pa{\eps-\eps_{1}}}}.
}
Hence, for the exact same arguments as the ones in Theorem \ref{th:LowerBoundExpAverageGuessworkAnyHashFunction} we can state that
\beq{eq:LowerBoundAverageGuessworkFinalGUessingAPasswordofAnyUser}
{
\lim_{m\to\infty}\frac{1}{m}\log\pa{E\pa{G_{Any}\pa{b\in B}}}\ge \lim_{m\to\infty}\frac{1}{m}\log\pa{1/P_{B}}=D\pa{s||p}.
}

The equality is achieved based on similar arguments to the ones provided in the proof of Theorem \ref{th:LowerBoundExpAverageGuessworkAnyHashFunction}; the number of mappings that other users remove in the backdoor mechanism is negligible. Furthermore, these results also apply to the case when averaging over the $P_{H}$-set of hash functions and considering any strategy of guessing passwords one by one. It results from the same arguments that are given in Theorem \ref{th:LowerBoundExpAverageGuessworkAnyHashFunction}; each user in $B$ can add no more than one mapping, and the total number of mappings  does not affect the average guesswork, as shown in equation \eqref{eq:LowerBoundAverageGuessworkFinalGUessingAPasswordofAnyUser}.

\end{proof}

\section{Proofs of the Results of Sections \ref{sec:GeneralExpressions} and \ref{sec:AverageGuessworkPhHashFunctionsNoMod} }\label{sec:BoundsAverageGuessworkNoBinsAllocation}
Before deriving upper and lower bounds for the conditional average guesswork let us formally define these measures.

\begin{defn}[The maximum and minimum conditional average guesswork]\label{def:MaxMinAverageGiesswork}
Given that the set of occupied bins is of size $|B|$ and consists of \textbf{unique} elements (i.e., the bins in the set are different), the minimum and maximum conditional average guesswork, conditioned on the set of occupied bins, are defined as follows:
\bitm{
\item The maximum conditional average guesswork is the maximum average number of guesses required to break into the system under a remote or an insider attack, where the maximum is taken over all sets of occupied bins of size $|B|$.
\item The minimum conditional average guesswork is the minimum number of guesses required to break into the system under a remote or an insider attack, where the minimum is taken over all sets of occupied bins of size $|B|$.  
}
\end{defn}

\begin{lem}[The maximum and minimum conditional average guesswork]
The maximum and minimum conditional average guesswork under a remote (insider) attack when the size of the set of occupied bins is $A$ and it consists of unique elements can be written as follows:
\beq{eq:MaxMinAveageGuessworkOnlineOffline}{
\max_{\pac{B:|B|=A}}E\pa{G_{remote (insider)}\pa{B}},
}
and 
\beq{eq:MaxMinAveageGuessworkOnlineOffline1}{
\min_{\pac{B:|B|=A}}E\pa{G_{remote (insider)}\pa{B}},
}
where averaging is done according to Definition \ref{def:AveragingArgument}.
\end{lem}
\begin{proof}
The proof is straight forward based on Definition \ref{def:MaxMinAverageGiesswork}.
\end{proof}

\begin{rem}
In order to achieve the maximum/minimum average guesswork the users have to draw passwords that are mapped to the set of occupied bins that maximizes/minimizes equation \eqref{eq:MaxMinAveageGuessworkOnlineOffline}/\eqref{eq:MaxMinAveageGuessworkOnlineOffline1}. The probability for this event can be very low. 
\end{rem}

\begin{rem}[The set of occupied bins the maximizes/minimizes that conditional average guesswork]
When averaging over a keyed $P_{H}$-hash function, for every element in this set of hash functions, users draw passwords that are mapped to the set of occupied bins that maximizes/minimizes equation \eqref{eq:MaxMinAveageGuessworkOnlineOffline}/\eqref{eq:MaxMinAveageGuessworkOnlineOffline1}. 
\end{rem}

In order to upper bound the rate at which the average guesswork increases we consider the case where all users are mapped to the $2^{H\pa{s}\cdot m-1}$  most likely bins. This in turn enables us to give an upper bound to the average guesswork by considering the results for the case of bins allocation. The lower bound is derived by assuming that the users are mapped to the $2^{H\pa{s}\cdot m-1}$ most likely bins.

We begin by proving Theorem \ref{th:LowerUpperConcentrationAnyHash}.
\begin{proof}[proof of Theorem \ref{th:LowerUpperConcentrationAnyHash}]
This result follows directly from equation \eqref{eq:LowerBoundAverageGuessworkPerBinFOrPHSET} which  lower bounds the probability when bins are allocated, by using the probability when there is no bins allocation.
\end{proof}

Next, we prove Theorem \ref{cor:LowerUpperTheAverageGuessworkForGuessingAnyPassword}
\begin{proof}[proof of Theorem  \ref{cor:LowerUpperTheAverageGuessworkForGuessingAnyPassword}]
The upper bound follows directly from Corollary \ref{cor:TheAverageGuessworkForGuessingAnyPassword} as the average guesswork with bins allocation is equal to the result when considering the most likely bins.

The lower bound is based on the assumption that the users are mapped to the $2^{H\pa{s}\cdot m-1}$ most likely bins. This leads to the following optimization problem
\beq{}
{
\min_{0\le q\le 1-s}D\pa{q||p}
}
where $1/2\le s\le 1$. The minimal value is achieved at $q=p\le 1/2$, and is equal to $0$. Therefore, the lower bound is equal to
\beq{}
{
\barr{cc}{D\pa{1-s||p} &0\le 1-s\le p\\ 0 &p\le 1-s\le 1/2 }.
}
Finally, because \emph{there is no} backdoor procedure that modify the hash function, the results hold for $n\ge \pa{1+\eps}\cdot m\cdot\log{1/p}$. This concludes the proof.
\end{proof}

We now prove Corollary \ref{cor:NoBinAllocationMostLikelyAverageGuesswork}.
\begin{proof}[proof of Corollary  \ref{cor:NoBinAllocationMostLikelyAverageGuesswork}]
We begin by proving that when $0\le 1-s< q\le p$ the probability that all passwords are mapped to bins of type $q$ is $e^{-2^{m\cdot\pa{2\cdot H\pa{1-s}-H\pa{q}}}}\times 2^{-m\cdot D\pa{q||p}\cdot 2^{H\pa{1-s}\cdot m}}$. Since each password is drawn Bernoulli$\pa{1/2}$ and $P_{H}$ defined based on the method of types as presented in Definition \ref{def:DefinitionOfIidCase}, the probability that a password is mapped to a certain bin of type $q$ is $2^{-m\cdot\pa{H\pa{q}+D\pa{q||p}}}$. Since there are $2^{m\cdot H\pa{s}}$ users, the probability that all users are mapped to a certain set of bins in which all bins are of type $q$ is $2^{-m\cdot 2^{m\cdot H\pa{1-s}}\cdot\pa{H\pa{q}+D\pa{q||p}}}$. The number of combinations in which the users are mapped to different bins of type $q$ is $\frac{2^{m\cdot H\pa{q}}!}{\pa{2^{m\cdot H\pa{q}}-2^{m\cdot H\pa{1-s}}}!}$. According to Sterling's approximation, when $m\gg 1$
\beq{}
{
2^{m\cdot H\pa{q}}!\sim \frac{2^{m\cdot H\pa{q}\cdot 2^{m\cdot H\pa{q}}}}{e^{2^{m\cdot H\pa{q}}}}
}
whereas for $m\gg 1$
\beq{}
{
\pa{2^{m\cdot H\pa{q}}-2^{m\cdot H\pa{1-s}}}!\sim 2^{m\cdot H\pa{q}\cdot \pa{2^{m\cdot H\pa{q}}-2^{m\cdot H\pa{1-s}}}}\times e^{2^{m\cdot\pa{2\cdot H\pa{1-s}-H\pa{q}}}-2^{m\cdot H\pa{q}}}.
}
The ratio of the two equations above increases like
\beq{}
{
2^{m\cdot H\pa{q}\cdot 2^{m\cdot H\pa{1-s}}}\times e^{-2^{m\cdot\pa{2\cdot H\pa{1-s}-H\pa{q}}}}.
}
Therefore, the probability that all passwords are mapped to different bins of type $q$ is equal to
\beq{}
{
2^{-m\cdot 2^{m\cdot H\pa{1-s}}\cdot\pa{H\pa{q}+D\pa{q||p}}}\times 2^{m\cdot H\pa{q}\cdot 2^{m\cdot H\pa{1-s}}}\times e^{-2^{m\cdot\pa{2\cdot H\pa{1-s}-H\pa{q}}}}=e^{-2^{m\cdot\pa{2\cdot H\pa{1-s}-H\pa{q}}}}\times 2^{-m\cdot D\pa{q||p}\cdot 2^{H\pa{1-s}\cdot m}}.
}
Furthermore the smallest rate at which the probability decreases is achieved when $q=p$ and is equal to
\beq{}
{
e^{-2^{m\cdot\pa{2\cdot H\pa{1-s}-H\pa{p}}}}.
}

Based on arguments similar to the ones in Theorem \ref{cor:LowerUpperTheAverageGuessworkForGuessingAnyPassword} we get that when all users are mapped to bins of type $q$ the probability of finding a password that is mapped to any of the bins is $2^{-m\pa{H\pa{q}+D\pa{q||p}-H\pa{1-s}}}$ and therefore the average guesswork increases like $2^{m\pa{H\pa{q}+D\pa{q||p}-H\pa{1-s}}}$. In the case when $q=p$ we get that the average guesswork increase like $2^{m\pa{H\pa{p}-H\pa{1-s}}}$

Finally, when $q<1-s\le 1/2$  the most probable event is when all passwords are mapped to bins of type $p$ in which case the average guesswork does not increase exponentially, that is, the rate at which it increases is equal to zero.
\end{proof}

Next, we now prove Corollary \ref{cor:NoBinAllocationMostLikelyAverageGuessworkOnLine}.
\begin{proof}[proof of Corollary  \ref{cor:NoBinAllocationMostLikelyAverageGuessworkOnLine}]
The probability that the users are mapped to a particular type $q$ follows along the same lines as the proof of Corollary \ref{cor:NoBinAllocationMostLikelyAverageGuesswork}. When all users are mapped to bins of type $q$, the probability of guessing a password that is mapped to a bin of a user is the same across users and is equal to $2^{-m\cdot \pa{H\pa{q}+D\pa{q||p}}}$, and therefore the average guesswork is equal to $2^{m\cdot \pa{H\pa{q}+D\pa{q||p}}}$.
\end{proof}

Finally, we prove Theorem \ref{th:LowerUpperBoundExpAverageGuessworkAnyHashFunction}.
\begin{proof}[proof of Theorem \ref{th:LowerUpperBoundExpAverageGuessworkAnyHashFunction}]
The upper bound follows directly from Theorem \ref{th:LowerBoundExpAverageGuessworkAnyHashFunction} as the average guesswork with bins allocation is equal to the result of averaging across the most likely bins. Because \emph{there is no} backdoor procedure that modify the hash function, the results hold for $n\ge \pa{1+\eps}\cdot m\cdot\log{1/p}$.

The lower bound is based on the assumption that the users are mapped to the $2^{H\pa{s}\cdot m-1}$ most likely bins. By arguments similar to Theorem \ref{th:TheAverageGuesseork} we arrive at the following optimization problem
\beq{}
{
\max_{0\le q\le 1-s}2\cdot H\pa{q}+D\pa{q||p}
}
where $1/2\le s\le 1$. In lemma \ref{lem:OptimizationProblemForMaximalArgument} it is shown that $2\cdot H\pa{q}+D\pa{q||p}$ is a concave function whose maximal value is at $q=1-p\ge 1/2$. Hence, when $0\le q\le 1-s\le 1/2$ the maximal value is achieved for $q=1-s$ and is equal to $2\cdot H\pa{1-s}+D\pa{1-s||p}=2\cdot H\pa{s}+D\pa{1-s||p}$. This concludes the proof.
\end{proof}

\begin{proof}[Proof of Theorem \ref{th:TheAverageGuessworkPerBinGeneralExpressions}]
The proof is along the same lines as Theorem \ref{th:LowerBoundExpAverageGuessworkAnyHashFunction} equations \eqref{eq:UpperBoundOnTheProbabilityOfAnyHashFunction}-\eqref{eq:LowerBoundAverageGuessworkPerBinFOrPHSET}.
\end{proof}

\begin{proof}[Proof of Theorem \ref{th:AverageGuessworkPerBinIidAssumption}]
The proof is along the same lines as Theorem \ref{th:LowerBoundExpAverageGuessworkAnyHashFunction} equations \eqref{eq:UpperBoundOnTheProbabilityOfAnyHashFunction}-\eqref{eq:LowerBoundAverageGuessworkPerBinFOrPHSET}.
\end{proof}

\begin{proof}[Proof of Corollary \ref{Cor:AverageGuessworkOverAllPasswordsNoBInsAllocation}]
When $n \ge\pa{1+\eps}\cdot m\cdot\log\pa{1/p}$ the proof follows along the same line as Corollary \ref{Cor:NoBinsAllocationAveragedOverAllPasswords}
\end{proof}

\section{Summary}
In this work we define and derive the average guesswork for hash functions. We define the conditional average guesswork as a function of the set of occupied bins. We also find the most likely conditional average guesswork as well as the maximum and minimum conditional average guesswork. We reveal an interesting connection between the effect of bias and the number of users on the level of security, and quantify the average guesswork of hash functions by using basic information theoretic  measures.  

Furthermore, we show that when the hash function can be modified by a backdoor mechanism, bias can increase the average guesswork unboundedly.



\IEEEpeerreviewmaketitle

\bibliographystyle{IEEEtran}
\bibliography{IEEEabrv,Generic_YairRef}

\end{document}